\newcommand{\p}{\partial}
\newcommand{\e}{\varepsilon}
\newcommand{\const}{\mathop{\rm const}\nolimits}
\newcommand{\sign}{\mathop{\rm sgn}\nolimits}
\renewcommand{\d}{{\,\rm d}}
\newcommand{\lsemioplus}{\mathbin{\mbox{$\lefteqn{\hspace{.70ex}\rule{.4pt}{1.2ex}}{\in}$}}}
\newcommand{\setSkips}{\vspace*{-1mm plus 0mm minus 0.2mm}\parskip=0mm\parsep=0mm\itemsep=0.75mm\topsep=-0.1mm}
\newcommand{\todo}[1][\null]{\ensuremath{\clubsuit}}
\newcommand{\noprint}[1]{}
\newtheorem{theorem}{Theorem}
\newtheorem{lemma}[theorem]{Lemma}
\newtheorem{corollary}[theorem]{Corollary}
\newtheorem{proposition}[theorem]{Proposition}
{\theoremstyle{definition}
\newtheorem{definition}[theorem]{Definition}

\newtheorem{remark}[theorem]{Remark}
}
\begin{document}

\par\noindent {\LARGE\bf Extended symmetry analysis\\ of~generalized Burgers equations \par}

{\vspace{4mm}\par\noindent {\large Oleksandr A. Pocheketa$^\dag$ and Roman O. Popovych$^\ddag$} \par\vspace{2mm}\par}

\vspace{2mm}\par\noindent
$^\dag${\it Center for evaluation of activity of research institutions and scientific support of regional\\ 
$\phantom{^\dag}$development of Ukraine, NAS of Ukraine, 54 Volodymyrska Str., 01030 Kyiv, Ukraine}
%State Institution ``Center for evaluation of activity of research institutions and scientific support of regional development of Ukraine, NAS of Ukraine''
%SI ``CEARI NAS of Ukraine''

\vspace{2mm}\par\noindent 
$^\ddag${\it Wolfgang Pauli Institute, Oskar-Morgenstern-Platz 1, A-1090 Vienna, Austria}\\
$\phantom{^\dag}${\it Institute of Mathematics of NAS of Ukraine, 3~Tereshchenkivska Str., 01004 Kyiv, Ukraine}

\vspace{2mm}\noindent 
$\phantom{^\dag}$E-mail: $^\dag$pocheketa@nas.gov.ua, $^\ddag$rop@imath.kiev.ua

\vspace{7mm}\par\noindent\hspace*{8mm}\parbox{144mm}{\small
Using enhanced classification techniques,
we carry out the extended symmetry analysis of the class of generalized Burgers equations
of the form $u_t+uu_x+f(t,x)u_{xx}=0$.
This enhances all the previous results on symmetries of these equations
and includes the description of admissible transformations, Lie symmetries,
Lie and nonclassical reductions, hidden symmetries, conservation laws,
potential admissible transformations and potential symmetries.
The study is based on the fact that the class is normalized,
and its equivalence group is finite-dimensional.
}\par\vspace{2mm}

%{Keywords: generalized Burgers equations, nonclassical symmetry, reduction operator,
%reduction methods, equivalence transformation, exact solution}

\section{Introduction}

The methods of group analysis have been comprehensively developed
and applied to wide and complicated classes of differential equations.
Nevertheless, there are some simple famous model equations
and classes of differential equations that have not been properly investigated
from the symmetry point of view yet.

In this paper we exhaustively classify Lie symmetries and Lie reductions,
reduction operators, conservation laws and potential symmetries
of equations from the delightedly simple class
\begin{gather}\label{GBE}
u_t+uu_x+f(t,x)u_{xx}=0
\qquad
\text{with}
\qquad
f\ne0,
\end{gather}
which are called generalized Burgers equations.
This enhances and essentially extends known results on these equations.
We combine modern methods of group analysis of differential equations
with original techniques,
which include 
the algebraic method for group classification of normalized classes of differential equations~\cite{bihl2012b,popo2010a},
the special technique of classifying appropriate subalgebras~\cite{bihl2012b,kuru2016a},
the classification of Lie reductions for normalized classes of differential equations
up to equivalence transformations,
the selection of optimal ansatzes for reduction~\cite{fush1994a,fush1994b,popo1995c},
mappings between classes generated by families of point transformations~\cite{vane2009a}
and the classification of reduction operators up to admissible transformations~\cite{popo2007a}.

In~\eqref{GBE} and in what follows subscripts of functions denote derivatives with respect to the corresponding variables.
The consideration is within the local framework.
The equation of the form~\eqref{GBE} with a~fixed~$f$, 
which is assumed to be nonvanishing for all $(t,x)$'s from the related domain,
is denoted by~$\mathcal L_f$.

The classical Burgers equation~$\mathcal L_{-\mu}$ with positive constant~$\mu$
was suggested in the 1930s as a~model for one-dimensional turbulence~\cite{burg1948a}.
Among its most famous solutions there are so-called traveling waves.
In~\cite{ligh1956a} it is shown that, under proper interpretation, the same equation describes the propagation
of one-dimensional weak planar waves.
A~formal multiple-scales method was employed to extend this model to weak cylindrical and spherical
waves in~\cite{leib1974a}.
Thus, generalized Burgers equations are used to model a~wide variety of phenomena in physics,
chemistry, mathematical biology, etc.; see, e.g.,~\cite[Chapter~4]{whit1974a}.

As important but simple model, the Burgers equation was intensively studied
and was used as illustrative and toy example or standard test benchmark
in the course of developing various mathematical concepts and methods,
in particular, in the field of group analysis of differential equations;
see \cite{ames1972a,olve1993b} and references therein.
This tradition spread to generalized Burgers equations.
The study of admissible transformations for the class~\eqref{GBE}
in~\cite{king1991c} was a~pioneer work on such transformations in the literature.
It appeared that the equivalence groupoid constituted by admissible transformations
can be described in terms of normalization.
More specifically,
the class~\eqref{GBE} is normalized with respect to the usual equivalence group,
which is six-dimensional,
and this fact is of principal value for the entire consideration in this paper.

Equations from the class~\eqref{GBE} and similar classes were subjects of many papers.
In particular, Lie symmetries and similarity solutions of equations of the form~\eqref{GBE} with $f_x=0$
were considered in~\cite{doyl1990a,wafo2004d}.
It was also shown in~\cite{wafo2004d} that among these equations only
the classical Burgers equation, for which $f=\const$, admits
nontrivial potential symmetries
and regular reduction operators inequivalent to Lie symmetries.
More references and detail comments
on studies related to group analysis of generalized Burgers equations
are given in the corresponding sections of this paper.
Since an essential part of such results presented in the literature
are not exhaustive or completely correct,
it is in fact necessary to carry out extended symmetry analysis of equations from the class~\eqref{GBE}
from the very beginning. 

The structure of the paper is as follows.
Basic notions of group analysis of differential equations
are briefly reviewed in Section~\ref{Theorsection}.
The equivalence groupoid, the equivalence group and the equivalence algebra
of the class~\eqref{GBE} are computed in Section~\ref{GROUPandALGEBRAsection}.
In Section~\ref{LieSYMsection} we exhaustively classify Lie symmetries of equations from the class~\eqref{GBE}
using the algebraic method of group classification.
This method is especially effective for the class~\eqref{GBE}
due to this class is normalized and its equivalence group is finite-dimensional.
Section~\ref{SOLUTIONSsection} is devoted to Lie reductions, hidden symmetries and similarity solutions
of generalized Burgers equations that admit nonzero Lie invariance algebras.
Section~\ref{ROsection} deals with reduction operators and nonclassical reductions
of equations from the class~\eqref{GBE}.
Conservation laws and potential symmetries
of these equations as well as potential admissible transformations between them
are studied in Section~\ref{CLPSsection}.
Implications of paper's results are discussed in the last section.

\section{Theoretical background}\label{Theorsection}

Recall the concepts of a~class of differential equations, the equivalence groupoid
and the equivalence group of a~class, and others involved in the present study.
As in this paper we consider the class of equations of the simple form~\eqref{GBE},
we give definitions for the specific case
of a~class of single second-order partial differential equations
with the two independent variables~$(t,x)$ and the single dependent variable~$u$
that are merely parameterized by the single arbitrary element $f=f(t,x,u)$
without any derivatives of~$f$.
For the general definition of a class of systems of differential equations
involving a~tuple of arbitrary elements with their derivatives see, e.g.,
\cite{bihl2012b,popo2010a}.
We deal here with usual equivalence groups, unless another type of an equivalence group is stated.

Consider a~family of differential equations~$\mathcal L_f$:
$L^f[u]:=L(t,x,u,u_t,u_x,u_{tt},u_{tx},u_{xx},f)=0$
parameterized by a~parameter-function~$f=f(t,x,u)$
running through the set~$\mathcal S$
of solutions of an auxiliary system of differential equations and differential inequalities on~$f$,
where all the variables~$t$, $x$ and~$u$ are assumed to be independent.

\begin{definition}
The set $\{\mathcal L_f\mid f\in\mathcal S\}$ denoted by~$\mathcal L|_{\mathcal S}$ is called a~\emph{class of differential equations},
that is defined by the parameterized form of equations $L(t,x,u,u_t,u_x,u_{tt},u_{tx},u_{xx},f)=0$
and the set~$\mathcal S$ of values of the arbitrary element~$f$.
\end{definition}

For the class~\eqref{GBE}, the form of equations and the set of arbitrary elements are
\begin{gather*}
L^f[u]:=u_t+uu_x+fu_{xx}=0
\qquad
\text{and}
\qquad
\mathcal S=\big\{f=f(t,x,u)\mid f_u=0, f\ne0\big\},
\end{gather*}
respectively.
In general, the arbitrary elements may also depend on derivatives of dependent variables, but for the class~\eqref{GBE}
we have
$f_{u_t}=f_{u_x}=f_{u_{tt}}=f_{u_{tx}}=f_{u_{xx}}=0$.
These constraints will be used
implicitly, e.g.\ while solving determining equations.

A point transformation in the space of $(t,x,u)$
has the form
\begin{gather}
\label{TXU}
\varphi\colon
\quad
\tilde t=T(t,x,u),
\quad
\tilde x=X(t,x,u),
\quad
\tilde u=U(t,x,u),
\end{gather}
where $T$, $X$ and~$U$ are smooth functions of~$t$, $x$ and~$u$
with $|\partial(T,X,U)/\partial(t,x,u)|\ne0$.
Given two fixed equations $\mathcal L_f$ and~$\mathcal L_{\tilde f}$
from the class~$\mathcal L|_{\mathcal S}$ with arbitrary elements~$f$ and $\tilde f$,
by $T(f,\tilde f)$ we denote the set of point transformations in the space $(t,x,u)$
that map $\mathcal L_f$ to~$\mathcal L_{\tilde f}$.
An {\it admissible transformation}~\cite{popo2006b,popo2010a} in the class~$\mathcal L|_{\mathcal S}$
is a~triple consisting of two arbitrary elements \smash{$f,\tilde f\in \mathcal S$}
(or, in other words, the corresponding two equations, which
are called the initial one and the target one)
and a~point transformation $\varphi \in \mathrm T(f,\tilde f)$.

The notion of admissible transformation is a~formalization
of the earlier notions of form-preserving~\cite{king1991c,king1998a}, or allowed~\cite{gaze1992a} %wint1992a
transformations.

\begin{definition}
The \textit{equivalence groupoid}~$\mathcal G^\sim=\mathcal G^\sim(\mathcal L|_{\mathcal S})$
of the class~$\mathcal L|_{\mathcal S}$ is the set of admissible transformations of this class,
$\{(f,\varphi,\tilde f)\mid f,\tilde f\in\mathcal S,\, \varphi\in\mathrm T(f,\tilde f)\}$,
equipped with the operation ``$\circ$'' of composition of admissible transformations.
\end{definition}

The composition ``$\circ$'' of admissible transformations
$(f_1,\varphi_1,\tilde f_1)$ and $(f_2,\varphi_2,\tilde f_2)$
is defined only if $\tilde f_1=f_2$,
and its result is $(f_1,\varphi_2\varphi_1,\tilde f_2)$.
It is obvious that the axioms of groupoid hold for~$\mathcal G^\sim$:
\begin{enumerate}\setSkips
 \item
$((f_1,\varphi_1,f_2)\circ(f_2,\varphi_2,f_3))\circ(f_3,\varphi_3,f_4)
=(f_1,\varphi_1,f_2)\circ((f_2,\varphi_2,f_3)\circ(f_3,\varphi_3,f_4))$,
which means the associativity of the composition.
 \item
For each~$f$ the role of the neutral element is played by the triple $(f,{\rm id},f)$,
where ${\rm id}$ is the identical transformation, $\tilde t=t$, $\tilde x=x$, $\tilde u=u$.
 \item
Any admissible transformation $(f,\varphi,\tilde f)$ is invertible,
and the inverse is $(\tilde f,\varphi^{-1},f)$.
\end{enumerate}

\begin{definition}
The \emph{usual equivalence group} $G^\sim=G^\sim(\mathcal L|_{\mathcal S})$ of the class~$\mathcal L|_{\mathcal S}$
is the (pseudo)group of point transformations in the extended space of $(t,x,u,f)$,
\begin{gather*}
\mathcal T\colon
\quad
\tilde t=T(t,x,u),
\quad
\tilde x=X(t,x,u),
\quad
\tilde u=U(t,x,u),
\quad
\tilde f=\Phi(t,x,u,f),
\end{gather*}
that are projectable to the variable space $(t,x,u)$ and
map each equation from the class~$\mathcal L|_{\mathcal S}$ to an~equation from the same class.
\end{definition}

Each equivalence transformation~$\mathcal T\in G^\sim$ generates a~family
of admissible transformations
$\{(f,\mathcal T|_{(t,x,u)},\mathcal T f)\mid f\in S\}\subset\mathcal G^\sim$,
where $\mathcal T|_{(t,x,u)}$ denotes the restriction of~$\mathcal T$ to the space~$(t,x,u)$.

\begin{definition}
A class of differential equations~$\mathcal L|_{\mathcal S}$ is called \emph{normalized}
if its equivalence groupoid~$\mathcal G^\sim$ is generated by its equivalence group~$G^\sim$,
meaning that for each triple $(f,\varphi,\tilde f)$ from~$\mathcal G^\sim$
there exists a~transformation~$\mathcal T$ from~$G^\sim$
such that $\tilde f =\mathcal T f$ and $\varphi=\mathcal T|_{(t,x,u)}$.
\end{definition}

%In other words, the normalization of the class~$\mathcal L|_{\mathcal S}$ means
%that any admissible transformation of this class is induced by a~transformation from its equivalence group.

In the case of single dependent variable we might also consider contact transformations, see, e.g., \cite{camp1903a}.
They are of the form
$\tilde t=T(t,x,u,u_t,u_x)$,
$\tilde x=X(t,x,u,u_t,u_x)$,
$\tilde u=U(t,x,u,u_t,u_x)$,
$\tilde u_{\tilde t}=U^t(t,x,u,u_t,u_x)$,
$\tilde u_{\tilde x}=U^x(t,x,u,u_t,u_x)$,
where the transformation components satisfy the nondegenerate assumption
(i.e., the corresponding Jacobi matrix is nondegenerate)
and the contact condition (meaning the consistence with the contact structure),
and thus the components~$U^t$ and~$U^x$ are defined via the chain rule.
At the same time, any contact transformation between any two second-order evolution equations
that are linear with respect to the second derivative is induced by a~specific point transformation,
cf.~\cite[Proposition 2]{popo2008d} and~\cite[Section~2]{vane2014b}.
This is why all contact transformations between equations from the class~\eqref{GBE}
are exhausted by the prolongations of point admissible transformations,
 and thus the problem of describing contact transformations in the class~\eqref{GBE}
is reduced to that for point transformations.
%\looseness=-1

\section{Equivalence groupoid and equivalence group}\label{GROUPandALGEBRAsection}

It is possible to find not only the equivalence group and the equivalence algebra of the class~\eqref{GBE}
but also its entire equivalence groupoid.
In an implicit form, this groupoid was described in the pioneer paper~\cite{king1991c},
which is, actually,
the first study of the set of admissible transformations in a class of differential equations.
Earlier, in~\cite{cate1989a}, a~conformal transformation was found between equations
of the form~\eqref{GBE} with $f_x=0$.

\begin{theorem}\label{GBEgroupoid}
The class~\eqref{GBE} is normalized in the usual sense.
The usual equivalence group~$G^\sim$ of the class~\eqref{GBE} consists of the transformations
\begin{gather}
\tilde t=\frac{\alpha t+\beta}{\gamma t+\delta},
\qquad
\tilde x=\frac{\kappa x+\mu_1t+\mu_0}{\gamma t+\delta},
\qquad
\tilde{u}=\frac{\kappa(\gamma t+\delta)u-\kappa\gamma x+\mu_1\delta-\mu_0\gamma}
{\alpha\delta-\beta\gamma},
\label{GBE_G}
\\
\tilde f=\frac{\kappa^2}{\alpha\delta-\beta\gamma}f,
\label{GBE_G_f}
\end{gather}
where $\alpha$, $\beta$, $\gamma$, $\delta$, $\mu_0$, $\mu_1$ and $\kappa$ are arbitrary constants
that are defined up to a~nonzero multiplier,
$\alpha\delta-\beta\gamma\ne0$ and $\kappa\ne0$.
\end{theorem}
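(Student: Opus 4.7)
The strategy is to compute the equivalence groupoid~$\mathcal G^\sim$ directly and to observe that every admissible transformation is induced by a transformation of the ambient $(t,x,u,f)$-space; normalization and the form of~$G^\sim$ then follow simultaneously. I take an arbitrary admissible transformation $(f,\varphi,\tilde f)$ with $\varphi$ of the general form~\eqref{TXU}, express the transformed derivatives $\tilde u_{\tilde t}$, $\tilde u_{\tilde x}$, $\tilde u_{\tilde x\tilde x}$ via the chain rule as rational functions of the partials of $T,X,U$ and of $u_t,u_x,u_{tx},u_{xx},u_{xxx}$, substitute into $\tilde u_{\tilde t}+\tilde u\,\tilde u_{\tilde x}+\tilde f\,\tilde u_{\tilde x\tilde x}=0$, and eliminate $u_t$ and its $x$-derivatives using $u_t=-uu_x-fu_{xx}$ together with its differential consequences. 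After clearing denominators the condition becomes a polynomial identity in the jet coordinates $u_x,u_{xx},u_{tx},u_{xxx}$ that are not constrained by~$\mathcal L_f$, so every coefficient of this polynomial must vanish.

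The splittings are then carried out in order of decreasing derivative order. \emph{(i)}~The variables $u_{xxx}$ and $u_{tx}$ enter only linearly and only through $\tilde u_{\tilde x\tilde x}$, with coefficients proportional to $T_x$ and $T_u$; this forces $T_x=T_u=0$, i.e.\ $T=T(t)$ with $T_t\ne0$, the usual rigidity of the temporal variable for second-order evolution equations. \emph{(ii)}~With $T=T(t)$ the chain-rule formulas simplify sharply; the coefficient of $u_{xx}$ expresses $\tilde f$ as a function of $(t,x,u,f)$ whose $u$-dependence is carried by the $X_u$-terms, forcing $X_u=0$ (consistent with $\tilde f_u=0$), and the coefficient of $u_x^{\,2}$ then gives $U_{uu}=0$, so that $U=P(t,x)u+Q(t,x)$. \emph{(iii)}~Subsequent splittings with respect to $x$ yield the linearity of $X$ in~$x$, $X=A(t)x+B(t)$, and express $P$ as a prescribed multiple of $A$; what remains is an overdetermined system in $T,A,B,Q$ whose integrability reduces to a Schwarzian-type condition equivalent to $(T_{tt}/T_t)_t=\tfrac12(T_{tt}/T_t)^2$. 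Its general solution is the M\"obius map $T=(\alpha t+\beta)/(\gamma t+\delta)$ with $\alpha\delta-\beta\gamma\ne0$; integrating the residual equations gives $A=\kappa/(\gamma t+\delta)$, $B=(\mu_1 t+\mu_0)/(\gamma t+\delta)$ and the displayed linear-in-$x$ form of~$U$, and back-substitution produces~\eqref{GBE_G_f}. Since the transformation of $(t,x,u)$ so obtained is independent of~$f$ while $\tilde f$ depends on~$f$ alone, every admissible transformation is induced by a single (pseudo)group $G^\sim$ acting on $(t,x,u,f)$; the seven parameters $\alpha,\beta,\gamma,\delta,\mu_0,\mu_1,\kappa$ are determined up to a common nonzero multiplier, giving the asserted six-dimensional group.

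The main obstacle is the bookkeeping in step~\emph{(iii)}: one has to organize the splittings so that the separate dependences on~$u$, on derivatives of~$u$, and on the explicit $(t,x)$-variables are consistently decoupled, identify the Schwarzian/Riccati structure of~$T$ without losing the constants of integration that become $\kappa,\mu_0,\mu_1$, and verify at the end that the resulting transformations close as the group~\eqref{GBE_G}--\eqref{GBE_G_f}.
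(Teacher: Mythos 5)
Your proposal is correct and follows essentially the same route as the paper: the direct method of inserting a general point transformation into the target equation, eliminating $u_t$ in view of $\mathcal L_f$, splitting with respect to the unconstrained jet variables, solving the resulting determining equations, and reading off normalization from the fact that the $(t,x,u)$-components are independent of~$f$ while $\tilde f$ is a point function of $(t,x,u,f)$. The only notable difference is that the paper obtains the projectability $T=T(t)$, $X=X(t,x)$ by citing a known lemma for second-order quasilinear evolution equations, whereas you derive it directly from the splitting with respect to the higher-order derivatives, and you spell out the integration (vanishing Schwarzian of~$T$, hence the M\"obius form) that the paper leaves implicit.
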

\begin{proof}
We fix any two equations from the class~\eqref{GBE},
$\mathcal L_f$: $u_t+uu_x+f(t,x)u_{xx}=0$ and
\smash{$\mathcal L_{\tilde f}$}: $\tilde u_{\tilde t}+\tilde u\tilde u_{\tilde x}+\tilde f(\tilde t,\tilde x)\tilde u_{\tilde x\tilde x}=0$,
and find all point transformations, which are of the form~\eqref{TXU}, between these two equations.
For this purpose, we substitute all the tilded variables and derivatives with their expressions in terms of untilded values in $\mathcal L_{\tilde f}$,
including
\begin{gather*}
\tilde u_{\tilde t}=\frac1{{\rm D}_t T}\left({\rm D}_t U-\frac{{\rm D}_x U {\rm D}_t X}{{\rm D}_x X}\right),
\qquad
\tilde u_{\tilde x}=\frac{{\rm D}_x U}{{\rm D}_x X},
\qquad
\tilde u_{\tilde x \tilde x}={\rm D}_x\left(\frac{{\rm D}_x U}{{\rm D}_x X}\right),
\end{gather*}
where
${\rm D}_t=\p_t+u_t\p_u+u_{tt}\p_{u_t}+u_{tx}\p_{u_x}+\dots$ and
${\rm D}_x=\p_x+u_x\p_u+u_{tx}\p_{u_t}+u_{xx}\p_{u_x}+\dots$
are the operators of total derivatives with respect to~$t$ and~$x$.
The obtained equation should be satisfied by all solutions of~$\mathcal L_f$.
This is why the equality derived by substituting $u_t$ with $-uu_x-fu_{xx}$ in view of~$\mathcal L_f$
can be split with respect to~$u_x$ and~$u_{xx}$,
which gives the system of determining equations
for the transformation components~$T$, $X$ and~$U$.

The computation can be simplified by taking into account
the specific structure of generalized Burgers equations,
which are second-order quasilinear evolution equations,
i.e., they are linear with respect to the derivative~$u_{xx}$.%
\footnote{%
Throughout the paper, a~linear function means a~polynomial of degree one or zero.
}
In view of~\cite[Lemma 1]{ivan2010a},
each point transformation between the equations~$\mathcal L_f$ and~$\mathcal L_{\tilde f}$
is projectable both on the space of~$t$ and on the space of~$(t,x)$,
$
\tilde t=T(t),
$ $
\tilde x=X(t,x),
$ $
\tilde u=U(t,x,u),
$
where $T_tX_xU_u\ne0$.%
\footnote{%
We can also use results of~\cite{poch2013d} for the narrower superclass
$u_t+F(t,x,u)u_{xx}+H^1(t,x,u)u_x+H^0(t,x,u)=0$,
which gives more constraints for the transformation components, 
but the property of double projectability itself sufficiently simplifies the computation.
}
The determining equations can be reduced to
\begin{gather*}
U_{uu}=0,
\qquad
\text{and hence}
\qquad
U=U^1(t,x)u+U^0(t,x),
\\
\tilde f=\frac{X_x^2}{T_t}f,
\quad
U^1=\frac{X_x}{T_t},
\quad
U^0=\frac{X_t}{T_t},
\quad
X_{xx}=0,
\quad
U^1_t+U^0_x=0,
\quad
U^0_t=0.
\end{gather*}
The equivalence groupoid~$\mathcal G^\sim$ of the class~\eqref{GBE} is established after solving these determining equations.
The elements of $\mathcal G^\sim$ are defined by~\eqref{GBE_G}, where the initial and the resulting values of the arbitrary element
are connected by~\eqref{GBE_G_f}.
Each transformation of the form~\eqref{GBE_G} maps any equation from the class~\eqref{GBE} to an equation from the same class,
and its prolongation to the arbitrary element~$f$, which is given by~\eqref{GBE_G_f},
is a~point transformation in the joint space $(t,x,u,f)$.
Hence such prolongations of the transformations of the form~\eqref{GBE_G} according to~\eqref{GBE_G_f} constitute
the equivalence group~$G^\sim$ of the class~\eqref{GBE}.
Since any element of~$\mathcal G^\sim$ is induced by an equivalence transformation,
this class is normalized.
\end{proof}

The connected component of the identity transformation in~$G^\sim$ is singled out
by the inequalities $\alpha\delta-\beta\gamma>0$ and $\kappa>0$.
Up to composing with each other and with continuous equivalence transformations,
discrete equivalence transformations are exhausted
by alternating signs in the tuples $(t,u,f)$ and $(x,u)$.

The Lie algebra~$\mathfrak{g}^\sim=
\langle\tilde P^t,\tilde P^x,\tilde D^t,\tilde D^x,\tilde G,\tilde\Pi\rangle$
that corresponds to the equivalence group~$G^\sim$ is called the \emph{equivalence algebra} of the class~\eqref{GBE}.
Its basis elements may be chosen as
\begin{gather*}
\tilde P^t=\p_t,
\qquad
\tilde P^x=\p_x,
\qquad
\tilde D^t=t\p_t-u\p_u-f\p_f,
\qquad
\tilde D^x=x\p_x+u\p_u+2f\p_f,
\\
\tilde G=t\p_x+\p_u,
\qquad
\tilde\Pi=t^2\p_t+tx\p_x+(x-tu)\p_u.
\end{gather*}
The elementary one-parameter transformations from the group~$G^\sim$ that correspond to these basis elements are
\begin{gather}\label{elemG1}
\arraycolsep=10pt\renewcommand{\arraystretch}{1.4}
\begin{array}{rllll}
\hat P^t(\beta)\colon & \tilde t=t+\beta, & \tilde x=x, & \tilde u=u, & \tilde f=f,
\\
\hat P^x(\mu_0)\colon & \tilde t=t, & \tilde x=x+\mu_0, & \tilde u=u, & \tilde f=f,
\\
\hat D^t(\alpha)\colon & \tilde t=\alpha t, & \tilde x=x, & \tilde u=\dfrac 1\alpha u, & \tilde f=\dfrac 1\alpha f,
\\
\hat D^x(\kappa)\colon & \tilde t=t, & \tilde x=\kappa x, & \tilde u=\kappa u, & \tilde f=\kappa^2 f,
\\
\hat G(\mu_1)\colon & \tilde t=t, & \tilde x=x+\mu_1t, & \tilde u=u+\mu_1, & \tilde f=f,
\\
\hat \Pi(\gamma)\colon & \tilde t=\dfrac t{\gamma t+1}, & \tilde x=\dfrac x{\gamma t+1}, & \tilde u=(\gamma t+1)u-\gamma x, & \tilde f=f.
\end{array}
\end{gather}
These transformations are deduced from~\eqref{GBE_G}--\eqref{GBE_G_f} by setting
all the constants (except one, which is present in a~transformation) with the values
that correspond to the identity transformation,
\begin{gather*}
\alpha=1,
\quad
\beta=0,
\quad
\gamma=0,
\quad
\delta=1,
\quad
\kappa=1,
\quad
\mu_1=0,
\quad
\mu_0=0.
\end{gather*}

The equivalence algebra~$\mathfrak g^\sim$ can be found directly using the infinitesimal Lie method,
in a~similar way as finding Lie symmetries of single systems of differential equations~\cite{akha1989a,ovsy1982a}.
In fact, this is not needed since, knowing the complete equivalence group~$G^\sim$,
we can construct the algebra~$\mathfrak g^\sim$ as the set of infinitesimal generators of one-parameter subgroups
of~$G^\sim$, cf.~\cite{kuru2016a}.

The projection of~$\mathfrak g^\sim$ to the space $(t,x,u)$ is the algebra~$\mathfrak g=\langle
P^t,P^x,D^t,D^x,G,\Pi\rangle\simeq\mathfrak g^\sim$ with
\begin{gather*}
P^t=\p_t,
\qquad
P^x=\p_x,
\qquad
D^t=t\p_t-u\p_u,
\qquad
D^x=x\p_x+u\p_u,
\\
G=t\p_x+\p_u,
\qquad
\Pi=t^2\p_t+tx\p_x+(x-tu)\p_u.
\end{gather*}
Both the algebras~$\mathfrak g^\sim$ and~$\mathfrak g$ are realizations
of the so-called reduced (i.e., centerless) full Galilei algebra~\cite{fush1991g}
with space dimension one, 
which is isomorphic to the affine Lie algebra ${\rm aff}(2,\mathbb R)$~\cite{gaze1992a}.
The nonzero commutation relations of~$\mathfrak g$ are
\begin{gather*}
[P^t,D^t]=P^t,
\qquad
[D^t,\Pi]=\Pi,
\qquad
[P^t,\Pi]=2D^t+D^x,
\\
[P^x,D^x]=P^x,
\qquad
[P^x,\Pi]=G,
\qquad
[P^t,G]=P^x,
\qquad
[D^t,G]=G,
\qquad
[G,D^x]=G.
\end{gather*}

The Levi decomposition of the algebra~$\mathfrak g$ is $\mathfrak g=\langle P^t,D^t+\frac12D^x,\Pi\rangle\lsemioplus\langle D^x,P^x,G\rangle$.
Here the subalgebra $\mathfrak f=\langle P^t,D^t+\frac12D^x,\Pi\rangle$ is a~Levi factor of~$\mathfrak g$,
which is a~realization of the algebra ${\rm sl}(2,\mathbb R)$.
The radical~$\mathfrak r=\langle D^x,P^x,G\rangle$ of~$\mathfrak g$ is a~realization of the algebra $A_{3,3}$
from the Mubarakzyanov's list of low-dimensional real algebras~\cite{muba1963b}
(see also~\cite{popo2003a}), which is the almost abelian algebra associated with the $2\times2$ identity matrix.
More specifically, $\mathfrak r=\mathfrak c\lsemioplus\mathfrak n$,
where $\mathfrak n=\langle P^x,G\rangle$ is the nilradical (as well as the maximal abelian ideal)
of both~$\mathfrak r$ and~$\mathfrak g$,
and the span $\mathfrak c=\langle D^x\rangle$ is a~Cartan subalgebra of~$\mathfrak r$.
By~$\mathop{{\rm pr}_{\mathfrak f}}$ and~$\mathop{{\rm pr}_{\mathfrak c}}$ we denote the projectors defined by
the decomposition $\mathfrak g=\mathfrak f\lsemioplus(\mathfrak c\lsemioplus\mathfrak n)$.

As the class~\eqref{GBE} is normalized,
the algebra~$\mathfrak g$ contains the union of Lie invariance algebras 
of all equations~$\mathcal L_f$ from the class~\eqref{GBE}.
Moreover, the algebra~$\mathfrak g$ appears to coincide with this union, cf.\ Section~\ref{LieSYMsection},
which displays the strong normalization of the class~\eqref{GBE} in the infinitesimal sense.

\section{Lie symmetries}\label{LieSYMsection}

Classical symmetry analysis of the classical Burgers equation and its generalizations related to the class~\eqref{GBE}
has been carried out since the 1960s.
The maximal Lie invariance group of the Burgers equation
was computed by Katkov~\cite{katk1965a} in the course of group classification of equations of the general form $u_t+uu_x=(f(u)u_x)_x$.
The maximal Lie invariance algebra of the Burgers equation is five-dimensional and is spanned by the vector fields
$P^x$, $G$, $P^t$, $D^t+\frac12D^x$ and $\Pi$, cf.~the algebra~$\mathfrak g^5$ in Table~\ref{TableSubalgebras} below.

The group classification problem for the subclass of~\eqref{GBE} singled out by the constraint $f_x=0$
was considered in~\cite{doyl1990a,wafo2004d} without proper use of equivalence transformations.
As a~result, the classification lists presented there contain needless cases,
which was already remarked in~\cite{king1991c} concerning~\cite{doyl1990a};
see also footnote~4 in~\cite{poch2014a}.
The same remark is true for the group classification of equations
of the general form $u_t+g(t,x)uu_x+f(t,x)u_{xx}=0$ with $fg\ne0$
in~\cite{qu1995b}.
Moreover, classification cases were unnecessarily split into several subcases
depending on the structure of symmetry algebras therein,
and, as far as it can be analyzed, some classification cases were missed, in particular,
due to over-gauging parameters, which is not allowed.

\noprint{\footnote{%
Compare the results of~\cite{qu1995b} with our results,
cf.\ Table~\ref{TableSubalgebras}.
For one-dimensional subalgebras~\cite[Table 3]{qu1995b}
it is unnecessary to separate cases 5 and 6.
In the list of two-dimensional subalgebras~\cite[Table 2]{qu1995b}
case 6 should correspond the subalgebra $\mathfrak g^{2.5}$ in Table~\ref{TableSubalgebras},
and case 7 seems to be related to $\mathfrak g^{2.6}_a$. %for certain values of $a$.
Moreover, constant multipliers are necessary in cases 3 and 4.
For three-dimensional subalgebras~\cite[Table 1]{qu1995b}
it is unnecessary to separate cases 4 and 5,
as well as cases 7, 8 and 9.
More detailed analysis looks rather complicate due to
and mistakes in numeration.
}}

\subsection{Determining equations for Lie symmetries}\label{CSdeteqSECTION}

A vector field that generates a~one-parameter Lie symmetry group
of an equation~$\mathcal L_f$: $L^f[u]=0$ from the class~\eqref{GBE}
is of the form $Q=\tau(t,x,u)\p_t+\xi(t,x,u)\p_x+\eta(t,x,u)\p_u$ 
and satisfies the infinitesimal invariance criterion
\begin{gather}\label{CSICI}
Q_{(2)}L^f[u]\,\big|_{\mathcal L_f}
\equiv\big(\eta^t+\eta u_x+\eta^xu+\xi f_x u_{xx}+\tau f_t u_{xx}+\eta^{xx}f\big)\big|_{\mathcal L_f}=0,
\end{gather}
where $Q_{(2)}$ is the usual second-order prolongation of~$Q$~\cite{olve1993b,ovsy1982a}, and
$\eta^t$, $\eta^x$, $\eta^{xx}$ are prolongation components, which are computed by
$\eta^t={\rm D}_t\eta-u_t{\rm D}_t\tau-u_x{\rm D}_t\xi$,
$\eta^x={\rm D}_x\eta-u_t{\rm D}_x\tau-u_x{\rm D}_x\xi$ and
$\eta^{xx}={\rm D}_x^2\eta-u_t{\rm D}_x^2\tau-2u_{tx}{\rm D}_x\tau-u_x{\rm D}_x^2\xi-2u_{xx}{\rm D}_x\xi$.
Using the restriction $L^f[u]=0$, we substitute $u_t=-uu_x-fu_{xx}$ for~$u_t$
and then split the result with respect to $u_{tx}$, $u_{xx}$, $u_{x}$ and~$u$.
After simplifying we obtain a~system of determining equations on the components~$\tau$, $\xi$ and~$\eta$,
\begin{gather}
\label{CSdeteq1}
\tau_x=0,
\quad
\tau_u=0,
\quad
\xi_u=0,
\quad
\eta_{uu}=0,
\quad
\eta^1_x=0,
\\
\label{CSdeteq2}
\eta^0-\xi_t=0,
\quad
\eta^1+\tau_t-\xi_x=0,
\quad
\eta^1_t=-\eta^0_x,
\quad
\eta^0_t=0,
\\
\label{CSCC}
\tau f_t+\xi f_x+(\tau_t-2\xi_x)f=0.
\end{gather}
Equations~\eqref{CSdeteq1} imply $\tau=\tau(t)$, $\xi=\xi(t,x)$ and $\eta=\eta^1(t)u+\eta^0(t,x)$.
Then making use of equations~\eqref{CSdeteq2} we specify the form of the components of~$Q$,
\begin{gather}
\label{CStauxieta}
\tau=c_2t^2+c_1t+c_0,
\quad
\xi=(c_2t+c_3)x+c_4t+c_5,
\quad
\eta=(-c_2t+c_3-c_1)u+c_2x+c_4,
\end{gather}
where $c_0$, \dots, $c_5$ are arbitrary constants.
In view of results of Section~\ref{GROUPandALGEBRAsection},
we could postulate the form~\eqref{CStauxieta} for the components of Lie symmetry vector fields
of equations from the class~\eqref{GBE} from the very beginning.
Indeed, the normalization of the class~\eqref{GBE} proved in Theorem~\ref{GBEgroupoid} implies
that the maximal Lie invariance algebra~$\mathfrak g_f$ of any equation~$\mathcal L_f$ from the class~\eqref{GBE}
is contained by the algebra~$\mathfrak g$,
and the components of any vector field from~$\mathfrak g$ are of the form~\eqref{CStauxieta}.
Moreover, for any constant tuple $(c_0,\dots,c_5)$ the equation~\eqref{CSCC} has a~nonzero solution for~$f$.
This means that each element of the algebra~$\mathfrak g$ is a~Lie symmetry of an equation from the class~\eqref{GBE},
i.e.\ $\mathfrak g=\bigcup_{f\in\mathcal S}\mathfrak g_f$.

The equation~\eqref{CSCC} is the only \emph{classifying condition} for Lie symmetries of equations from the class~\eqref{GBE}.
Depending on values of the arbitrary element~$f$,
the classifying condition imposes  additional constraints for the constants $c_0$,~\dots, $c_5$.
Varying~$f$ and splitting~\eqref{CSCC} with respect to~$f$ and its derivatives, we get $c_0=\dots=c_5=0$.
Therefore, the \emph{kernel invariance algebra} of the class~\eqref{GBE},
i.e.\ the intersection of the maximal Lie invariance algebras of equations from this class, is $\{0\}$.

\subsection{Appropriate subalgebras}\label{SubalgebrasSection}

As the class~\eqref{GBE} is normalized (see Theorem~\ref{GBEgroupoid})
and its equivalence algebra~$\mathfrak g^\sim$ is finite-dimensional,
it is convenient to carry out its group classification using the algebraic method. 
Although one could solve the group classification problem for 
the class~\eqref{GBE} using the direct method, 
the algebraic method is much more effective on both the steps 
of computing and arranging classification cases, 
in particular, checking their inequivalence; cf.\ \cite[p.~3]{vane2009a}.
Recall that the normalization of the class~\eqref{GBE} has two consequences:
\begin{itemize}%\setSkips
 \item The maximal Lie invariance algebra of every
equation from this class is contained in the projection~$\mathfrak g$ of the equivalence algebra~$\mathfrak g^\sim$
to the space $(t,x,u)$.
 \item Equations~$\mathcal L_f$ and~$\mathcal L_{\tilde f}$ from the class~\eqref{GBE} are similar with
respect to point transformations if and only if they are $G^\sim$-equivalent. 
\end{itemize}
Therefore, in order to obtain the exhaustive group classification of the class~\eqref{GBE}, it suffices to construct a~list of
inequivalent appropriate subalgebras of~$\mathfrak g$
and then to find the corresponding values of the arbitrary element~$f$ for each subalgebra from the list.
We call a~subalgebra~$\mathfrak s\subset\mathfrak g$ \emph{appropriate} if~$\mathfrak s$ is the maximal Lie
invariance algebra of an equation~$\mathcal L_f$ from the class~\eqref{GBE}.
In other words, a~subalgebra~$\mathfrak s$ of~$\mathfrak g$ is appropriate if
there exists a~value~$f^0$ of the arbitrary element~$f$ such that the following conditions hold:
\begin{enumerate}%\setSkips
\item
The components of every $Q\in\mathfrak s$ satisfy the classifying condition~\eqref{CSCC} with $f=f^0$
or, equivalently, $f^0$ is an invariant of a~subalgebra $\tilde{\mathfrak s}\subset\mathfrak g^\sim$
whose projection to the space of $(t,x,u)$ coincides with~$\mathfrak s$.
\item
The algebra~$\mathfrak s$ is maximal among the Lie invariance algebras of the equation~$\mathcal L_{f^0}$.
\end{enumerate}

We classify appropriate subalgebras of the algebra~$\mathfrak g$
up to the equivalence relation generated by the
adjoint action of the group~$G^\sim$ on~$\mathfrak g$.%
\footnote{\looseness=-1%
The subalgebras of an algebra isomorphic to~${\rm aff}(2,\mathbb R)$ 
were first classified in~\cite{gaze1992a}
with respect to the group of internal automorphisms.
Parameterized families of inequivalent subalgebras
were additionally partitioned depending on their algebraic structure,
which is unnecessary for the group classification of the class~\eqref{GBE}.
Moreover, the list of subalgebras obtained therein is large.
It consists of 44 families of subalgebras,
and most of them are not appropriate as maximal Lie invariance algebras
of equations from the class~\eqref{GBE}.
Since the proof of the classification was not presented,
it is impossible to check its correctness at a~glance.
This is why we classify appropriate subalgebras of~$\mathfrak g$ independently, 
without using the results of~\cite{gaze1992a},
which is much easier than the classification of all subalgebras of~$\mathfrak g$.
The list of inequivalent (nonzero) appropriate subalgebras presented in Table~\ref{TableSubalgebras} below
includes only 19 families of subalgebras.%
}
See, e.g., \cite[Chapter 3.3]{olve1993b}, \cite[Section~14.7]{ovsy1982a}
or \cite{bihl2012b,bihl2009b} for relevant elementary techniques
and \cite{pate1975a} for more sophisticated methods.
The radical~$\mathfrak r$ and the nilradical~$\mathfrak n$
are megaideals (i.e., fully characteristic ideals) of~$\mathfrak g$
and hence they are $G^\sim$-invariant.
To characterize classification cases, with any subalgebra~$\mathfrak s$ of~$\mathfrak g$
we associate the $G^\sim$-invariant values
$\dim\mathfrak s\cap\mathfrak r$, $\dim\mathfrak s\cap\mathfrak n$,
$\dim\mathop{{\rm pr}_{\mathfrak f}}\mathfrak s$ and $\dim\mathop{{\rm pr}_{\mathfrak c}}\mathfrak s$.
The adjoint actions of the elementary equivalence transformations~\eqref{elemG1}
on the basis vector fields of~$\mathfrak g$ are as follows:

\vspace*{-2.4mm}

{\footnotesize
\begin{gather*}\renewcommand{\arraystretch}{1.6}%\arraycolsep=1.5ex
\begin{array}{r|cccccc}
\mathrm{Ad}     &P^t                       &P^x           &D^t            &D^x            &G            &\Pi                                   \\\hline
\hat P^t(\beta) &P^t                         &P^x           &D^t{-}\beta P^t&D^x            &G{-}\beta P^x&\Pi{-}\beta(2D^t{+}D^x){+}\beta^2P^t\\
\hat P^x(\mu_0) &P^t                           &P^x           &D^t            &D^x{-}\mu_0 P^x&G            &\Pi{-}\mu_0 G                     \\
\hat D^t(\alpha)&\alpha P^t                      &P^x           &D^t            &D^x            &\alpha^{-1}G &\alpha^{-1}\Pi                  \\
\hat D^x(\kappa)&P^t                               &\kappa P^x    &D^t            &D^x            &\kappa G     &\Pi                           \\
\hat G(\mu_1)   &P^t{+}\mu_1 P^x                     &P^x           &D^t{+}\mu_1 G  &D^x{-}\mu_1 G  &G            &\Pi                         \\
\hat \Pi(\gamma)&P^t{-}\gamma(2D^t{+}D^x){+}\gamma^2\Pi&P^x{-}\gamma G&D^t{-}\gamma\Pi&D^x            &G            &\Pi                       \\
\end{array}
\end{gather*}
}

To efficiently recognize inequivalent appropriate subalgebras of~$\mathfrak g$,
we consider their projections on the Levi factor~$\mathfrak f$.
These projections are necessarily subalgebras of~$\mathfrak f$,
and, moreover, the projections of equivalent subalgebras of~$\mathfrak g$ are equivalent as subalgebras of~$\mathfrak f$.
A~complete list of inequivalent subalgebras of the algebra ${\rm sl}(2,\mathbb R)$ is well known.
In terms of the realization~$\mathfrak f$,
it is exhausted by $\{0\}$, $\langle P^t\rangle$, $\langle D^t+\frac12D^x\rangle$, $\langle P^t+\Pi\rangle$,
$\langle P^t, D^t+\frac12D^x \rangle$ and~$\mathfrak f$ itself.
Considering each of the listed subalgebras of~$\mathfrak f$
as a~projection of an appropriate subalgebra,
we try to add elements of the radical~$\mathfrak r$ to the basis elements of this subalgebra,
and to additionally extend the basis by elements from the radical~$\mathfrak r$.

Some properties of appropriate subalgebras of~$\mathfrak g$ directly follow
from the classifying condition~\eqref{CSCC}.
Below by~$\mathfrak s$ we denote an appropriate subalgebra of~$\mathfrak g$.

\begin{lemma}\label{LemmaOnPxG}
If $\mathfrak s\cap\mathfrak n\ne\{0\}$, then $\mathfrak s\cap\mathfrak r=\mathfrak n$.
\end{lemma}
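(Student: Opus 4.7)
\medskip

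The plan is to exploit the fact that the hypothesis $\mathfrak s\cap\mathfrak n\ne\{0\}$ forces $f^0$ to be independent of $x$, and then to read off both inclusions ``$\supseteq$'' and ``$\subseteq$'' from the classifying condition~\eqref{CSCC}.

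\textbf{Step 1 (the hypothesis forces $f^0_x=0$).} A generic nonzero element of $\mathfrak n$ has the form $v=aP^x+bG$, with $(a,b)\ne(0,0)$, so $\tau=0$, $\xi=a+bt$, $\tau_t-2\xi_x=0$. Substituting into~\eqref{CSCC} with $f=f^0$ gives $(a+bt)f^0_x=0$. Since $a+bt$ does not vanish identically in $t$, this yields $f^0_x=0$, i.e.\ $f^0=f^0(t)$ on the relevant domain.

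\textbf{Step 2 ($\mathfrak n\subseteq\mathfrak s$).} With $f^0=f^0(t)$, insert the general components~\eqref{CStauxieta} into~\eqref{CSCC}; the $\xi f_x$ term drops out and one obtains
\begin{gather*}
(c_2t^2+c_1t+c_0)f^0_t+(c_1-2c_3)f^0=0,
\end{gather*}
which does not involve $c_4$ or $c_5$. Hence every vector field of the form~\eqref{CStauxieta} with $c_0=c_1=c_2=c_3=0$, i.e.\ every element of $\mathfrak n=\langle P^x,G\rangle$, belongs to the maximal Lie invariance algebra of $\mathcal L_{f^0}$. By appropriateness of $\mathfrak s$, this gives $\mathfrak n\subseteq\mathfrak s$, and in particular $\mathfrak n\subseteq\mathfrak s\cap\mathfrak r$.

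\textbf{Step 3 ($\mathfrak s\cap\mathfrak r\subseteq\mathfrak n$).} Any element of $\mathfrak r$ has the form $w=aD^x+bP^x+cG$ with coefficients $\tau=0$, $\xi=ax+b+ct$, and $\tau_t-2\xi_x=-2a$. If $w\in\mathfrak s$, then $w$ satisfies~\eqref{CSCC} with $f=f^0(t)$, giving $-2af^0=0$; since $f^0\ne0$, this forces $a=0$, so $w\in\mathfrak n$.

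Combining Steps~2 and~3 gives $\mathfrak s\cap\mathfrak r=\mathfrak n$. I do not anticipate any serious obstacle: everything is a direct consequence of the linear form~\eqref{CStauxieta} of admissible vector fields (itself a byproduct of normalization, Theorem~\ref{GBEgroupoid}) and of the single classifying condition~\eqref{CSCC}. The only point that deserves a careful phrasing is the appeal to maximality of $\mathfrak s$ in Step~2, which is exactly the definition of an appropriate subalgebra recalled just before the lemma.
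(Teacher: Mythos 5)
Your proof is correct and follows essentially the same route as the paper's: substitute a nonzero element of $\mathfrak n$ into the classifying condition~\eqref{CSCC} to deduce $f_x=0$, then observe that $(\tau,\xi)=(0,1)$ and $(0,t)$ satisfy~\eqref{CSCC} for such $f$ (so $\mathfrak n\subseteq\mathfrak s$ by maximality) while any element of $\mathfrak r$ with a nonzero $D^x$-component forces $-2af=0$, contradicting $f\ne0$. Your Step~3 spells out the exclusion of the $D^x$-component slightly more explicitly than the paper, but the argument is the same.
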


\begin{proof}
The condition $\mathfrak s\cap\mathfrak n\ne\{0\}$ implies
that the algebra~$\mathfrak s$ contains a~vector field $aG+bP^x$ with constants $(a,b)\ne(0,0)$.
Substituting the values
$\tau=0$ and $\xi=at+b$ corresponding to $aG+bP^x$ into~\eqref{CSCC}, we get $f_x=0$.
For such~$f$, both the pairs $(\tau,\xi)=(0,1)$ and $(\tau,\xi)=(0,t)$ solve~\eqref{CSCC},
and the pair $(\tau,\xi)=(0,x)$ is not a~solution since $f\ne0$.
Therefore, the algebra~$\mathfrak s$ contains both~$P^x$ and~$G$ and does not contain $D^x$.
\end{proof}

\begin{corollary}
$\mathfrak s\cap\mathfrak r\in\big\{\{0\},\,\mathfrak c,\,\mathfrak n\big\}$.
\end{corollary}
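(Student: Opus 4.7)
The plan is to combine the preceding Lemma~\ref{LemmaOnPxG} with a case analysis based on whether $\mathfrak{s}\cap\mathfrak{n}$ is trivial, and to then use the adjoint action of a couple of elementary equivalence transformations from the table to normalize the nontrivial possibility.

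First I would dispose of the case $\mathfrak{s}\cap\mathfrak{n}\ne\{0\}$ immediately by invoking Lemma~\ref{LemmaOnPxG}, which yields $\mathfrak{s}\cap\mathfrak{r}=\mathfrak{n}$. So the substance lies in the complementary case $\mathfrak{s}\cap\mathfrak{n}=\{0\}$. Here I would use the vector-space decomposition $\mathfrak{r}=\mathfrak{c}\oplus\mathfrak{n}$ to observe that the projection $\mathfrak{s}\cap\mathfrak{r}\to\mathfrak{c}$ along~$\mathfrak{n}$ is injective, so $\dim(\mathfrak{s}\cap\mathfrak{r})\le\dim\mathfrak{c}=1$. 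If the dimension is zero, we get $\mathfrak{s}\cap\mathfrak{r}=\{0\}$; if it is one, then $\mathfrak{s}\cap\mathfrak{r}$ is spanned by some element of the form $D^x+aP^x+bG$ with constants $a,b$.

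To finish, I would reduce this spanning element to~$D^x$ using the adjoint-action table displayed just above the lemma. Specifically, $\mathrm{Ad}(\hat P^x(\mu_0))D^x=D^x-\mu_0P^x$ while~$P^x$ and~$G$ are preserved, so choosing $\mu_0=a$ eliminates the $P^x$-component; then $\mathrm{Ad}(\hat G(\mu_1))D^x=D^x-\mu_1 G$ with $\mu_1=b$ eliminates the $G$-component. Since both transformations leave~$\mathfrak{n}$ pointwise fixed, they do not destroy the assumption $\mathfrak{s}\cap\mathfrak{n}=\{0\}$, and they send $\mathfrak{s}\cap\mathfrak{r}$ to $\langle D^x\rangle=\mathfrak{c}$.

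There is no real obstacle here; the only point that requires mild care is to confirm that the two equivalence transformations used in the reduction act on~$\mathfrak{s}\cap\mathfrak{r}$ without introducing elements outside~$\mathfrak{r}$ or altering the algebraic type of the intersection, which is evident from the relevant row and column entries of the adjoint table. Hence, up to $G^\sim$-equivalence, the three listed possibilities exhaust the intersections $\mathfrak{s}\cap\mathfrak{r}$ for appropriate subalgebras~$\mathfrak{s}$.
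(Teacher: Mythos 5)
Your argument is correct and follows the route the paper intends: the case $\mathfrak s\cap\mathfrak n\ne\{0\}$ is settled by Lemma~\ref{LemmaOnPxG}, and otherwise $\mathfrak s\cap\mathfrak r$ injects into the one-dimensional $\mathfrak c$, with the adjoint actions of $\hat P^x$ and $\hat G$ conjugating a nonzero intersection to $\langle D^x\rangle$. Your explicit normalization step also makes clear that the $\mathfrak c$ alternative is to be read modulo $G^\sim$-equivalence (as the paper's whole classification is), since literally one only gets a one-dimensional complement to $\mathfrak n$ in $\mathfrak r$.
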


\begin{lemma}\label{LemmaOnDx}
If $\mathfrak s\cap\mathfrak r=\mathfrak c$, then
$\mathfrak s\subset\mathfrak f\oplus\mathfrak c$ and $\dim\mathfrak s\leqslant2$.
\end{lemma}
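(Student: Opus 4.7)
The plan is to obtain the two conclusions separately, with the first reducing to a single bracket computation and the second reducing to a specialization of the classifying condition~\eqref{CSCC}. For the inclusion $\mathfrak s\subset\mathfrak f\oplus\mathfrak c$, I would fix an arbitrary $Q\in\mathfrak s$ and decompose it via the vector-space sum $\mathfrak g=\mathfrak f\oplus\mathfrak c\oplus\mathfrak n$ as $Q=Q_{\mathfrak f}+cD^x+Q_{\mathfrak n}$ with $Q_{\mathfrak n}=aG+bP^x$, then compute $[Q,D^x]$. The commutation relations listed in Section~\ref{GROUPandALGEBRAsection} show that every basis element of $\mathfrak f$ commutes with~$D^x$, while $[P^x,D^x]=P^x$ and $[G,D^x]=G$. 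Hence $[Q,D^x]=Q_{\mathfrak n}$, which belongs to~$\mathfrak s$ because $D^x\in\mathfrak s$. Since $Q_{\mathfrak n}\in\mathfrak n\cap\mathfrak s\subset\mathfrak r\cap\mathfrak s=\mathfrak c$ and $\mathfrak c\cap\mathfrak n=\{0\}$, we must have $Q_{\mathfrak n}=0$.

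For the dimension bound I would first specialize~\eqref{CSCC} to $D^x$ itself ($\tau=0$, $\xi=x$), which yields $xf_x-2f=0$ and hence $f=\phi(t)x^2$ for a smooth $\phi\ne0$. A general element $Q\in\mathfrak f\oplus\mathfrak c$ has the components
\begin{gather*}
\tau=c_0+c_1t+c_2t^2,\qquad \xi=\bigl(\tfrac12c_1+c_2t+c_3\bigr)x,
\end{gather*}
with the corresponding~$\eta$ deducible from~\eqref{CStauxieta}. Substituting these together with $f=\phi(t)x^2$ into~\eqref{CSCC}, the $c_3$-dependent terms cancel against the contribution of~$\tau_t-2\xi_x$, and the whole equation telescopes to the single scalar condition $(\tau\phi)'=0$, that is, $\tau\phi$ must be independent of~$t$.

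The last step converts this into a dimension estimate. Consider the projection $\pi\colon\mathfrak f\oplus\mathfrak c\to\mathfrak f$; its restriction to~$\mathfrak s$ has kernel $\mathfrak s\cap\mathfrak c=\mathfrak c$, so $\dim\mathfrak s=1+\dim\pi(\mathfrak s)$. The image $\pi(\mathfrak s)$ consists of polynomials $\tau(t)$ of degree at most~$2$ with $\tau\phi\equiv\const$. If two such polynomials yield nonzero constants they must be proportional (since $\phi\ne0$), and if the constant is zero the polynomial vanishes identically; hence $\dim\pi(\mathfrak s)\leqslant1$ and $\dim\mathfrak s\leqslant2$. The main obstacle I anticipate is purely computational—verifying the cancellation that reduces~\eqref{CSCC} to the clean telescoped form $(\tau\phi)'=0$—after which the bound is just linear algebra.
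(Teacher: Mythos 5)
Your proof is correct, and its second half takes a genuinely different route from the paper's. For the inclusion $\mathfrak s\subset\mathfrak f\oplus\mathfrak c$ both arguments hinge on the same bracket $[Q,D^x]=Q_{\mathfrak n}$; the paper then invokes Lemma~\ref{LemmaOnPxG} to conclude $D^x\notin\mathfrak s$, whereas you observe directly that this bracket lies in $\mathfrak s\cap\mathfrak n\subset\mathfrak s\cap\mathfrak r=\mathfrak c$ and $\mathfrak c\cap\mathfrak n=\{0\}$ --- a marginally more self-contained ending. For the bound $\dim\mathfrak s\leqslant2$ the paper argues by contradiction: $\dim\mathfrak s>2$ forces $\dim(\mathfrak s\cap\mathfrak f)\geqslant2$, so modulo the adjoint action of~$G^\sim$ the algebra contains $\langle P^t,D^t+\frac12D^x\rangle$ together with $D^x$, hence $P^t$ and $D^t$, whose classifying conditions $f_t=0$ and $tf_t+f=0$ force $f=0$. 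That argument leans on the conjugacy classification of subalgebras of ${\rm sl}(2,\mathbb R)$ and on the $G^\sim$-action. You instead exploit $D^x\in\mathfrak s$ from the outset to pin down $f=\phi(t)x^2$ with $\phi$ nonvanishing, verify that \eqref{CSCC} telescopes to the vanishing of the $t$-derivative of $\tau\phi$ for every element of $\mathfrak f\oplus\mathfrak c$ (the computation checks out: the $c_3$-terms cancel and what survives is $(\tau\phi_t+\tau_t\phi)x^2$), and then bound the image of the projection to~$\mathfrak f$ via the injectivity of the linear map $\tau\mapsto\tau\phi\in\mathbb R$. This is more elementary --- no adjoint action, no ${\rm sl}(2,\mathbb R)$ subalgebra list --- and as a bonus it recovers exactly the values of~$f$ and the symmetry algebras appearing in cases $\mathfrak g^{2.2}$--$\mathfrak g^{2.4}$ of Table~\ref{TableSubalgebras}. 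The only cost is the explicit cancellation check, which you rightly flag and which indeed goes through.
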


\begin{proof}
Suppose that $\mathfrak s\nsubseteq\mathfrak f\oplus\mathfrak c$
and thus $\mathfrak s\setminus (\mathfrak f\oplus\mathfrak c)\ne\varnothing$.
Each element of this set difference is of the form
$Q=a_0P^t+a_1D^t+a_2\Pi+a_3D^x+a_4G+a_5P^x$ with $(a_4,a_5)\ne(0,0)$.
Since $D^x,Q\in\mathfrak s$, we have $[Q,D^x]=a_4G+a_5P^x\in\mathfrak s$.
Then Lemma~\ref{LemmaOnPxG} implies that $D^x\not\in\mathfrak s$, which gives a~contradiction.
Therefore, $\mathfrak s\subset\mathfrak f\oplus\mathfrak c$.

If $\dim\mathfrak s>2$, then $\dim\mathfrak s\cap\mathfrak f\geqslant2$ 
and hence, modulo $G^\sim$-equivalence, $\mathfrak s\supseteq\langle P^t,D^t+\frac12D^x,D^x\rangle$.
Therefore, we also get $\mathfrak s\supseteq\langle P^t,D^t\rangle$.
The substitution of the values $(\tau,\xi)=(1,0)$ and $(\tau,\xi)=(t,0)$
corresponding to~$P^t$ and~$D^t$, respectively,
into~\eqref{CSCC} leads to a~system on~$f$ that is inconsistent with the
constraint $f\ne0$ for the arbitrary element~$f$.
The obtained contradiction means that $\dim\mathfrak s\leqslant2$
and hence $\dim\mathfrak s\cap\mathfrak f\leqslant1$.
\end{proof}

\begin{corollary}
If $\mathfrak s\cap\mathfrak r=\mathfrak c$, then
$\mathfrak s\in\big\{\langle D^x\rangle, \langle D^x,P^t\rangle, \langle D^x,D^t+\frac12D^x\rangle, \langle D^x,P^t+\Pi\rangle\big\}\bmod G^\sim$.
\end{corollary}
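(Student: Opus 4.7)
The plan is to combine Lemma~\ref{LemmaOnDx} with the standard classification of one-dimensional subalgebras of $\mathrm{sl}(2,\mathbb R)$ under inner automorphisms. Under the hypothesis $\mathfrak s\cap\mathfrak r=\mathfrak c$, the lemma already forces the inclusion $\mathfrak s\subset\mathfrak f\oplus\mathfrak c$ together with the bound $\dim\mathfrak s\leqslant 2$. Since $D^x\in\mathfrak s$, only the cases $\dim\mathfrak s=1$ and $\dim\mathfrak s=2$ remain, and in the first case $\mathfrak s=\mathfrak c=\langle D^x\rangle$ is immediate, delivering the first representative.

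The main work lies in the two-dimensional case. I would choose any $Q\in\mathfrak s\setminus\mathfrak c$ and decompose it as $Q=Q_{\mathfrak f}+aD^x$ with $Q_{\mathfrak f}:=\mathop{{\rm pr}_{\mathfrak f}}Q\ne 0$ and $a\in\mathbb R$. Because $D^x\in\mathfrak s$, replacing $Q$ by $Q-aD^x=Q_{\mathfrak f}$ does not alter the span, so without loss of generality $\mathfrak s=\langle D^x,Q_{\mathfrak f}\rangle$ with $Q_{\mathfrak f}\in\mathfrak f\setminus\{0\}$. What is left is to normalize $Q_{\mathfrak f}$ modulo those $G^\sim$-equivalences that preserve~$\mathfrak c$.

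Reading the adjoint table of Section~\ref{GROUPandALGEBRAsection}, the one-parameter subgroups $\hat P^t(\beta)$, $\hat D^t(\alpha)$, $\hat D^x(\kappa)$ and $\hat\Pi(\gamma)$ all fix~$D^x$, so they lie in the stabilizer of~$\mathfrak c$. Their induced action on the quotient $(\mathfrak f\oplus\mathfrak c)/\mathfrak c\simeq\mathfrak f$ is trivial for $\hat D^x$ and, for the remaining three generators, reproduces the one-parameter subgroups of $\mathrm{SL}(2,\mathbb R)$ associated with two opposite nilpotents and a Cartan element in $\mathrm{sl}(2,\mathbb R)$. Invoking the well-known trichotomy of conjugacy classes in $\mathrm{sl}(2,\mathbb R)$ — nilpotent, semisimple of real type, and semisimple of elliptic type — I would then reduce $Q_{\mathfrak f}$ up to scalar to one of the three canonical representatives $P^t$, $D^t+\tfrac12 D^x$, or $P^t+\Pi$, which together with $D^x$ produce exactly the three remaining subalgebras listed in the statement.

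The only mildly delicate step is the claim in the previous paragraph that the four flows $\hat P^t$, $\hat D^t$, $\hat D^x$, $\hat\Pi$ exhaust, modulo discrete transformations, the stabilizer of~$\mathfrak c$ within the connected component of $G^\sim$ and that they realize the full inner automorphism group of~$\mathfrak f$. Both assertions, however, are immediate from the adjoint table, because $\hat P^x$ and $\hat G$ are the only remaining generators of $G^\sim$ and they move $D^x$ out of~$\mathfrak c$, while the infinitesimal versions of $\hat P^t$, $\hat D^t$, $\hat\Pi$ act on $(\mathop{{\rm pr}_{\mathfrak f}}P^t,\,\mathop{{\rm pr}_{\mathfrak f}}(D^t+\tfrac12D^x),\,\mathop{{\rm pr}_{\mathfrak f}}\Pi)$ precisely as $\mathrm{ad}\,E$, $\mathrm{ad}\,H$, $\mathrm{ad}\,F$ of a standard $\mathrm{sl}(2,\mathbb R)$-triple.
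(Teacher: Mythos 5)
Your proof is correct and follows essentially the same route as the paper: Lemma~\ref{LemmaOnDx} reduces everything to $\mathfrak s\subset\mathfrak f\oplus\mathfrak c$ with $\dim\mathfrak s\leqslant2$, after which the known list of inequivalent one-dimensional subalgebras of $\mathfrak f\simeq{\rm sl}(2,\mathbb R)$, namely $\langle P^t\rangle$, $\langle D^t+\frac12D^x\rangle$ and $\langle P^t+\Pi\rangle$, yields the four representatives. Your explicit verification that the stabilizer of $\mathfrak c$ in $G^\sim$ (generated by $\hat P^t$, $\hat D^t$, $\hat D^x$, $\hat\Pi$) induces the full inner automorphism group of $\mathfrak f$ is a welcome extra detail that the paper leaves implicit.
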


This gives the subalgebras~$\mathfrak g^{1.1}$, $\mathfrak g^{2.2}$--$\mathfrak g^{2.4}$ of Table~1, respectively.

\begin{lemma}\label{LemmaOnSubalgerasWithPxG}
If $\mathfrak s\cap\mathfrak r=\mathfrak n$, then
either $\dim\mathfrak s\leqslant3$ and $\mathfrak s\cap\mathfrak f=\{0\}$
or $\mathfrak s=\mathfrak f\lsemioplus\mathfrak n$.
\end{lemma}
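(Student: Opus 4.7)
The plan is to combine the simplification of the classifying condition~\eqref{CSCC} that is forced by the hypothesis $\mathfrak s\cap\mathfrak r=\mathfrak n$ with the classification of subalgebras of $\mathfrak f\simeq\mathfrak{sl}(2,\mathbb R)$ recalled in Section~\ref{SubalgebrasSection}. First, the hypothesis together with (the proof of) Lemma~\ref{LemmaOnPxG} gives $P^x,G\in\mathfrak s$, $D^x\notin\mathfrak s$, and $f_x=0$ for the corresponding value of the arbitrary element. Writing a generic element of $\mathfrak g$ in coordinates adapted to the Levi decomposition as
\[
Q=a_0P^t+a_1\bigl(D^t+\tfrac12D^x\bigr)+a_2\Pi+a_3D^x+a_4G+a_5P^x,
\]
a direct substitution using $f_x=0$ collapses~\eqref{CSCC} for $Q\in\mathfrak g_f$ to the single equation
\[
(a_0+a_1t+a_2t^2)f'(t)-2a_3f(t)=0.
\]
Since $\mathfrak s\cap\mathfrak r=\mathfrak n$ has dimension $2$, we have $\dim\mathfrak s=2+\dim\mathrm{pr}_{\mathfrak f}\mathfrak s$, so the entire dichotomy is governed by the projection $\mathrm{pr}_{\mathfrak f}\mathfrak s$, which is itself a subalgebra of $\mathfrak{sl}(2,\mathbb R)$.

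I would next dispose of the large-projection regime $\dim\mathrm{pr}_{\mathfrak f}\mathfrak s\geqslant2$. By the standard list of $\mathfrak{sl}(2,\mathbb R)$-subalgebras, up to $G^\sim$-equivalence $\mathrm{pr}_{\mathfrak f}\mathfrak s\supseteq\langle P^t,D^t+\tfrac12D^x\rangle$, so $\mathfrak s$ contains representatives of the form $P^t+a_1D^x$ and $(D^t+\tfrac12D^x)+a_2D^x$ (the $\mathfrak n$-components can be dropped thanks to $P^x,G\in\mathfrak s$). From the commutation table in Section~\ref{GROUPandALGEBRAsection} one has $[P^t,D^x]=[D^x,D^t]=0$, so their commutator is simply $P^t$; hence $P^t\in\mathfrak s$, which forces $a_1D^x\in\mathfrak s$ and therefore $a_1=0$ since $D^x\notin\mathfrak s$. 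The reduced classifying condition applied to $P^t$ gives $f'=0$, i.e.\ $f$ is a nonzero constant, for which the same condition becomes just $a_3=0$ and hence $\mathfrak g_f=\mathfrak f\lsemioplus\mathfrak n$. Appropriateness (maximality of $\mathfrak s$ as a Lie invariance algebra) then yields $\mathfrak s=\mathfrak g_f=\mathfrak f\lsemioplus\mathfrak n$; in particular the intermediate case $\dim\mathrm{pr}_{\mathfrak f}\mathfrak s=2$ is automatically excluded, as the corresponding $\mathfrak s$ would be a proper subalgebra of $\mathfrak f\lsemioplus\mathfrak n$.

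In the small-projection regime $\dim\mathrm{pr}_{\mathfrak f}\mathfrak s\leqslant1$ the bound $\dim\mathfrak s\leqslant3$ is immediate. To establish $\mathfrak s\cap\mathfrak f=\{0\}$ I would argue by contradiction: a nonzero $F\in\mathfrak s\cap\mathfrak f$ can, modulo $G^\sim$-equivalence, be taken to lie in $\{P^t,\,D^t+\tfrac12D^x,\,P^t+\Pi\}$, and in each of these three cases the reduced classifying condition (with $a_3=0$) forces $f'(t)=0$. Combined with $f_x=0$ this makes $f$ constant, whence $\mathfrak g_f=\mathfrak f\lsemioplus\mathfrak n$ has dimension $5$, contradicting appropriateness of $\mathfrak s$ together with the bound $\dim\mathfrak s\leqslant3$.

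The main technical obstacle I expect is the commutator bookkeeping in the large-projection step, where one has to make sure that no $D^x$ is accidentally driven into $\mathfrak s$. The other delicate point, also in that step, is the use of appropriateness to pass directly from the classifying-condition output ($f$ constant, $\mathfrak g_f=\mathfrak f\lsemioplus\mathfrak n$) to the equality $\mathfrak s=\mathfrak f\lsemioplus\mathfrak n$, which is what simultaneously rules out the 2-dimensional-projection configuration without a separate case analysis.
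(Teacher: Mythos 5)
Your proposal is correct and follows essentially the same route as the paper's own proof: the commutator of the two lifts of $P^t$ and $D^t+\frac12D^x$ produces $P^t\in\mathfrak s$ in the high-dimensional case, any nonzero element of $\mathfrak s\cap\mathfrak f$ forces $f_t=0$ in the other case, and in both situations $f=\const$ combined with maximality yields $\mathfrak s=\mathfrak f\lsemioplus\mathfrak n$. The only differences are cosmetic (you make the reduced classifying condition $(a_0+a_1t+a_2t^2)f_t-2a_3f=0$ explicit and organize the cases by $\dim\mathop{\rm pr}_{\mathfrak f}\mathfrak s$ instead of $\dim\mathfrak s$), so no further comparison is needed.
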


\begin{proof}
Suppose that $\dim\mathfrak s>3$. Then $\dim\mathop{{\rm pr}_{\mathfrak f}}\mathfrak s\geqslant2$,
i.e., modulo~$G^\sim$, the algebra~$\mathfrak s$ contains
the vector fields $Q^1=P^t+aD^x$ and $Q^2=D^t+bD^x$ with some constants~$a$ and~$b$.
Therefore, the commutator $[Q^1,Q^2]=P^t$ also belongs to~$\mathfrak s$,
and the classifying condition~\eqref{CSCC} in view of involving $P^x$ and~$P^t$ implies that $f=\const$.

In the same way, the condition $f=\const$ is derived in the case $\mathfrak s\cap\mathfrak f\ne\{0\}$
since after substituting the components of~$P^x$ and of any nonzero element of~$\mathfrak f$ into~\eqref{CSCC}
we obtain the equations $f_x=0$ and $f_t=0$.

The maximal Lie invariance algebra of the equation~$\mathcal L_f$ with $f=\const$,
which is the classical Burgers equation,
is the five-dimensional algebra
$\mathfrak s=\mathfrak f\lsemioplus\mathfrak n$.
\end{proof}

\begin{corollary}\label{CorellaryOnSubalgerasWithPxG}
If $\mathfrak s\cap\mathfrak r=\mathfrak n$, then
$\mathfrak s=\mathfrak s_{\mathfrak f}\lsemioplus\mathfrak n$, where
\begin{gather*}
\mathfrak s_{\mathfrak f}\in\big\{\{0\},\, \langle P^t+\tfrac12D^x\rangle,\, \langle D^t+aD^x\rangle,\, \langle P^t+\Pi+aD^x\rangle,\, \mathfrak f\big\}\bmod G^\sim,
\end{gather*}
the parameter~$a$ runs through $\mathbb R\setminus\{0\}$, and $a>0\bmod G^\sim$.
\end{corollary}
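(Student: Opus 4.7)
The plan is to apply Lemma~\ref{LemmaOnSubalgerasWithPxG} to split the argument into two exhaustive cases. If $\mathfrak s = \mathfrak f\lsemioplus\mathfrak n$ we set $\mathfrak s_{\mathfrak f} = \mathfrak f$ and recover the last entry of the list. Otherwise $\dim\mathfrak s \leqslant 3$ and $\mathfrak s\cap\mathfrak f = \{0\}$; since $\mathfrak n\subseteq\mathfrak s$ and $\dim\mathfrak n = 2$, this forces $\dim\mathfrak s \in\{2,3\}$. The case $\dim\mathfrak s = 2$ gives $\mathfrak s = \mathfrak n$, i.e.\ $\mathfrak s_{\mathfrak f} = \{0\}$, the first entry of the list.

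The substantive case is $\dim\mathfrak s = 3$. Pick any $Q\in\mathfrak s\setminus\mathfrak n$ and decompose $Q = Q_{\mathfrak f} + Q_{\mathfrak c} + Q_{\mathfrak n}$ along $\mathfrak g = \mathfrak f\oplus\mathfrak c\oplus\mathfrak n$; subtracting the $\mathfrak n$-part (which lies in $\mathfrak s$) we may assume $Q_{\mathfrak n} = 0$. The hypothesis $\mathfrak s\cap\mathfrak r = \mathfrak n$ rules out $Q_{\mathfrak f} = 0$, since otherwise $Q = Q_{\mathfrak c}\in\mathfrak c\subset\mathfrak r$ would force $Q\in\mathfrak n$, contradicting $Q\notin\mathfrak n$. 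Hence $\mathop{{\rm pr}_{\mathfrak f}}\mathfrak s = \langle Q_{\mathfrak f}\rangle$ is a one-dimensional subalgebra of $\mathfrak f\cong{\rm sl}(2,\mathbb R)$, and by the recalled classification we may assume, modulo~$G^\sim$, that $Q_{\mathfrak f}$ is one of $P^t$, $D^t+\tfrac12 D^x$ or $P^t+\Pi$. In each case $\mathfrak s_{\mathfrak f} = \langle Q_{\mathfrak f} + c D^x\rangle$ for some $c\in\mathbb R$, with the sole remaining task being to identify the appropriate values of~$c$ up to the residual $G^\sim$-action that preserves the canonical form of~$Q_{\mathfrak f}$.

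To do this, for each canonical~$Q_{\mathfrak f}$ we substitute the components of $Q = Q_{\mathfrak f} + cD^x$ together with those of~$P^x$ and~$G$ into the classifying condition~\eqref{CSCC}, integrate to find the admissible arbitrary element~$f$ (which depends only on~$t$ in view of the $P^x$-condition), and recompute the full Lie invariance algebra of the resulting~$f$ to verify maximality of~$\mathfrak s$. Degenerate values of~$c$ that make~$f$ constant correspond to the Burgers equation and are absorbed by the case $\mathfrak s = \mathfrak f\lsemioplus\mathfrak n$; they must therefore be excluded. Finally, inspection of the adjoint action table for the elementary equivalences~\eqref{elemG1} pins down the residual action on~$c$: in the parabolic case $Q_{\mathfrak f} = P^t$, the scaling $\hat D^t(\alpha)$ rescales the coefficient and collapses all admissible~$c$ to a single normalized value, yielding $\langle P^t+\tfrac12 D^x\rangle$; in the hyperbolic and elliptic cases only a discrete sign change of~$c$ survives (inherited from an inversion of~$t$), so the parameter $a\in\mathbb R\setminus\{0\}$ persists with canonical domain $a>0\bmod G^\sim$. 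The main obstacle is precisely this concluding case analysis: combining the elimination of the Burgers values with the precise tracking of the residual equivalence action on~$c$ in each of the three Levi classes is what actually forces the form of the list.
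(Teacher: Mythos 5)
Your argument is correct and follows essentially the route the paper intends for this corollary: Lemma~\ref{LemmaOnSubalgerasWithPxG} reduces everything to the three\-/dimensional case, where the one-dimensional complement of~$\mathfrak n$ is normalized via the known conjugacy classes of one-dimensional subalgebras of $\mathfrak f\simeq{\rm sl}(2,\mathbb R)$, the classifying condition~\eqref{CSCC} excludes the values of the $D^x$-coefficient for which $f$ degenerates to a constant, and the residual adjoint action (scaling by $\hat D^t$ in the parabolic case, a discrete reflection in the other two) fixes the canonical parameter domain. The only caveat is bookkeeping: for the family written as $\langle D^t+aD^x\rangle$ the degenerate value is $a=\tfrac12$ and the canonical domain is $a>\tfrac12$ (as in Table~\ref{TableSubalgebras}, case $\mathfrak g^{3.2}_a$), so your blanket ``$a\ne0$, $a>0$'' --- which echoes the corollary's own loose wording --- should be read with the offset $a=\tfrac12+c$ from the canonical representative $D^t+\tfrac12D^x$ in mind.
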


The list presented in Corollary~\ref{CorellaryOnSubalgerasWithPxG} gives
the subalgebras~$\mathfrak g^{2.1}$, $\mathfrak g^{3.1}$--$\mathfrak g^{3.3}$ and $\mathfrak g^5$ of Table~1, respectively.

\begin{corollary}
The dimension of any appropriate subalgebra of~$\mathfrak g$ is not greater than $5$.
\end{corollary}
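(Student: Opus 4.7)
The plan is to derive this bound directly from the preceding three lemmas by exhausting the possibilities for $\mathfrak s\cap\mathfrak r$. Recall from the first corollary that $\mathfrak s\cap\mathfrak r\in\{\{0\},\mathfrak c,\mathfrak n\}$, so I will split into these three cases and bound $\dim\mathfrak s$ in each.

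If $\mathfrak s\cap\mathfrak r=\{0\}$, then the restriction of the projector $\mathop{{\rm pr}_{\mathfrak f}}$ to~$\mathfrak s$ is injective, so $\dim\mathfrak s\leqslant\dim\mathfrak f=3$. If $\mathfrak s\cap\mathfrak r=\mathfrak c$, then Lemma~\ref{LemmaOnDx} gives $\dim\mathfrak s\leqslant2$. Finally, if $\mathfrak s\cap\mathfrak r=\mathfrak n$, then Lemma~\ref{LemmaOnSubalgerasWithPxG} yields either $\dim\mathfrak s\leqslant3$ or $\mathfrak s=\mathfrak f\lsemioplus\mathfrak n$, in which latter case $\dim\mathfrak s=\dim\mathfrak f+\dim\mathfrak n=3+2=5$. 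Taking the maximum across all three cases gives the claimed bound $\dim\mathfrak s\leqslant5$.

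There is essentially no obstacle here: the content of the corollary is already encoded in the lemmas, and the six-dimensional algebra $\mathfrak g$ itself fails to be appropriate precisely because $D^x$ cannot simultaneously belong to $\mathfrak s$ with the whole of $\mathfrak n$, as Lemma~\ref{LemmaOnPxG} rules out. The only bookkeeping point to mention is that the dimension drop from $\dim\mathfrak g=6$ to $\dim\mathfrak s\leqslant5$ is forced by this incompatibility between $\mathfrak c$ and $\mathfrak n$ inside an appropriate subalgebra, which is why the maximum $5$ is attained exactly at $\mathfrak s=\mathfrak f\lsemioplus\mathfrak n$ (the symmetry algebra of the classical Burgers equation).
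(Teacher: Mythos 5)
Your proof is correct and follows exactly the route the paper intends: the corollary is stated without explicit proof precisely because it is the immediate consequence of combining the trichotomy $\mathfrak s\cap\mathfrak r\in\{\{0\},\mathfrak c,\mathfrak n\}$ with the dimension bounds from Lemmas~\ref{LemmaOnDx} and~\ref{LemmaOnSubalgerasWithPxG} and the injectivity of $\mathop{{\rm pr}_{\mathfrak f}}$ on a subalgebra meeting $\mathfrak r$ trivially. Your closing observation that the drop from $6$ to $5$ is forced by the incompatibility of $D^x$ with $\mathfrak n$ (Lemma~\ref{LemmaOnPxG}) is also accurate.
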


Below we consider the last case for $\mathfrak s\cap\mathfrak r$, $\mathfrak s\cap\mathfrak r=\{0\}$.
Then we obviously have $\dim\mathfrak s\leqslant3$.
In fact, the upper bound for $\dim\mathfrak s$ can be lowered.

\begin{lemma}\label{LemmaOnSubalgerasWith0PartOfRadical}
If $\mathfrak s\cap\mathfrak r=\{0\}$, then $\dim\mathfrak s\leqslant2$.
Moreover, if additionally $\dim\mathfrak s=2$, then
$\mathop{{\rm pr}_{\mathfrak c}}\mathfrak s=\mathfrak c$ and
$\mathfrak s\ne\langle P^t,D^t\rangle\bmod G^\sim$.
\end{lemma}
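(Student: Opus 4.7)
The plan is to exploit the fact that, when $\mathfrak s\cap\mathfrak r=\{0\}$, the restriction of $\mathop{{\rm pr}_{\mathfrak f}}$ to $\mathfrak s$ is injective and is a~Lie algebra homomorphism (since $\mathfrak r$ is an ideal of $\mathfrak g$). Thus immediately $\dim\mathfrak s\le\dim\mathfrak f=3$. To exclude $\dim\mathfrak s=3$, observe that then $\mathop{{\rm pr}_{\mathfrak f}}\mathfrak s=\mathfrak f$, so $\mathfrak s$ is a~Levi factor of $\mathfrak g$ complementary to $\mathfrak r$; by Malcev's conjugacy theorem it is conjugate to $\mathfrak f$ via an inner automorphism $\exp({\rm ad}\,n)$ with $n\in\mathfrak n$, and these automorphisms are realized inside $G^\sim$ by the one-parameter subgroups $\hat P^x$ and $\hat G$. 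Hence, modulo $G^\sim$, $\mathfrak s=\mathfrak f$. Plugging the components of $P^t$, $D^t+\tfrac12D^x$ and $\Pi$ into~\eqref{CSCC} yields $f_t=0$ and $xf_x=0$, hence $f=\const$; for such $f$ the maximal Lie invariance algebra equals $\mathfrak f\lsemioplus\mathfrak n$, which is five-dimensional (cf.\ Lemma~\ref{LemmaOnSubalgerasWithPxG}), so $\mathfrak s$ is not maximal, contradicting its appropriateness.

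Suppose now $\dim\mathfrak s=2$ and argue the first additional claim by contradiction, assuming $\mathop{{\rm pr}_{\mathfrak c}}\mathfrak s=\{0\}$, i.e.\ $\mathfrak s\subset\mathfrak f\lsemioplus\mathfrak n$. Then $\mathop{{\rm pr}_{\mathfrak f}}\mathfrak s$ is a~two-dimensional subalgebra of $\mathfrak f\cong{\rm sl}(2,\mathbb R)$; all such subalgebras (the Borel subalgebras) are mutually conjugate, so after applying a~suitable composition of $\hat P^t$, $\hat D^t$ and $\hat\Pi$ one may assume $\mathop{{\rm pr}_{\mathfrak f}}\mathfrak s=\langle P^t,\,D^t+\tfrac12D^x\rangle$. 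Hence $\mathfrak s$ has a~basis $\{Q_1=P^t+r_1,\,Q_2=D^t+\tfrac12D^x+r_2\}$ with $r_1,r_2\in\mathfrak n$. Computing $[Q_1,Q_2]$ and matching the $\mathop{{\rm pr}_{\mathfrak f}}$-components forces $[Q_1,Q_2]=Q_1$, which pins down two of the four coefficients in $r_1,r_2$; the residual freedom is absorbed by $\hat P^x$ and $\hat G$, yielding $\mathfrak s=\langle P^t,\,D^t+\tfrac12D^x\rangle$ modulo $G^\sim$. The classifying condition then again forces $f=\const$, contradicting maximality exactly as in the previous step.

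The last claim is immediate: substituting the components $(\tau,\xi)=(1,0)$ of $P^t$ and $(\tau,\xi)=(t,0)$ of $D^t$ into~\eqref{CSCC} yields $f_t=0$ and $tf_t+f=0$, which jointly force $f\equiv0$ and contradict the inequality $f\ne0$. Hence no equation from the class~\eqref{GBE} admits $\langle P^t,D^t\rangle$ as a~Lie invariance algebra, let alone as its maximal one.

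The main technical obstacle lies in the bookkeeping of the middle paragraph: one must track how $\hat P^x(\mu_0)$ and $\hat G(\mu_1)$ simultaneously act on both basis vectors (in particular on $D^t+\tfrac12D^x$, which acquires $\tfrac12$-weighted $P^x$- and $G$-corrections) and verify that the two available group parameters suffice to cancel precisely those coefficients surviving the commutator constraint. The remainder of the argument is a~repeated application of the classifying condition~\eqref{CSCC} combined with the elementary conjugacy of two-dimensional subalgebras of ${\rm sl}(2,\mathbb R)$.
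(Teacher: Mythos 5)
Your proof is correct, and its overall strategy --- normalizing to canonical subalgebras modulo $G^\sim$, feeding their $(\tau,\xi)$-components into the classifying condition~\eqref{CSCC} to force $f=\const$, and then contradicting appropriateness via the five-dimensional Burgers algebra $\mathfrak f\lsemioplus\mathfrak n$ --- coincides with the paper's treatment of the two-dimensional case and of the exclusion of $\langle P^t,D^t\rangle$. The one place where you genuinely diverge is the exclusion of $\dim\mathfrak s=3$: you invoke Malcev's conjugacy theorem to move the Levi factor $\mathfrak s$ onto $\mathfrak f$ itself (correctly noting that the required inner automorphisms $\exp(\mathrm{ad}\,n)$ with $n\in\mathfrak n=[\mathfrak g,\mathfrak r]$ are realized in $G^\sim$ by $\hat P^x$ and $\hat G$) and then rerun the classifying-condition computation, whereas the paper argues structurally: a Levi factor is perfect, hence lies in $[\mathfrak g,\mathfrak g]$, which projects trivially onto $\mathfrak c=\langle D^x\rangle$, contradicting the already-established fact that $\mathop{{\rm pr}_{\mathfrak c}}\mathfrak s=\mathfrak c$ whenever $\dim\mathfrak s\geqslant2$. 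The paper's route is shorter because it recycles the $\dim\mathfrak s\geqslant2$ step and needs no conjugation; yours is more self-contained at that point and makes fully explicit the $\hat P^x$/$\hat G$ bookkeeping that the paper leaves implicit (both there and in the two-dimensional case, where the paper does not normalize away the $\mathfrak n$-tails before applying~\eqref{CSCC}). Your computations, including $[Q_1,Q_2]=Q_1$ forcing $a=0$ and $c=b/2$ and the subsequent absorption of the residual coefficients, check out.
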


\begin{proof}
Suppose that $\dim\mathfrak s\geqslant2$ and $\mathop{{\rm pr}_{\mathfrak c}}\mathfrak s=\{0\}$.
Modulo $G^\sim$-equivalence, we can assume that
$\mathop{{\rm pr}_{\mathfrak f}}\mathfrak s\supseteq\langle P^t,D^t+\frac12D^x\rangle$.
In view of the classifying condition~\eqref{CSCC},
the invariance of~$\mathcal L_f$ with respect to~$\mathfrak s$ then implies $f=\const$.
Recall that the maximal Lie invariance algebra of the equation~$\mathcal L_f$
for any (nonzero) constant~$f$ contains~$\mathfrak n$,
which contradicts the condition $\mathfrak s\cap\mathfrak r=\{0\}$.
Therefore, $\mathop{{\rm pr}_{\mathfrak c}}\mathfrak s=\mathfrak c$ if $\dim\mathfrak s\geqslant2$.

Suppose that $\dim\mathfrak s=3$.
Therefore, $\mathop{{\rm pr}_{\mathfrak f}}\mathfrak s=\mathfrak f$
and hence $\mathfrak s\simeq{\rm sl}(2,\mathbb R)$,
i.e.\ $\mathfrak s$ is a~Levi factor of~$\mathfrak g$.
Then the Levi--Malcev theorem (or the direct computation of commutation relations of~$\mathfrak s$)
implies $\mathop{{\rm pr}_{\mathfrak c}}\mathfrak s=\{0\}$.
This contradicts the above conclusion that
$\mathop{{\rm pr}_{\mathfrak c}}\mathfrak s=\mathfrak c$ if $\dim\mathfrak s\geqslant2$.

Similarly to Lemma~\ref{LemmaOnDx},
the condition $\mathfrak s\supseteq\langle P^t,D^t\rangle$
implies $f=0$, which contradicts the original inequality $f\ne0$ for the arbitrary element~$f$.
\end{proof}

If $\dim\mathfrak s=1$, then
$\mathfrak s=\langle Q_{\mathfrak f}+a_3D^x+a_4G+a_5P^x\rangle$,
where
$Q_{\mathfrak f}\in\big\{P^t, D^t, P^t+\Pi\big\}\bmod G^\sim$
and $a_3$, $a_4$ and~$a_5$ are constants.
Consider $Q_{\mathfrak f}=P^t$.
First suppose that $\mathop{{\rm pr}_{\mathfrak c}}\mathfrak s=\{0\}$ and thus $a_3=0$.
The coefficient~$a_5$ is gauged to zero by $\hat G(-a_5)$,
and $a_4\in\{0,1\}$ up to scaling $\hat D^t(a_4^{-1})$ for $a_4\ne0$.
If $\mathop{{\rm pr}_{\mathfrak c}}\mathfrak s=\mathfrak c$, i.e.\ $a_3\ne0$,
then we scale~$a_3$ to 1 by $\hat D^t(a_3)$ and by a subsequent rescaling of the basis vector field.
Then we set the modified parameters $a_4$ and~$a_5$ to zero by $\hat G(a_4)$ and $\hat P^x(a_4+a_5)$.
The other two $G^\sim$-inequivalent values of~$Q_{\mathfrak f}$ are studied in a~similar way.
For the coefficient~$a_3$, we can then only alternate the sign of~$a_3-\frac12$ or $a_3$, respectively.
The coefficients~$a_4$ and~$a_5$ can be set to zero by $\hat G(\mu_1)$ and $\hat P^x(\mu_0)$
for some $\mu^1$ and~$\mu^0$,
except for $Q_{\mathfrak f}=D^t$ and $a_3\in\{0,1\}$
where the nonzero value~$a_5$ (resp.\ $a_4$) can be only scaled to one if $a_3=0$ (resp.\ if $a_3=1$).
In total, this results in the subalgebras $\mathfrak g^{1.2}$--$\mathfrak g^{1.8}_a$.

In the case $\dim\mathfrak s=2$, up to $G^\sim$-equivalence we have
that the subalgebra~$\mathfrak s$ is spanned by two vector fields of the form
$Q^1=P^t+b_3D^x+b_4G+b_5P^x$ and $Q^2=D^t+a_3D^x+a_4G+a_5P^x$
with some constants $a_3$, $a_4$, $a_5$, $b_3$, $b_4$ and~$b_5$.
The commutation relation for~$\mathfrak s$ is $[Q^1,Q^2]=Q^1$.
Expanding it and collecting the coefficients of~$D^x$, we derive $b^3=0$
and hence $a_3\ne\frac12$ since $\mathop{{\rm pr}_{\mathfrak c}}\mathfrak s=\mathfrak c$.
Then collecting the coefficients of~$G$ and~$P^x$ leads to
the equations $(a_3-2)b_4=0$ and $(a_3-1)b_5+a_4=0$, respectively.
We set $b_5$ to zero by $\hat G(-b_5)$, and hence also $a_4=0$.
The coefficient~$b_4$ is zero if $a_3\ne2$,
and its nonzero value is scaled to one.
For $a_3\ne0$ we can set $a_5=0$ by $\hat P^x(-a_5/a_3)$,
and for $a_3=0$ the nonzero value of $a_5$ is scaled to one.
The simultaneous vanishing $a_3=a_5=0$ is not possible
in view of Lemma~\ref{LemmaOnSubalgerasWith0PartOfRadical}.
Therefore, in this case we obtain the subalgebras $\mathfrak g^{2.5}$--$\mathfrak g^{2.7}$,
which completes the computation of $G^\sim$-inequivalent appropriate subalgebras of~$\mathfrak g$.

\begin{table}[hb!]\footnotesize
\begin{center}
\caption{\footnotesize The group classification of the class~\eqref{GBE}
\strut}\label{TableSubalgebras}
\renewcommand{\arraystretch}{1.95} %2.1
\begin{tabular}{|l|l|c|c|l|}
\hline
\hfil$\mathfrak s\subset\mathfrak g$
                     &\hfil Basis of~$\mathfrak g_f$
                                     &$f(t,x)$                        &$\omega$            &\hfil Constraints
\\
\hline
\hline
$\mathfrak g^{1.1}$  &$D^x$          &$x^2h(\omega)$                  &$t$                 &$\big((\alpha\omega^2+\beta\omega+\gamma)h\big)_{\omega}\ne0$
\\
\hline
$\mathfrak g^{1.2}$  &$P^t$          &$h(\omega)$                     &$x$                 &$(\alpha\omega+\beta)h_{\omega}\ne\gamma h$
\\[0.75ex]
\hline
$\mathfrak g^{1.3}$  &$P^t+G$        &$h(\omega)$                     &$x-\dfrac{t^2}2$    &$(\alpha\omega+\beta)h_{\omega}\ne\gamma h$
\\
\hline
$\mathfrak g^{1.4}$  &$P^t+D^x$      &$e^{2t}h(\omega)$               &$e^{-t}x$           &$h_{\omega}\ne0, \quad \omega h_{\omega}\ne2h$
\\
\hline
$\mathfrak g^{1.5}$  &$D^t+P^x$      &$\dfrac1{t}h(\omega)$           &$x-\ln|t|$          &$h_{\omega}\ne0, \quad h_{\omega}\ne-h$
\\[0.5ex]
\hline
$\mathfrak g^{1.6}_a$&$D^t+aD^x$     &$\dfrac{|t|^{2a}}{t}h(\omega)$  &$|t|^{-a}x$         &see table notes;\quad $a\geqslant\frac12\bmod G^\sim$
\\[0.5ex]
\hline
$\mathfrak g^{1.7}$  &$D^t+D^x+G$    &$t\,h(\omega)$                  &$\dfrac xt-\ln|t|$  &$h_{\omega}\ne0, \quad h_{\omega}\ne-h$                   \\[0.5ex]
\hline
$\mathfrak g^{1.8}_a$&$P^t+\Pi+aD^x$ &$e^{2a\arctan t}h(\omega)$      &$\dfrac{e^{-a\arctan t}}{\sqrt{t^2+1}}x$
                                                                                           &$\omega h_{\omega}\ne2h$;\quad $a\geqslant0\bmod G^\sim$
\\[1ex]
\hline
\hline
$\mathfrak g^{2.1}$  &$P^x,\ G$         &$h(\omega)$                   &$t$                &$(\alpha\omega^2+\beta\omega+\gamma)h_{\omega}\ne\delta h$
\\
\hline
$\mathfrak g^{2.2}$  &$D^x,\ P^t$       &$x^2$                         &                   &
\\
\hline
$\mathfrak g^{2.3}$  &$D^x,\ D^t$       &$\varkappa\dfrac{x^2}{t}$     &                   &$\varkappa\ne0$,\quad $\varkappa>0\bmod G^\sim$
\\[1ex]
\hline
$\mathfrak g^{2.4}$  &$D^x,\ P^t+\Pi$   &$\dfrac{\varkappa x^2}{t^2+1}$&                   &$\varkappa\ne0$,\quad $\varkappa>0\bmod G^\sim$
\\[0.75ex]
\hline
$\mathfrak g^{2.5}$  &$P^t,\ D^t+P^x$   &$e^{-x}$                      &                   &
\\
\hline
$\mathfrak g^{2.6}_a$&$P^t,\ D^t+aD^x$  &$|x|^{2-1/a}$                 &                   &$a\ne0,\frac12$
\\
\hline
$\mathfrak g^{2.7}$  &$P^t+G,\ D^t+2D^x$&$\varkappa\left|x-\dfrac{t^2}{2}\right|^{3/2}$\rule{0pt}{18pt}
                                                                       &                   &$\varkappa\ne0$,\quad $\varkappa>0\bmod G^\sim$
\\[1ex]
\hline
\hline
$\mathfrak g^{3.1}$  &$P^x,\ G,\ P^t+\frac1{2}D^x$&$\e e^t$            &                   &
\\
\hline
$\mathfrak g^{3.2}_a$&$P^x,\ G,\ D^t+aD^x$        &$\e |t|^{2a-1}$     &                   &$a\ne\frac12, \quad a>\frac12\bmod G^\sim$
\\[0.75ex]
\hline
$\mathfrak g^{3.3}_a$&$P^x,\ G,\ P^t+\Pi+a D^x$   &$\e e^{2a\arctan t}$&                   &$a\ne0, \quad a>0\bmod G^\sim$
\\
\hline
\hline
$\mathfrak g^5$      &$P^x,\ G,\ P^t,\ D^t+\frac12D^x,\ \Pi$       &$1$&                   &
\\
\hline
\end{tabular}
\end{center}
Here $\e=\pm1\bmod G^\sim$.
The constants~$a$ and $\varkappa$ and the (nonvanishing) function~$h$
should satisfy constraints in the last column
for the expression of~$f$ to be well defined
and for the corresponding Lie invariance algebra to be maximal.
If possible, we gauge the constants~$a$ and~$\varkappa$ by equivalence transformations,
which is also indicated in the last column.
For the algebras~$\mathfrak g^{1.1}$, $\mathfrak g^{1.2}$, $\mathfrak g^{1.3}$ and~$\mathfrak g^{2.1}$,
the constants $\alpha$, $\beta$, $\gamma$ and $\delta$
involved in the corresponding inequalities for~$h$ are arbitrary but are not simultaneously zeros.
The constraints for~$\mathfrak g^{1.6}_a$ are $h_{\omega}\ne0$ and
\begin{gather*}
a(\omega+\alpha)h_{\omega}\ne(2a-1)h \quad\text{if}\quad (a-2)(a-1)\alpha=0,
\\
(\omega+\beta)h_{\omega}\ne2h \quad\text{if}\quad (a-1)a\beta=0,
\\
(a-1)(\omega+\gamma)h_{\omega}\ne(2a-1)h \quad\text{if}\quad a(a+1)\gamma=0.
\end{gather*}
\end{table}

\subsection{Classification results}

In order to compute the associated values of the arbitrary element~$f$,
for each of the listed inequivalent subalgebras of~$\mathfrak g$
we substitute the components~$\tau$ and $\xi$ of its basis vector fields
into the classifying condition~\eqref{CSCC}
and then solve the obtained system of differential equations on~$f$.
This system does have solutions
and, moreover, at least for a~subset of its solutions
the involved subalgebra of~$\mathfrak g$ is the maximal Lie invariance algebra
for the corresponding equations from the class~\eqref{GBE}.
This means that the collection of properties of appropriate subalgebras
derived in Section~\ref{SubalgebrasSection} gives a~necessary and sufficient condition
for a~subalgebra of~$\mathfrak g$ to be appropriate.

A complete list of inequivalent appropriate subalgebras of~$\mathfrak g$ and the corresponding values
for~$f$ is presented in Table~\ref{TableSubalgebras}.
Since the class~\eqref{GBE} is normalized, the table provides its exhaustive group classification.

\section{Classical similarity solutions}\label{SOLUTIONSsection}

The solution of the group classification problem for a~class of differential equations
can be used for finding exact solutions of equations from the class.
The standard procedure for this purpose
starts with classifying subalgebras of the maximal Lie invariance algebra
of each equation listed in the course of group classification.
Then,
using invariants of obtained inequivalent subalgebras,
one constructs ansatzes for the unknown function
and derives the corresponding reduced equations.
In general, reduced equations are simpler for solving than their original counterparts
since they have less number of independent variables.
The last step of the procedure is to construct at least particular solutions of reduced equations,
which gives exact solutions of the corresponding original equations.

In order to optimize the reduction process for equations from the class~\eqref{GBE}, we exploit two special reduction techniques.

The first technique is available due to the class~\eqref{GBE} is normalized.
Roughly speaking, this technique can be characterized as the classification
of Lie reductions with respect to the equivalence group $G^\sim$ of the whole class,
rather than with respect to the Lie symmetry group of the equation to be reduced.
Thus, this technique  is related to the algebraic method of group classification.
Since the class~\eqref{GBE} is normalized,
the projection of $G^\sim$ to the space $(t,x,u)$ contains the point symmetry groups of all equations from the class~\eqref{GBE},
and hence the maximal Lie invariance algebras of these equations are subalgebras of
the projection~$\mathfrak g$ of the equivalence algebra~$\mathfrak g^\sim$.
Recall that equations~$\mathcal L_f$ and~$\mathcal L_{\tilde f}$ from the class~\eqref{GBE} are similar with
respect to a~point transformation if and only if they are $G^\sim$-equivalent.
The similarity of equations $\mathcal L_f$ and~$\mathcal L_{\tilde f}$
implies the equivalence of their maximal Lie invariance algebras $\mathfrak g_f$ and~$\mathfrak g_{\tilde f}$
and establishes a~one-to-one correspondence between the sets of subalgebras of these algebras.
Subalgebras of $\mathfrak g_f$ and~$\mathfrak g_{\tilde f}$ are obviously subalgebras of the algebra $\mathfrak g$.
So, it suffices to classify inequivalent subalgebras of the algebra~$\mathfrak g$
(cf.\ Section~\ref{SubalgebrasSection}),
that are appropriate for Lie reduction of equations from the class~\eqref{GBE}.
This approach allows us to avoid the separate implementation of the Lie reduction procedure
for each of the nineteen classification cases listed in Table~\ref{TableSubalgebras}.

The second technique, which was systematically used in~\cite{fush1994a,fush1994b} and discussed in~\cite{popo1995c},
is to construct ansatzes in such a~way
that reduced equations are of a~simple and similar form.
Thus, the algebras~$\mathfrak g^{1.0}$ and~$\mathfrak g^{1.1}$ give
trivial first-order ordinary differential equations.
Reduced equations constructed using the algebras \mbox{$\mathfrak g^{1.2}$--$\mathfrak g^{1.8}_a$} are of order two.
For all these algebras, we choose the invariant independent variable~$\omega$ linear in~$x$
with coefficients dependent at most on~$t$,
and the general form of ansatzes is $u=F(t)\varphi(\omega)+G(t,x)$ with $G_{xx}=0$.
Here the intention is to make reduced equations of the same general form~\eqref{RedEqClass}.
After constructing intermediate ansatzes and the correspondent reduced equations,
in some cases it is necessary to change the invariant dependent variable~$\varphi$,
e.g.~$\varphi=\phi+1$,
in order to push all second-order reduced equations into the class~\eqref{RedEqClass}.
%\looseness=-1

Note that classical Lie reductions of equations from the class~\eqref{GBE} were carried out earlier only for the subclass with $f_x=0$
\cite{doyl1990a,wafo2004d} with some weaknesses%
\footnote{%
More specifically, in~\cite{doyl1990a} optimal systems of subalgebras were constructed for the corresponding maximal Lie invariance algebras.
These subalgebras were used for finding ansatzes for~$u$ and reduced ordinary differential equations.
At the same time, the consideration was needlessly overcomplicated since
the cases of Lie symmetry extensions were not simplified by point equivalence transformations,
and two cases are equivalent to others with respect to point transformations.
Some of the optimal systems are incorrect, cf.~\cite[footnote~7]{poch2014a}.
Moreover, no reduced equations were integrated. 
%\par
In \cite{wafo2004d}, Lie reductions were performed only with respect to the one-dimensional subalgebras spanned by single basis elements,
not to mention the presence of equivalent cases and needless parameters in the classification list.
The reduced equation~(95) in \cite{wafo2004d} contains two misprints and should in fact read as $F_0f''+ff'+mz_2\lambda f'-mz_2f+z_2^2\lambda=0$,
cf.~\cite[footnote~8]{poch2014a}.
The further integration procedure is not applicable to the correct version of the reduced equation,
and the functions (99)--(101) in \cite{wafo2004d} do not satisfy
the corresponding generalized Burgers equation.
The only nontrivial solutions (91)--(93) presented in \cite{wafo2004d} look, up to an equivalence transformation,
like particular cases of the solution~\eqref{eqsolutiongeneration3} for the first value
of~$\tilde\omega$ in~\eqref{eqsolutiongeneration5}, $\tilde\omega=\omega/\nu$.
}
and were later enhanced~in~\cite{poch2014a}.
Up~to $G^\sim$-equivalence, Lie symmetry extensions in this subclass are exhausted
by the algebras~$\mathfrak g^{2.1}$, $\mathfrak g^{3.1}$, $\mathfrak g^{3.2}$, $\mathfrak g^{3.3}$ and~$\mathfrak g^5$
of Table~\ref{TableSubalgebras}.

Reduced equations for all possible $G^\sim$-inequivalent one-dimensional subalgebras of~$\mathfrak g$
are presented in Table~\ref{TableRedEqs}.
The case with~$\mathfrak g^{1.0}$ in this table corresponds to
the case with~$\mathfrak g^{2.1}$ in Table~\ref{TableSubalgebras}; cf.\ Lemma~\ref{LemmaOnPxG}.

\begin{table}[ht!]\footnotesize
\begin{center}
\caption{\footnotesize Lie reductions with respect to one-dimensional subalgebras of~$\mathfrak g$
\strut}\label{TableRedEqs}
\renewcommand{\arraystretch}{1.4}
\begin{tabular}{|l|l|l|l|l|}
\hline
\hfil$\mathfrak \subset\mathfrak g$
                     &\hfil Basis &\hfil Ansatz, $\varphi=\varphi(\omega)$
                                                          & \hfil~$\omega$ &\hfil Reduced equation
\\
\hline
$\mathfrak g^{1.0}$  &$P^x$       &$u=\varphi$            &$t$             &$\varphi_\omega=0$
\\[1.5ex]
$\mathfrak g^{1.1}$  &$D^x$       &$u=x\varphi$           &$t$             &$\varphi_\omega+\varphi^2=0$
\\[1.5ex]
$\mathfrak g^{1.2}$  &$P^t$       &$u=\varphi$            &$x$             &$h(\omega)\varphi_{\omega\omega}+\varphi\varphi_\omega=0$
\\[1.5ex]
$\mathfrak g^{1.3}$  &$P^t+G$     &$u=\varphi+t$          &$x-\dfrac{t^2}2$&$h(\omega)\varphi_{\omega\omega}+\varphi\varphi_\omega+1=0$
\\[1.5ex]
$\mathfrak g^{1.4}$  &$P^t+D^x$   &$u=e^t\varphi$         &$e^{-t}x$
&$h(\omega)\varphi_{\omega\omega}+\varphi\varphi_\omega-\omega\varphi_\omega+\varphi=0$
\\[1.5ex]
$\mathfrak g^{1.5}$  &$D^t+P^x$   &$u=t^{-1}\varphi$      &$x-\ln|t|$
&$h(\omega)\varphi_{\omega\omega}+\varphi\varphi_\omega-\varphi_\omega-\varphi=0$
\\[1.5ex]
$\mathfrak g^{1.6}_a$&$D^t+aD^x$  &$u=|t|^at^{-1}\varphi$ &$|t|^{-a}x$
&$h(\omega)\varphi_{\omega\omega}+\varphi\varphi_\omega-\omega\varphi_\omega+(a-1)\varphi=0$
\\[1.5ex]
$\mathfrak g^{1.7}$  &$D^t+D^x+G$ &$u=\varphi+\ln|t|$     &$\dfrac xt-\ln|t|$
&$h(\omega)\varphi_{\omega\omega}+\varphi\varphi_\omega-(\omega+1)\varphi_\omega+1=0$
\\[1.5ex]
$\mathfrak g^{1.8}_a$&$P^t+\Pi+aD^x$
&$u=\dfrac{e^{a\arctan t}}{\sqrt{t^2+1}}\varphi+\dfrac{t+a}{t^2+1}x$ &$\dfrac{e^{-a\arctan t}}{\sqrt{t^2+1}}x\!$
&$h(\omega)\varphi_{\omega\omega}+\varphi\varphi_\omega+2a\varphi+(a^2+1)\omega=0\!$
\\[1.8ex]
\hline
\end{tabular}
\end{center}
\end{table}

In order to solve the second-order reduced equations listed in Table~\ref{TableRedEqs}, we consider the superclass
of ordinary differential equations of the form
\begin{gather}\label{RedEqClass}
h(\omega)\phi_{\omega\omega}+\phi\phi_{\omega}+\alpha\phi+\beta\omega+\gamma=0
\qquad
\text{with}
\qquad
h(\omega)\ne0,
\end{gather}
which contains all the reduced equations (except
those for~$\mathfrak g^{1.0}$ and~$\mathfrak g^{1.1}$).
The change of the variable~$\varphi$ (if needed)
and the values of the constants~$\alpha$, $\beta$ and~$\gamma$ for them are as follows:
\begin{gather*}
\arraycolsep=7pt\renewcommand{\arraystretch}{1.3}
\begin{array}{rllll}
\mathfrak g^{1.2}: & \alpha=0, & \beta=0, & \gamma=0, & \varphi=\phi;
\\
\mathfrak g^{1.3}: & \alpha=0, & \beta=0, & \gamma=1, & \varphi=\phi;
\\
\mathfrak g^{1.4}: & \alpha=2, & \beta=1, & \gamma=0, &\text{after the change}\quad \varphi=\phi+\omega;
\\
\mathfrak g^{1.5}: & \alpha=-1, & \beta=0, & \gamma=-1, &\text{after the change}\quad \varphi=\phi+1;
\\
\mathfrak g^{1.6}_a: & \alpha=a, & \beta=a-1, & \gamma=0, &\text{after the change}\quad \varphi=\phi+\omega;
\\
\mathfrak g^{1.7}: & \alpha=1, & \beta=0, & \gamma=1, &\text{after the change}\quad \varphi=\phi+\omega+1;
\\
\mathfrak g^{1.8}_a: & \alpha=2a, & \beta=a^2+1, & \gamma=0, & \varphi=\phi.
\end{array}
\end{gather*}

Linear solutions of reduced equations of the form~\eqref{RedEqClass},
as well as all solutions of the reduced equations
for~$\mathfrak g^{1.0}$ and~$\mathfrak g^{1.1}$,
lead to solutions of equations from the class~\eqref{GBE} that are linear with respect to~$x$.
The solutions being linear with respect to~$x$ are only common for all equations from the class~\eqref{GBE}
and are exhausted by the two families, $u=c_0$ and $u=(x+c_1)/(t+c_2)$,
where $c_0$, $c_1$ and $c_2$ are arbitrary constants.
They also arise in Section~\ref{sectionTrivialcase} within the framework of reduction operators.

We find Lie symmetries of ordinary differential equations
from the class~\eqref{RedEqClass} and use them for solving reduced equations
presented for the subalgebras \mbox{$\mathfrak g^{1.2}$--$\mathfrak g^{1.8}_a$} from Table~\ref{TableRedEqs}.
For these symmetries to be well interpreted as symmetries of reduced equations,
equivalence transformations between equations from the class~\eqref{RedEqClass}
are not involved in the consideration.

\begin{proposition}\label{RedEqClassification}
The values of the arbitrary elements~$h$, $\alpha$, $\beta$ and $\gamma$
that correspond to equations from the class~\eqref{RedEqClass}
with nonzero maximal Lie invariance algebras, $\mathfrak h$, are exhausted by
\begin{enumerate}%\setSkips
\item $h=h_0\big(\omega+\frac{\gamma}{\beta}\big)^2$, $\beta\ne0\colon$
 $\mathfrak h=\langle\big(\omega+\frac{\gamma}{\beta}\big)\partial_{\omega}+\phi\partial_{\phi}\rangle$;
\item $h=h_0$, $\beta=0$, $\gamma\ne0\colon$
 $\mathfrak h=\langle\partial_{\omega}\rangle$;
\item $h=h_0|\omega+\mu|^{3/2}$, $\alpha=\beta=0$, $\gamma\ne0\colon$
 $\mathfrak h=\langle2(\omega+\mu)\partial_{\omega}+\phi\partial_{\phi}\rangle$;
\item $h=-\frac{\alpha}{2}\omega^2+\mu\omega+\nu$, $\beta=\gamma=0\colon$
 $\mathfrak h=\langle h\partial_{\omega}-\alpha h\partial_{\phi}$,
 $-\big(h\!\!\int\!\frac{\d\omega}{h}\big)\partial_{\omega}+\big(\phi+\alpha h\!\!\int\!\frac{\d\omega}{h}+\alpha\omega-\mu\big)\partial_{\phi}\rangle$;
\item $h_{\omega\omega}=\frac{\kappa}{h}-\alpha$, $\beta=\gamma=0\colon$
 $\mathfrak h=\langle h\partial_{\omega}+(\kappa-\alpha h)\partial_{\phi}\rangle$;
\item $\frac{h_{\omega\omega}+\alpha}{(h_{\omega}+\alpha\omega+\mu)^2}=\frac{\kappa}{h}$, $\beta=\gamma=0\colon$
 $\mathfrak h=\langle \xi\partial_\omega+(\phi-\alpha\xi+\alpha\omega+\mu)\partial_\phi\rangle$.
\end{enumerate}
Here $h_0$, $\mu$, $\nu$ and $\kappa$ are arbitrary constants with $h_0\ne0$ and $\kappa\ne0$,
and $\xi=\frac{h_{\omega}+\alpha\omega+\mu}{h_{\omega\omega}+\alpha}$.
\end{proposition}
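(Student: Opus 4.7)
The plan is to carry out the infinitesimal Lie symmetry analysis of~\eqref{RedEqClass} by the standard algorithm. Writing a candidate generator as $X=\xi(\omega,\phi)\partial_\omega+\eta(\omega,\phi)\partial_\phi$, I apply the second prolongation $X^{(2)}$ to the equation, substitute $\phi_{\omega\omega}=-h^{-1}(\phi\phi_\omega+\alpha\phi+\beta\omega+\gamma)$ using the equation itself, and split the resulting identity as a polynomial in $\phi_\omega$ of degree three. The $(\phi_\omega)^3$-coefficient yields $\xi_{\phi\phi}=0$, so $\xi=A(\omega)\phi+B(\omega)$; the $(\phi_\omega)^2$-coefficient produces $h\eta_{\phi\phi}=2A_\omega h-2A\phi$; inserting this back, the $\phi^3$-part of the $(\phi_\omega)^1$-coefficient reduces to $-A/(3h)$, which forces $A=0$. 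Hence $\xi=B(\omega)$ and $\eta=C(\omega)\phi+D(\omega)$.

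After this structural reduction, splitting the two remaining determining equations with respect to $\phi$ yields, in order, $C_\omega=0$ (so $C$ is constant), $C=Bh_\omega/h-B_\omega$, $D=hB_{\omega\omega}$, and a single residual \emph{classifying condition} coupling $B$, $h$, $\alpha$, $\beta$ and $\gamma$ through the $(\phi_\omega)^0$-coefficient. Since $\beta$ and $\gamma$ enter this condition only through the affine expression $\beta\omega+\gamma$, the natural case split is on the values of $\beta$ and $\gamma$. If $\beta\ne 0$, matching the $\omega$-term pins $h$ to $h_0(\omega+\gamma/\beta)^2$ and produces the unique scaling generator of Case~1. If $\beta=0$ and $\gamma\ne 0$, the condition splits further into constant $h$ (Case~2, generator $\partial_\omega$) and, only when additionally $\alpha=0$, the special form $h=h_0|\omega+\mu|^{3/2}$ (Case~3, scaling generator). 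Each sub-case is then verified by direct substitution into the original equation.

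The residual branch $\beta=\gamma=0$ is the heart of the proof. Here the classifying condition becomes a linear homogeneous ODE for $B$ whose coefficients are determined by $h$ and~$\alpha$, with the generator given (after eliminating $C$ and $D$) by $\xi=B$, $\eta=(Bh_\omega/h-B_\omega)\phi+hB_{\omega\omega}$. I treat this as an overdetermined linear problem in $B$: generically its solution space is one-dimensional, producing a single generator and the compatibility ODE for $h$ that constitutes Case~6, together with the explicit formula $\xi=(h_\omega+\alpha\omega+\mu)/(h_{\omega\omega}+\alpha)$ stated there. The solution space enlarges to dimension two exactly when a compatibility relation degenerates: either because an auxiliary coefficient vanishes, giving the quadratic $h=-\tfrac{\alpha}{2}\omega^2+\mu\omega+\nu$ of Case~4 with its two independent generators (the second featuring $\int\d\omega/h$), or because the compatibility reduces to the second-order ODE $h_{\omega\omega}=\kappa/h-\alpha$ of Case~5. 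The main obstacle is ensuring that this enumeration of the $\beta=\gamma=0$ branch is exhaustive and non-overlapping: the governing relation for $h$ is nonlinear and admits several apparent specializations that must be carefully disentangled. I plan to handle this via a rank analysis of the overdetermined linear system for $B$, which cleanly separates the generic case from the two distinguished degeneracies and rules out any further enlargement of~$\mathfrak h$. Maximality of each listed algebra then follows at once, since the determining system has been fully reduced.
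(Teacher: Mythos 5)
Your route coincides with the paper's: reduce the determining system to the structural form $\xi=\xi(\omega)$, $\eta=c_1\phi-\alpha\xi+\alpha c_1\omega+c_0$ with constants $c_1=\frac{h_\omega}{h}\xi-\xi_\omega$ and $c_0$ (the paper simply postulates this form; your derivation of it via the prolongation and the splitting in powers of $\phi_\omega$ and $\phi$ is the standard computation and is consistent with it), and then split on $\beta\ne0$, on $\beta=0$, $\gamma\ne0$, and on $\beta=\gamma=0$. The first two branches are treated essentially as in the paper and lead correctly to Cases 1--3.

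The branch $\beta=\gamma=0$, which you rightly identify as the heart of the proof, is where your proposal goes wrong. You claim the solution space of the linear problem for $B$ is two-dimensional exactly in Cases 4 \emph{and} 5 and one-dimensional in the ``generic'' Case 6. But Case 5 of the proposition has a one-dimensional algebra, $\mathfrak h=\langle h\partial_\omega+(\kappa-\alpha h)\partial_\phi\rangle$, and it must be so: when $h_{\omega\omega}+\alpha\ne0$ the relation $(h_{\omega\omega}+\alpha)\xi=c_1h_\omega+c_1\alpha\omega+c_0$ determines $\xi$ linearly in $(c_1,c_0)$, and substituting into $\xi_\omega-\frac{h_\omega}{h}\xi=-c_1$ gives a single relation $K^1c_1+K^0c_0=0$ with $K^1=2+(h_\omega+\alpha\omega)K^0$; since $K^0=0$ forces $K^1=2\ne0$, the two coefficients never vanish simultaneously and the solution space is at most one-dimensional throughout this subcase. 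The correct trichotomy is: (i) $h_{\omega\omega}+\alpha=0$, where $\xi$ is not determined algebraically and direct integration yields the two-parameter family of Case 4; (ii) $h_{\omega\omega}+\alpha\ne0$ with $K^0=0$, which forces $c_1=0$ (whence $\kappa:=h(h_{\omega\omega}+\alpha)=\const$) and gives the one-dimensional Case 5 --- note the forced absence of a $\phi\partial_\phi$ component, which your sketch does not account for; (iii) $h_{\omega\omega}+\alpha\ne0$ with $K^0\ne0$, where a nonzero symmetry exists iff $K^1/K^0$ is a constant $-\mu$, giving Case 6. Your dichotomy ``generic one-dimensional $=$ Case 6, degenerate two-dimensional $=$ Cases 4 and 5'' contradicts the statement being proved and indicates that the rank analysis was not actually carried out; as written, the proposal does not establish the exhaustiveness of the list or the dimensions of $\mathfrak h$ in this branch.
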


\begin{proof}
Any Lie symmetry operator of an equation from the class~\eqref{RedEqClass} has
the general form $\xi(\omega)\partial_\omega +[c_1\phi-\alpha\xi(\omega)+\alpha c_1\omega+c_0]\partial_\phi$,
where $c_1$ and $c_0$ are arbitrary constants
and the component $\xi=\xi(\omega)$ satisfies the
classifying equations
\begin{gather}
(\xi_{\omega}-2c_1)(\beta\omega+\gamma)+\beta\xi=0,
\label{RedEqClassDE8}
\\
\xi_\omega-\frac{h_\omega}{h}\xi=-c_1,
\label{RedEqClassDE2}
\\
h\xi_{\omega\omega}=-\alpha\xi+c_1\alpha\omega+c_0,
\qquad
\text{and~hence}
\qquad
(h_{\omega\omega}+\alpha)\xi=c_1h_\omega+c_1\alpha\omega+c_0.
\label{RedEqClassDE7}
\end{gather}
Both the forms of the equation~\eqref{RedEqClassDE7} are equivalent when~\eqref{RedEqClassDE2} holds,
and are useful for the further classification.
The simplest way to solve the system \eqref{RedEqClassDE8}--\eqref{RedEqClassDE7}
is to start its integration from equation~\eqref{RedEqClassDE8}
considering the cases $\beta=0$ and $\beta\ne0$ separately.

For $\beta\ne0$, the equation~\eqref{RedEqClassDE8}
implies
\begin{gather*}
\xi=\frac{c_1(\beta\omega+\gamma)}{\beta}+\frac{b}{\beta\omega+\gamma}.
\end{gather*}
Here and below~$b$ denotes an integration constant.
The assumption $b\ne0$ leads to a~contradiction.
As a~result, $b=0$, i.e.\ $\xi=c_1(\omega+\frac{\gamma}{\beta})$, and thus $\xi_{\omega\omega}=0$.
Hence from the first form of~\eqref{RedEqClassDE7} we get $c_0=c_1\frac{\alpha\gamma}{\beta}$.
Since the algebra~$\mathfrak h$ is supposed to be nonzero, it should contain a~vector field with a~nonzero value of~$c_1$.
Therefore, from~\eqref{RedEqClassDE2} we have $(\omega+\frac{\gamma}{\beta})h_{\omega}=2h$,
which gives item~1 of the proposition.

If $\beta=0$ and $\gamma\ne0$, then from~\eqref{RedEqClassDE8} we obtain $\xi=2c_1\omega+b$
and split the first form of~\eqref{RedEqClassDE7} with respect to~$\omega$ to derive $\alpha c_1=0$ and $c_0=\alpha b$.
Then we exploit~\eqref{RedEqClassDE2} and obtain items~2 and~3 %of the proposition
depending on whether $c_1$ vanishes for all elements of~$\mathfrak h$ or not, respectively.

The equation~\eqref{RedEqClassDE8} with $\beta=\gamma=0$ is an identity.
Consider the subcases $h_{\omega\omega}+\alpha=0$ and $h_{\omega\omega}+\alpha\ne0$ separately.

If $h_{\omega\omega}+\alpha=0$, integrating~\eqref{RedEqClassDE2}
we derive $\xi=bh-c_1h\int\frac{\d\omega}{h}$.
This gives item~4.

For $h_{\omega\omega}+\alpha\ne0$, the second form of the equation~\eqref{RedEqClassDE7} gives
\begin{gather*}
\xi=\frac{h_{\omega}+\alpha\omega}{h_{\omega\omega}+\alpha}c_1+\frac1{h_{\omega\omega}+\alpha}c_0.
\end{gather*}
Then the equation~\eqref{RedEqClassDE2} can be represented as $K^1c_1+K^0c_0=0$,
where
\begin{gather*}
K^0=\left[\frac1{h(h_{\omega\omega}+\alpha)}\right]_{\omega}h
\qquad
\text{and}
\qquad
K^1=2+(h_{\omega}+\alpha\omega)K^0.
\end{gather*}
If $K^0=0$, then $K^1=2$, $c_1=0$ and $\kappa:=h(h_{\omega\omega}+\alpha)=\const$.
That is why we have item~5.
Now suppose that $K^0\ne0$.
$K^1$ and $K^0$ are linearly dependent, otherwise $c_1=c_0=0$,
which corresponds to the trivial algebra.
Therefore, $\mu:=-K^1/K^0=\const$, and hence $c_0=\mu c_1$.
Then the equation~\eqref{RedEqClassDE2} reduces to
\begin{gather*}
\frac{h}{h_{\omega}+\alpha\omega+\mu}\left[\frac{(h_{\omega}+\alpha\omega+\mu)^2}{h(h_{\omega\omega}+\alpha)}\right]_{\omega}=0,
\qquad
\text{or, equivalently,}
\qquad
\frac{h_{\omega\omega}+\alpha}{(h_{\omega}+\alpha\omega+\mu)^2}=\frac{\kappa}{h}
\end{gather*}
when once integrated.
Here $\kappa$ is an integration constant.
Thus we obtain item~6.
In the last two items we have $\kappa\ne0$ since $h_{\omega\omega}+\alpha\ne0$.
\end{proof}

%\begin{remark}
The equation~\eqref{RedEqClass} with $h=-\frac{\alpha}{2}\omega^2+\mu\omega+\nu$
and $\beta=\gamma=0$
(item~4 of Proposition~\ref{RedEqClassification})
admits the widest (two-dimensional) symmetry algebra.
In this case in the variables
\begin{gather*}
\tilde\omega=\int\frac{\d\omega}{-\tfrac{\alpha}{2}\omega^2+\mu\omega+\nu},
\qquad
\tilde\phi=\phi+\alpha\omega-\mu
\end{gather*}
the equation~\eqref{RedEqClass} can
be once integrated, which leads to
$2\tilde\phi_{\tilde\omega}=c_0-\tilde\phi^2$,
where $c_0$ is an integration constant.
The solution of the integrated equation
depends on the sign of $c_0$,
\begin{gather}%group gathered?
\tilde\phi=-\varkappa\tan\left(\frac{\varkappa}{2}\tilde\omega+c_1\right) \qquad\text{if}\quad c_0<0,\ \varkappa:=\sqrt{-c_0},
\nonumber
\\
\tilde\phi=\frac2{\tilde\omega+c_1}
\quad\text{or}\quad
\tilde\phi=0 \qquad\text{if}\quad c_0=0,
\label{eqsolutiongeneration3}
\\
\tilde\phi=\varkappa\dfrac{c_1e^{\varkappa\tilde\omega}-1}{c_1e^{\varkappa\tilde\omega}+1}
\quad\text{or}\quad
\tilde\phi=\varkappa \qquad\text{if}\quad c_0>0,\ \varkappa:=\sqrt{c_0},
\nonumber
\end{gather}
where $c_1$ is another integration constant.
The form of $\tilde\omega$ depends on the sign of $\Delta:=\mu^2+2\alpha\nu$ and on $\alpha$, $\mu$ and $\nu$,
namely
\begin{gather}
\tilde\omega=
\begin{cases}
\dfrac{\omega}{\nu}, & \alpha=0, \; \mu=0, \; \nu\ne0,
\\[3mm]
\dfrac{1}{\mu}\ln\left|\omega+\dfrac{\nu}{\mu}\right|, & \alpha=0, \; \mu\ne0,
\\[3.9mm]
-\dfrac{2}{\sqrt{-\Delta}}\arctan\dfrac{\alpha\omega-\mu}{\sqrt{-\Delta}}, & \alpha\ne0, \; \Delta<0,
\\[3.9mm]
\dfrac{2}{\alpha\omega-\mu}, & \alpha\ne0, \; \Delta=0,
\\[3mm]
\dfrac{1}{\sqrt\Delta}\ln\left|\dfrac{\alpha\omega-\mu+\sqrt{\Delta}}{\alpha\omega-\mu-\sqrt\Delta}\right|, & \alpha\ne0, \; \Delta>0.
\end{cases}
\label{eqsolutiongeneration5}
\end{gather}
Therefore, substituting the expressions for $\tilde\phi$ and $\tilde\omega$ into
$\phi=\tilde\phi(\tilde\omega)-\alpha\omega+\mu$
we have fifteen different expressions for solutions of~\eqref{RedEqClass}
with~$h$ quadratic in $\omega$ and $\beta=\gamma=0$.

We can apply this result to the reduced equations that correspond
to the subalgebras~$\mathfrak g^{1.2}$ and~$\mathfrak g^{1.6}_1$ (cf.~Table~\ref{TableRedEqs}),
where $\alpha=0$ and $\alpha=1$, respectively.
%\end{remark}

%\begin{remark}
The differential equation for~$h$ in item~5 being multiplied by $2h_{\omega}$ and once integrated
takes the form $h_{\omega}^2=2\nu\ln|h|-2\alpha h+c_1$, where $c_1$ is an arbitrary constant.
After the second integration we have the implicit general solution
\begin{gather*}
\pm\int\frac{\d h}{\sqrt{2\nu\ln|h|-2\alpha h+c_1}}=\omega+c_2.
\end{gather*}
%\end{remark}

For some items of Proposition~\ref{RedEqClassification},
we can construct at least particular solutions of related equations of the form~\eqref{RedEqClass}
for certain values of parameters.
We can use e.g.\ Lie reduction of equations from the class~\eqref{RedEqClass} to algebraic equations.

The symmetry algebra of item~3 gives the ansatz
$\phi=c\sqrt{|\omega+\mu|}$.
The constant $c$ is a~solution of the reduced algebraic equation
$2\varepsilon c^2-h_0c+4\gamma=0$, where $\varepsilon=\sign(\omega+\mu)$, and hence
\begin{gather*}
c=\frac1{4\varepsilon}\big(h_0\pm\sqrt{h_0^2-32\varepsilon\gamma}\,\big).
\end{gather*}
Hereby for $\gamma=1$ we can use this solution $\phi$ of the reduced equation associated with the subalgebra~$\mathfrak g^{1.3}$,
see Table~\ref{TableRedEqs}, to find the solution
\begin{gather}
\label{sqrtSolution}
u(t,x)=c\sqrt{\left|x-\frac{t^2}{2}+\mu\right|}+t
\end{gather}
of the generalized Burgers equation~$\mathcal L_f$ with $f=h_0\big|x-\frac{t^2}{2}+\mu\big|^{3/2}$,
and $\mu=0\bmod G^\sim$.

If we suppose $\alpha=\mu=0$ in item~6, then the corresponding equation $h_{\omega\omega}/h_{\omega}=\kappa h_{\omega}/h$
for~$h$ can be integrated,
$h=h_0|\omega+\lambda|^{\frac{1}{1-\kappa}}$ if $\kappa\ne1$ and $h=h_0e^{\lambda\omega}$ if $\kappa=1$,
\noprint{
\begin{gather*}
h=h_0|\omega+\lambda|^{\frac{1}{1-\kappa}} \qquad\text{if}\quad \kappa\ne1,
\\
h=h_0e^{\lambda\omega} \qquad\text{if}\quad \kappa=1,
\end{gather*}
}
and the associated Lie symmetry algebra leads to the ansatz $\phi=c(h/h_0)^{\kappa}$.
Here $h_0$ and $\lambda$ are integration constants.
This ansatz reduces
the equation~\eqref{RedEqClass} to a~quadratic
equation in $c$ whose two solutions are $c=0$~and
\begin{gather*}
c=
\begin{cases}
\dfrac{1-2\kappa}{1-\kappa}h_0\sign(\omega+\lambda) &\quad\text{if}\quad \kappa\ne1,
\\
-\lambda h_0 &\quad\text{if}\quad \kappa=1.
\end{cases}
\end{gather*}
Using this result for the reduced equation that corresponds to the algebra~$\mathfrak g^{1.2}$,
see Table~\ref{TableRedEqs}, we obtain the stationary solutions
$u(t,x)=c|x+\lambda|^{\frac{\kappa}{1-\kappa}}$
and
$u(t,x)=-\lambda h_0e^{\lambda x}$
for the equations of the form~\eqref{GBE}
with $f=h_0|x+\lambda|^{\frac{1}{1-\kappa}}$ and $f=h_0e^{\lambda x}$,
respectively.

\begin{remark}
If a~reduced equation admits Lie symmetries
that are not induced by Lie symmetries of the initial equation,
then the initial equation is said to have \emph{additional}~\cite{olve1993b}
(or \emph{hidden}~\cite{abra2008a}) symmetries
with respect to the corresponding reduction.
The first example of such symmetries was presented in~\cite{kapi1978a};
see also the discussion of this example in~\cite[Example~3.5]{olve1993b}.
A~comprehensive study of such symmetries for the Navier--Stokes equations
was carried out in~\cite{fush1994a,fush1994b}.
The Lie reductions of relevant equations from the class~\eqref{GBE}
with respect to algebras~$\mathfrak g^{1.0}$ and~$\mathfrak g^{1.1}$
lead to first-order reduced equations.
Therefore, the corresponding initial equations admit infinite-dimensional families of hidden symmetries
with respect to the above reductions,
but these symmetries are not essential for consideration because they provide no new exact solutions.
The other algebras from Table~\ref{TableRedEqs} give second-order reduced equations
of the general form~\eqref{RedEqClass}.
Among Lie symmetries of such reduced equations there are
both induced and hidden symmetries.
Namely,
in items 1--3 of Proposition~\ref{RedEqClassification},
all symmetries of related reduced equations
(which are constructed using the algebras
$\mathfrak g^{1.4}$, $\mathfrak g^{1.6}_a$ with $a\ne1$, $\mathfrak g^{1.8}_a$;
$\mathfrak g^{1.3}$, $\mathfrak g^{1.5}$, $\mathfrak g^{1.7}$;
$\mathfrak g^{1.3}$, respectively) are induced
by Lie symmetries of the corresponding initial equations from the class~\eqref{GBE}.
The condition $\beta=\gamma=0$ of items~4--6 can be satisfied
only by reduced equations obtained using the algebras~$\mathfrak g^{1.2}$ and~$\mathfrak g^{1.6}_1$.
In item~4, the first basis vector field is induced only if $f=x$
for the reduction with respect to~$\mathfrak g^{1.2}$,
and the second basis vector field is induced only if
$f=t(\omega+\bar\nu)^2$ with $\omega=x/t$ for the reduction with respect to~$\mathfrak g^{1.6}_1$.
All the other Lie symmetries of reduced equations presented in items~4--6
are hidden symmetries of the corresponding initial equations from the class~\eqref{GBE}.
\end{remark}

Consider possible reductions with respect to the two-dimensional
inequivalent subalgebras of the algebra~$\mathfrak g$.
For this purpose for each basis vector field of a~subalgebra we write the characteristic equation
and thus obtain a~system of two differential equations.
If the system is consistent, then its solution gives an ansatz
reducing the corresponding equations from the class~\eqref{GBE} to algebraic equations.
The system obtained for~$\mathfrak g^{2.1}$ is inconsistent,
i.e.\ there is no ansatz associated with this subalgebra.
An ansatz constructed with~$\mathfrak g^{2.4}$ is $u=(t+c)x/(t^2+1)$
but the corresponding reduced equation $c^2+1=0$ has no solutions.
The other subalgebras allow us to construct some simple solutions, namely,
$u=0$ from~$\mathfrak g^{2.2}$,
$u=0$ and $u=x/t$ from~$\mathfrak g^{2.3}$,
$u=0$ and $u=e^{-x}$ from~$\mathfrak g^{2.5}$,
$u=0$ and $u=a^{-1}x|x|^{-1/a}$ from~$\mathfrak g^{2.6}_a$,
and from~$\mathfrak g^{2.7}$ we re-obtain solution~\eqref{sqrtSolution} with $\mu=0$ and $h_0=\kappa$.
Up to $G^\sim$-equivalence, 
the other two-dimensional subalgebras of~$\mathfrak g$ can be assumed to contain 
the vector field~$P^x$~\cite{gaze1992a}
and, therefore, to lead at most to constant solutions 
of equations from the class~\eqref{GBE}.

\section{Reduction operators and nonclassical reductions}\label{ROsection}

Achieving new possible reductions opens a~way for finding more exact solutions,
which may be of interest for modeling physical phenomena
and verifying approximate methods of solving differential equations.
Nonclassical reductions were first considered in~\cite{blum1969a} as a~generalization of the classical Lie reduction method.
An attempt of formalizing them was made in~\cite{fush1987a}.
Vector fields associated with nonclassical reductions are called nonclassical, or conditional, or $Q$-conditional symmetries~\cite{fush1993d,olve1996a}.
Another, more proper, name for such a~vector field~$Q$ is a~\emph{reduction operator}~\cite{kunz2009a},
which relates~$Q$ to reducing the number of independent variables
in a~partial differential equation with an ansatz constructed by~$Q$~\cite{zhda1999a}.

After the linear heat equation~\cite{blum1969a},
the Burgers equation was the second one that was considered
from the point of view of reduction operators~\cite{wood1971b,wood1971a};
see also a~review of these results in~\cite{ames1972a}.
Later, reduction operators of the Burgers equation were objects of study and discussion
in a~number of papers~\cite{arri1993a,arri2002a,mans1999a,olve1996a,pucc1992a}.
%arri2010a,cher1998e,clar1989a,clar1994a,
These studies were summed up in~\cite{poch2013a}.
Attempts to describe nonclassical symmetries for equations from the class~\eqref{GBE}
with nonconstant~$f$'s were started in~\cite{wafo2004d} for the subclass of equations with $f_x=0$.
It was also shown that in this subclass
reduction operators inequivalent to Lie symmetries exist only for
$f=\const$, i.e.\ for the classical Burgers equation.
Preliminary results on reduction operators of equations from the class~\eqref{GBE} with general~$f$'s
were first outlined in~\cite{poch2012a}.

Here we arrange the consideration of reduction operators for generalized
Burgers equations presented in~\cite{poch2012a,poch2013a}
and extend it with complete proofs of the corresponding assertions.

Roughly speaking,
a~\emph{reduction operator} of an equation~$\mathcal L_f$ from the class~\eqref{GBE} is a~vector field of the form
$Q=\tau(t,x,u)\p_t+\xi(t,x,u)\p_x+\eta(t,x,u)\p_u$ with $(\tau,\xi)\ne(0,0)$
that leads to an ansatz reducing the initial equation to
an ordinary differential equation (see~\cite{kunz2008b,popo2008b,zhda1999a} for precise definitions).
Due to the equivalence relation of reduction operators, one can multiply~$Q$
by a~nonzero function of~$(t,x,u)$ in order to gauge a~component of~$Q$ to one.
The set of reduction operators for any (1+1)-dimensional evolution equation
can be naturally partitioned into two subsets depending on whether $\tau=0$ or $\tau\ne0$.
Moreover, reduction operators with $\tau=0$ are \emph{singular}~\cite{boyk2016a,kunz2008b},
and the problem of finding them is equivalent to solving a single determining equation, 
which reduces to the original equation~\cite{fush1992e,kunz2008b,zhda1998a}
(so-called ``no-go'' problem).%
\footnote{%
Sometimes such a~determining equation may be useful, when
it is possible to guess some ad hoc forms of its particular solutions,
although there is no algorithmic procedure to do this.
See, e.g., \cite{gand2001b}.
Singular reduction operators corresponding to these particular solutions
can be used to construct exact solutions for the original equation.\looseness=-1
}
In particular, the determining equation for singular reduction operators of the equation~$\mathcal L_f$ is
\begin{gather*}
\eta_t+u\eta_x+\eta^2+f_x(\eta_x+\eta\eta_u)+f(\eta_{xx}+2\eta\eta_{xu}+\eta^2\eta_{uu})=0,
\end{gather*}
where the component~$\xi$ is already set to 1
using the equivalence relation of reduction operators;
cf.~\cite[Section~2]{poch2013a} for the Burger equation~$\mathcal L_1$.

For this reason, we devote the rest of this section to \emph{regular} reduction operators
of equations from the class~\eqref{GBE},
which have, up to the equivalence relation of reduction operators,
the general form $Q=\p_t+\xi(t,x,u)\p_x+\eta(t,x,u)\p_u$.
The conditional invariance criterion~\cite{fush1993d,popo2007a,zhda1999a} implies the condition
\begin{equation}\label{eq:CondInvCriterion}
Q_{(2)}L^f[u]\,\big|_{\mathcal L_f\cap\mathcal Q_{(2)}}=0
\end{equation}
for the vector field~$Q$ to be a~reduction operator of the equation~$\mathcal L_f$: $L^f[u]=0$.
Here, again, $Q_{(2)}$ is the standard second-order prolongation of the vector field~$Q$,
the manifold defined by the equation~$\mathcal L_f$
in the second-order jet space with the variables $(t,x,u,u_t,u_x,u_{tt},u_{tx},u_{xx})$
is denoted by the same symbol~$\mathcal L_f$,
and~$\mathcal Q_{(2)}$ is the manifold defined in the same jet space by
the invariant surface condition $Q[u]:=\eta-u_t-\xi u_x=0$
and its differential consequences ${\rm D}_tQ[u]=0$ and ${\rm D}_xQ[u]=0$.
These consequences are not needed
in the course of expanding the condition~\eqref{eq:CondInvCriterion}
since the expression $Q_{(2)}L^f[u]$ does not contain the derivatives~$u_{tt}$ and~$u_{tx}$
due to the gauging of the component~$\tau$ of~$Q$ to~1.
Thus, the expanded condition~\eqref{eq:CondInvCriterion}~is
\begin{gather*}
\eta^t+\eta u_x+u\eta^x+(f_t+\xi f_x)u_{xx}+f\eta^{xx}=0
\qquad
\text{if}
\qquad
u_t+uu_x+fu_{xx}=0,
\
u_t+\xi u_x=\eta.
\end{gather*}
Substituting $u_t=\eta-\xi u_x$ and $u_{xx}=(\xi u_x-uu_x-\eta)/f$
and splitting the result with respect to~$u_x$, we obtain the system of determining equations
\begin{gather}
\xi_{uu}=0,
\quad
\eta_{uu}=\frac2f\xi_u(\xi-u)+2\xi_{xu},
\nonumber
\\
(2\xi_u+1)\eta+\left(\frac{f_t}{f}+\frac{f_x}{f}\xi\right)(\xi-u)+2f\eta_{xu}-\xi_t-2\xi_x\xi+u\xi_x-f\xi_{xx}=0,
\nonumber
\\
\eta_t+u\eta_x+f\eta_{xx}-\left(\frac{f_t}{f}+\frac{f_x}{f}\xi\right)\eta+2\xi_x\eta=0.\label{de7}
\end{gather}
Integrating the first two equations,
we can represent $\xi$ and $\eta$ as
polynomials of~$u$ with coefficients depending on~$t$ and $x$,
\begin{gather*}
\xi=\xi^1(t,x)u+\xi^0(t,x),
\quad
\eta=\frac{\xi^1\left(\xi^1-1\right)}{3f}u^3
+\left(\xi^1_x+\frac{\xi^1\xi^0}{f}\right)u^2+\eta^1(t,x)u+\eta^0(t,x),
\end{gather*}
where the coefficients $\xi^1$, $\xi^0$, $\eta^1$ and $\eta^0$ are assumed as new unknown functions.
Substituting the expressions for~$\xi$ and~$\eta$ into the third determining equation
and splitting the result with respect to~$u$, we derive the system
\begin{gather}\label{GBEROde}
\begin{split}
&\xi^1(\xi^1-1)(2\xi^1+1)=0,
\\[.5ex]
&\xi^1(2\xi^1+1)\xi^0-\xi^1(\xi^1-1)f_x  %+4f\xi^1\xi^1_x
=0,\\[.5ex]
&(\xi^1-1)f_t+(2\xi^1+1)(f\eta^1+f\xi^0_x-f_x\xi^0) %-f(\xi^1_t+2\xi^0\xi^1_x+3f\xi^1_{xx})
=0,\\
&(2\xi^1+1)\eta^0+\left(\frac{f_t}{f}+\frac{f_x}{f}\xi^0\right)\xi^0+2f\eta^1_x
-\xi^0_t-2\xi^0\xi^0_x-f\xi^0_{xx}=0.
\end{split}
\end{gather}
In the second and third equations of~\eqref{GBEROde}
we immediately take into account the implication $\xi^1=\const$ of the first equation of~\eqref{GBEROde}.

The further consideration depends on the choice of solution of the first equation of~\eqref{GBEROde}.
We devote the next subsections to the cases $\xi^1=1$ and $\xi^1=-\frac12$,
and the case $\xi^1=0$ is partitioned into two subcases, $\xi^0_{xx}=0$ and $\xi^0_{xx}\ne0$.
Note that the last determining equation~\eqref{de7}
will be represented in terms of $\xi^1$, $\xi^0$, $\eta^1$ and $\eta^0$ and split
with respect to~$u$ in every particular case.

The partition of reduction operators of equations from the class~\eqref{GBE}
into the singular and regular reduction operators
and the further partition of the regular case into the above subcases
are invariant under the action of~$G^\sim$ on the pairs (`equation', `its reduction operator').
See, e.g., \cite[Section~3]{popo2008b} or \cite[Definition~3]{kunz2008b}.
Therefore, these reduction operators can be classified up to $G^\sim$-equivalence,
which coincides with $\mathcal G^\sim$-equivalence
due to the normalization of the class~\eqref{GBE}.

The following theorem holds.

\begin{theorem}\label{Theorem_RO_subsets}
Up to $G^\sim$-equivalence,
all regular reduction operators
of equations from the class~\eqref{GBE} are exhausted~by
\begin{enumerate}\setSkips
 \item
$Q^1=\partial_t+u\partial_x$ for any equation~$\mathcal L_f$ from the class~\eqref{GBE},
 \item
Lie symmetry operators with nonzero coefficients of $\partial_t$,
 \item

$Q^\theta=\partial_t-(\theta_t/\theta_x)\partial_x$
for each equation~$\mathcal L_{f^\theta}$ with $f^\theta=-1/\theta_x$,
where $\theta=\theta(t,x)$ is an~arbitrary nonconstant solution of the equation
\begin{equation*}
\theta_t=\frac{\theta_{xx}}{\theta_x}+h(\theta)\theta_x,
\end{equation*}
and~$h$ is an arbitrary smooth function of~$\theta$,
 \item
$Q^{\xi^0\eta^1\eta^0}\!=\partial_t+\big(\!-\frac1{2}u+\xi^0\big)\partial_x
+\big(\frac1{4}u^3-\frac1{2}\xi^0u^2+\eta^1u+\eta^0\big)\partial_u$
only for the classical Burgers equation~$\mathcal L_1$
(modulo $G^\sim$-equivalence, any constant $f$ can be set to one),
where
\begin{gather*}
\xi^0=\frac1{2}
\frac{\left|\begin{matrix}1&u^1&z^1\\1&u^2&z^2\\1&u^3&z^3\end{matrix}\right|}
{\left|\begin{matrix}1&u^1&y^1\\1&u^2&y^2\\1&u^3&y^3\end{matrix}\right|}\,,
\qquad
\eta^1=\frac1{4}
\frac{\left|\begin{matrix}1&y^1&z^1\\1&y^2&z^2\\1&y^3&z^3\end{matrix}\right|}
{\left|\begin{matrix}1&u^1&y^1\\1&u^2&y^2\\1&u^3&y^3\end{matrix}\right|}\,,
\qquad
\eta^0=-\frac1{4}
\frac{\left|\begin{matrix}u^1&y^1&z^1\\u^2&y^2&z^2\\u^3&y^3&z^3\end{matrix}\right|}
{\left|\begin{matrix}\,1\,&u^1&y^1\\\,1\,&u^2&y^2\\\,1\,&u^3&y^3\end{matrix}\right|}\,,
\end{gather*}
$u^i$ are solutions of~$\mathcal L_1$ such that the determinant in the denominators does not vanish,
$y^i=2u^i_x+(u^i)^2$ and $z^i=4u^i_{xx}+6u^iu^i_x+(u^i)^3$, $i=1,2,3$.
\end{enumerate}
\end{theorem}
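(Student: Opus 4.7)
The plan is to exploit the partition of the regular-case determining system~\eqref{GBEROde} coming from its very first equation, which forces $\xi^1\in\{0,1,-\tfrac12\}$, and to treat each branch separately. In every branch the second and third equations of~\eqref{GBEROde} become linear algebraic equations in the lower coefficients $\xi^0$, $\eta^1$, $\eta^0$, or yield differential constraints tying them to the arbitrary element~$f$. The remaining determining equation~\eqref{de7}, once $\xi$ and $\eta$ are written as polynomials in~$u$, splits into three scalar equations (coefficients of $u^0$, $u^1$, $u^2$), which will supply the final restrictions. Because the class~\eqref{GBE} is normalized, $G^\sim$-equivalence coincides with $\mathcal G^\sim$-equivalence, so at each stage I can use the equivalence transformations~\eqref{elemG1} to gauge away inessential parameters.

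For $\xi^1=1$, the second equation of~\eqref{GBEROde} immediately forces $\xi^0=0$, the third forces $\eta^1=0$, and the fourth then forces $\eta^0=0$; the remaining equation~\eqref{de7} is satisfied identically. This yields $Q=Q^1=\partial_t+u\partial_x$ for every $f\in\mathcal S$, giving item~(i). In the branch $\xi^1=0$ I obtain $\xi=\xi^0(t,x)$ and $\eta=\eta^1 u+\eta^0$, with $\eta^1$ expressed through $\xi^0$ and $f$ by the third equation. Splitting the $u^2$-coefficient of~\eqref{de7} then gives a compatibility condition which, after a short computation, reads as an equation equivalent to $\xi^0_{xx}f=\xi^0\,\partial_x(\text{something})$. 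The natural sub-split is exactly $\xi^0_{xx}=0$ versus $\xi^0_{xx}\ne0$. In the first subcase the coefficients $\tau=1$, $\xi$ linear in~$x$, and $\eta$ linear in~$u$ are precisely those obtained in Section~\ref{LieSYMsection} for Lie symmetries with a nonzero $\partial_t$-component; checking this against~\eqref{CStauxieta} and~\eqref{CSCC} produces item~(ii).

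The subcase $\xi^1=0$, $\xi^0_{xx}\ne0$ is handled by introducing the change of unknown $\theta=\theta(t,x)$ via $\xi^0=-\theta_t/\theta_x$, which is possible (up to a constant of integration) since the compatibility condition for such a $\theta$ is precisely the surviving determining equation. Substituting $\xi^0=-\theta_t/\theta_x$ into the expression for $\eta^1$ (and then into $\eta^0$) and demanding that the remaining equations of the reduced system close gives both the identification $f^\theta=-1/\theta_x$ and the single master equation $\theta_t=\theta_{xx}/\theta_x+h(\theta)\theta_x$, where $h$ appears as the ``constant of integration'' depending only on~$\theta$. This produces item~(iii).

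The main obstacle is the branch $\xi^1=-\tfrac12$, which will yield item~(iv) and is the only branch where the arbitrary element is restricted: substituting $\xi^1=-\tfrac12$ into the third equation of~\eqref{GBEROde} gives $-\tfrac32 f_t=0$, and combined with the second equation one easily deduces that $f_x=0$ as well, so $f$ is a constant and (by the rescaling $\hat D^t$, $\hat D^x$) may be set to~$1$, reducing to the classical Burgers equation~$\mathcal L_1$. For~$\mathcal L_1$ the remaining determining equations~\eqref{de7} for $\xi^0$, $\eta^1$, $\eta^0$ form a nonlinear system whose general solution I would obtain by observing that it linearizes: any solution $u$ of the invariant surface condition for such a $Q$ must satisfy $\mathcal L_1$, and conversely every smooth solution $u$ of~$\mathcal L_1$ lies on an invariant manifold of one such $Q$. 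Picking three functionally independent solutions $u^1,u^2,u^3$ of $\mathcal L_1$ and the associated quantities $y^i=2u^i_x+(u^i)^2$, $z^i=4u^i_{xx}+6u^iu^i_x+(u^i)^3$ (which are the natural first and second ``Cole--Hopf prolongations'' of $u^i$), the requirement that $\xi^0$, $\eta^1$, $\eta^0$ be common to the three invariant surface conditions becomes a $3\times3$ linear system in $(\xi^0,\eta^1,\eta^0)$; solving it by Cramer's rule delivers precisely the determinantal formulas displayed in~(iv). The non-degeneracy assumption on the Wronskian-like denominator is exactly what guarantees that the constructed triple $(\xi^0,\eta^1,\eta^0)$ satisfies the original system, completing the classification.
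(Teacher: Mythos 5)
Your four-branch decomposition ($\xi^1\in\{1,-\tfrac12,0\}$, with the $\xi^1=0$ branch sub-split by $\xi^0_{xx}$) is exactly the paper's, and your treatment of the branches $\xi^1=1$ and $\xi^1=-\tfrac12$ is sound: the deduction $f_t=f_x=0$ from the second and third equations of~\eqref{GBEROde} is correct, and your Cramer's-rule derivation of the determinantal formulas from the three invariant-surface conditions $2\xi^0y^i-4\eta^1u^i-4\eta^0=z^i$ reproduces the paper's Lemma~\ref{Theorem_coeffs_via_v} (the paper works with the Hopf--Cole images $v^i$ and the linear third-order constraint~\eqref{BE_case12_vxxx}, but the two computations are equivalent; you should still justify why a two-parameter family of $Q$-invariant solutions exists and linearizes, which is what makes the representation exhaustive rather than merely sufficient).

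The genuine gap is in the branch $\xi^1=0$, $\xi^0_{xx}\ne0$. Item~3 of the theorem asserts that the operator has \emph{no} $\partial_u$-component, and nothing in your sketch establishes this. A priori the determining system in this branch still involves $\eta=\eta^1(t)u+\eta^0(t,x)$, namely equations~\eqref{case0neqdeq1}--\eqref{case0neqdeq4}, and the paper must prove a separate lemma that $\eta$ vanishes modulo~$G^\sim$. That proof is not a formality: one integrates~\eqref{case0neqdeq1} to $\eta^0+\eta^1\xi^0=\alpha x+\beta$ with $\alpha=(\eta^1)^2-\eta^1_t$, then rules out $\alpha\ne0$ by showing that otherwise $\xi^0$ and $f$ become rational in $\alpha x+\beta$ and the lowest-order coefficient $\mu^2/2$ of the resulting polynomial identity forces $\xi^0_{xx}=0$, a contradiction; the surviving solution $\eta^1=-(t+c)^{-1}$ is then killed by an explicit equivalence transformation. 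Only after $\eta=0$ does the system collapse to~\eqref{PFDEcase_deq1}--\eqref{PFDEcase_deq2}, whose first equation has the conserved form $(1/f)_t+(\xi/f)_x=0$ that legitimizes your potential substitution $\theta_x=-1/f$, $\theta_t=\xi/f$. You introduce $\theta$ only through $\xi^0=-\theta_t/\theta_x$, which by itself neither fixes $f^\theta=-1/\theta_x$ nor explains the absence of $\eta$. A smaller but related gap sits in the branch $\xi^1=0$, $\xi^0_{xx}=0$: conditional invariance is weaker than Lie invariance, so the claim that the coefficients coincide with those of Lie symmetry operators requires actually integrating the ODE system~\eqref{firsteq}--\eqref{etaunusedeq} in $t$ (the paper does this via $\xi^{01}=\varphi_t/\varphi$, $\eta^1=-\psi_t/\psi$) to force the rational form~\eqref{CStauxieta} and to identify the residual classifying equation with~\eqref{CSCC}; asserting the coincidence is not the same as proving it.
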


The proof of the theorem is split into the four parts corresponding to theorem's cases and accompanied with brief discussions of associated nonclassical reductions.

\subsection{\texorpdfstring{Trivial case $\boldsymbol{\xi^1=1}$}{Trivial case xi^1=1}}\label{sectionTrivialcase}

The case $\xi^1=1$ was considered in~\cite{arri1993a,pucc1992a} for the classical Burgers equation
and in~\cite{poch2012a} for all equations from the class~\eqref{GBE}.
The determining equations~\eqref{GBEROde} imply $\xi^0=0$, $\eta^1=0$, $\eta^0=0$,
which gives \emph{item~1} of Theorem \ref{Theorem_RO_subsets}.
The vector field $Q^1=\partial_t+u\partial_x$
is a~unique common reduction operator for all equations of the class~\eqref{GBE}.
The set of $Q^1$-invariant solutions of every equation~$\mathcal L_f$ from the class~\eqref{GBE}
is exhausted by two families,
$u(t,x)=(x+c_1)/(t+c_2)$ and $u=c$,
where $c_1$, $c_2$ and~$c$ are arbitrary constants,
and these are only common solutions
for all equations from the class~\eqref{GBE}, cf.~Section~\ref{SOLUTIONSsection}.
See \cite[Section~3.1]{poch2013a} for more details on related reductions. 
Note that any transformation from the restriction of~$G^\sim$ to the space $(t,x,u)$ 
pushes forward~$Q^1$ to an equivalent vector field differing from~$Q^1$ by a nonvanishing multiplier 
and preserves both the families of $Q^1$-invariant solutions.

\subsection{\texorpdfstring{Case $\boldsymbol{\xi^1=-\frac12}$}{Case xi1=-1/2}}

This value for $\xi^1$ is possible only if $f=\const$, hence the subproblem in question is equivalent to the
problem of finding reduction operators with $\xi^1=-\frac12$ for the classical Burgers equation.
Since the usual equivalence group contains scale transformations impacting the constant~$f$, we can
set $f=1$ modulo $G^\sim$, and thus obtain the Burgers equation
\[
\mathcal L_1\colon\quad L^1[u]:=u_t+uu_x+u_{xx}=0.
\]

The determining equations on the components
$\xi^0=\xi^0(t,x)$, $\eta^1=\eta^1(t,x)$ and $\eta^0=\eta^0(t,x)$
become
\begin{gather}\label{de_case12_1}
\begin{split}
&\xi^0_t +2\xi^0_x\xi^0 +\xi^0_{xx}-2\eta^1_x=0,
\\
&\eta^1_t+2\xi^0_x\eta^1+\eta^1_{xx}+\eta^0_x=0,
\\
&\eta^0_t+2\xi^0_x\eta^0+\eta^0_{xx}=0.
\end{split}
\end{gather}
Reduction operators in this case take the form
\begin{gather}\label{QforCase12}
Q=\p_t+\left(-\frac1{2}u+\xi^0\right)\p_x
+\left(\frac1{4}u^3-\frac{\xi^0}{2}u^2+\eta^1u+\eta^0\right)\p_u.
\end{gather}

As it was established in~\cite{arri2002a,mans1999a}, solving the system~\eqref{de_case12_1}
is equivalent to solving the system of three copies of the linear heat equation.
Therefore, it has been referred to as a~``no-go'' problem.
Using a~technique developed in~\cite{popo2008b},
we showed in~\cite{poch2013a} that this equivalence immediately follows from the fact
that $Q$ is a~reduction operator of the Burgers equation.
Moreover, we derived the representation of solutions of the system~\eqref{de_case12_1}
via solutions of the uncoupled system of three copies of the initial Burgers equation.
Below we briefly review results of~\cite{poch2013a}.
Note that the proofs given in \cite{arri2002a,mans1999a} did not use
the relation of the system~\eqref{de_case12_1} with reduction operators of the Burgers equation.

\begin{lemma}\label{Theorem_coeffs_via_v}
Any solution of the determining system~\eqref{de_case12_1}
on the coefficients of reduction operators of the form~\eqref{QforCase12} is represented as
\begin{equation}\label{BE_RO_case12_coeffs_via_v}
\xi^0=\frac{(W(\bar v))_x}{W(\bar v)},
\quad
\eta^1=\displaystyle\frac{|\bar v,\bar v_{xx},\bar v_{xxx}|}{W(\bar v)},
\quad
\eta^0=-2\displaystyle\frac{W(\bar v_x)}{W(\bar v)},
\end{equation}
where $\bar v=(v^1,v^2,v^3)$ is a~triple of linearly independent solutions of the heat equation $v_t+v_{xx}=0$,
$W(\bar v)=|\bar v,\bar v_x,\bar v_{xx}|$ and~$W(\bar v_x)=|\bar v_x,\bar v_{xx},\bar v_{xxx}|$
are the Wronskians of this triple
and the triple of the corresponding derivatives with respect to~$x$, respectively,
and $|\bar p,\bar q,\bar r|$ denotes
the determinant of the matrix constructed with ternary columns~$\bar p$, $\bar q$ and~$\bar r$.
Conversely, any triple~$(\xi^0,\eta^1,\eta^0)$ admitting the representation~\eqref{BE_RO_case12_coeffs_via_v}
satisfies the system~\eqref{de_case12_1}.
\end{lemma}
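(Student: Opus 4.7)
The plan is to exploit the Hopf--Cole linearization. Under the substitution $u = 2v_x/v$, solutions of the Burgers equation $\mathcal L_1$ correspond to solutions $v$ of the heat equation $v_t + v_{xx} = 0$. I would apply the same substitution to the invariant surface condition $u_t + (-\frac{1}{2}u + \xi^0)u_x = \frac{1}{4}u^3 - \frac{1}{2}\xi^0 u^2 + \eta^1 u + \eta^0$ associated with~\eqref{QforCase12}, and then use $u_t = -uu_x - u_{xx}$ (from $\mathcal L_1$) to eliminate $u_t$. I expect the cubic-in-$u$ contributions produced by Hopf--Cole to cancel precisely the Burgers nonlinearity, so that the whole condition collapses (after division by the common factor $2/v$) into the third-order linear ODE in~$x$,
\begin{equation*}
v_{xxx} - \xi^0 v_{xx} + \eta^1 v_x + \tfrac{1}{2}\eta^0 v = 0.
\end{equation*}
This identification is the key structural step: it turns the nonlinear problem of finding $Q$-invariant solutions of~$\mathcal L_1$ into the linear problem of finding heat solutions that lie in the three-dimensional kernel of this third-order operator in~$x$.

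For the forward direction, given $(\xi^0,\eta^1,\eta^0)$ satisfying~\eqref{de_case12_1}, the operator~$Q$ admits a two-parameter family of invariant solutions of~$\mathcal L_1$; via Hopf--Cole these lift to a three-parameter family of heat solutions (the extra dimension coming from the multiplicative freedom $v\mapsto cv$), and this family is precisely the joint kernel of $\partial_t + \partial_x^2$ and the above third-order operator. Picking a basis $\bar v = (v^1,v^2,v^3)$ and solving the linear $3{\times}3$ system $v^i_{xxx} - \xi^0 v^i_{xx} + \eta^1 v^i_x + \frac{1}{2}\eta^0 v^i = 0$ for the unknown coefficients $\xi^0,\eta^1,\eta^0$ by Cramer's rule then yields the Wronskian formulas~\eqref{BE_RO_case12_coeffs_via_v}. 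The compact form $\xi^0 = (W(\bar v))_x/W(\bar v)$ follows at once from the standard Jacobi identity $(W(\bar v))_x = |\bar v, \bar v_x, \bar v_{xxx}|$.

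For the converse, I would start from an arbitrary triple $\bar v$ of linearly independent solutions of the heat equation with $W(\bar v)\ne 0$, define $\xi^0,\eta^1,\eta^0$ by~\eqref{BE_RO_case12_coeffs_via_v}, and verify~\eqref{de_case12_1} directly. By construction each $v^i$ is annihilated by the third-order operator in~$x$; differentiating this identity in~$t$, replacing each $v^i_t$ by $-v^i_{xx}$ via the heat equation, and collecting powers of $\partial_x$ should produce, for $i=1,2,3$, a linear relation among $v^i$, $v^i_x$ and $v^i_{xx}$ whose coefficients are exactly the left-hand sides of~\eqref{de_case12_1}; since $W(\bar v)\ne 0$, these three relations force each such coefficient to vanish. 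The main obstacle is the algebraic bookkeeping in this compatibility calculation, where one must track that the coefficients which emerge are genuinely those of~\eqref{de_case12_1} and not some reshuffled combination. Systematic use of Jacobi's rule for differentiating Wronskian determinants, together with the explicit identity $(W(\bar v))_x = |\bar v,\bar v_x,\bar v_{xxx}|$ already used above, should keep the manipulation tractable.
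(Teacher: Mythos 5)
Your setup and forward direction coincide with the paper's: the same Hopf--Cole reduction of the pair $L^1[u]=0$, $Q[u]=0$ to the linear system \eqref{BE_case12_heat}--\eqref{BE_case12_vxxx}, the same observation that the two-parameter family of $Q$-invariant solutions corresponds to a three-dimensional space of heat solutions annihilated by the third-order operator in~$x$, and the same Cramer's-rule extraction of the Wronskian formulas~\eqref{BE_RO_case12_coeffs_via_v}. Where you genuinely diverge is the converse. The paper does \emph{not} verify \eqref{de_case12_1} by computation: it observes that if $(\xi^0,\eta^1,\eta^0)$ are given by \eqref{BE_RO_case12_coeffs_via_v}, then $\mathcal L_1$ admits the two-parameter family \eqref{EqTwoParamFamilyOfSolutionInTermsOfSolutionsOfLHE} of $Q$-invariant solutions, hence $Q$ is a reduction operator of~$\mathcal L_1$, and therefore its coefficients satisfy the determining system --- the whole point of the paper's argument (explicitly contrasted there with \cite{arri2002a,mans1999a}) is to bypass the compatibility calculation entirely by invoking this property of reduction operators. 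Your converse is instead the direct route of the earlier literature: differentiate $v^i_{xxx}-\xi^0v^i_{xx}+\eta^1v^i_x+\tfrac12\eta^0v^i=0$ in~$t$, eliminate $v^i_t$ by the heat equation and the higher $x$-derivatives by the third-order relation, and use $W(\bar v)\ne0$ to kill the resulting coefficients of $v^i_{xx}$, $v^i_x$, $v^i$. This is workable, but it carries exactly the burden you flag yourself: the three vanishing expressions emerge as an invertible recombination of the left-hand sides of \eqref{de_case12_1} (the coefficient of $v_{xx}$ gives the first equation, while those of $v_x$ and $v$ give the second and third only after subtracting multiples of the preceding ones), and that bookkeeping must actually be carried out for the proof to be complete. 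What the paper's route buys is the elimination of this computation at the cost of leaning on the equivalence ``$Q$ admits a two-parameter family of invariant solutions $\Leftrightarrow$ $Q$ is a reduction operator $\Leftrightarrow$ the determining system holds''; what your route buys is a self-contained, elementary verification that does not presuppose that equivalence.
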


The proof of the lemma is based on properties of reduction operators.
Considering $Q$-invariant solutions for an operator~$Q$ of the form~\eqref{QforCase12},
we solve the system of two equations $L^1[u]=0$ and $Q[u]=0$,
recombined for convenience as $L^1[u]=0$, $L^1[u]+Q[u]=0$.
The Hopf--Cole transformation $u=2v_x/v$ maps this system to the linear system
\begin{gather}
v_t+v_{xx}=0,
\label{BE_case12_heat}
\\
v_{xxx}-\xi^0v_{xx}+\eta^1v_x+\frac12\eta^0v=0.
\label{BE_case12_vxxx}
\end{gather}
Since the family of $Q$-invariant solutions of~$\mathcal L_1$ is two-parameter,
the space of solutions of the system~\eqref{BE_case12_heat}--\eqref{BE_case12_vxxx}
with respect to~$v$ is three-dimensional.
Let the functions $v^i=v^i(t,x)$, $i=1,2,3$, constitute a~basis of this space.
Then the family of $Q$-invariant solutions of~$\mathcal L_1$ is represented as
\begin{gather}\label{EqTwoParamFamilyOfSolutionInTermsOfSolutionsOfLHE}
u=2\frac{c_1v^1_x+c_2v^2_x+c_3v^3_x}{c_1v^1+c_2v^2+c_3v^3},
\end{gather}
where only two of the constants $c_1$, $c_2$, $c_3$ are essential.

Now take three arbitrary linearly independent solutions $v^i$ of the heat equation~\eqref{BE_case12_heat}
and substitute each of them into~\eqref{BE_case12_vxxx}.
Solving the obtained system of three copies of~\eqref{BE_case12_vxxx} as
a~system of linear algebraic equations on~$\xi^0$, $\eta^0$ and~$\eta^1$,
we derive the representation~\eqref{BE_RO_case12_coeffs_via_v}.

Conversely, if the coefficients~$\xi^0$, $\eta^0$ and~$\eta^1$
are of the form~\eqref{BE_RO_case12_coeffs_via_v},
then the equation~$\mathcal L_1$ admits
a~two-parametric family~\eqref{EqTwoParamFamilyOfSolutionInTermsOfSolutionsOfLHE}
of $Q$-invariant solutions.
Hence the vector field~$Q$ is a~reduction operator of~$\mathcal L_1$,
which completes the proof of Lemma~\ref{Theorem_coeffs_via_v}.

Lemma~\ref{Theorem_coeffs_via_v} and the Hopf--Cole transformation
allow us to represent the coefficients~$\xi^0$, $\eta^0$ and~$\eta^1$
in terms of three solutions of the classical Burgers equation,
\begin{gather}
\xi^0=\frac1{2}
\frac{|\bar e,\bar u,\bar z|}
{|\bar e,\bar u,\bar y|},
\quad
\eta^1=\frac1{4}
\frac{|\bar e,\bar y,\bar z|}
{|\bar e,\bar u,\bar y|},
\quad
\eta^0=-\frac1{4}
\frac{|\bar u,\bar y,\bar z|}
{|\bar e,\bar u,\bar y|},
\end{gather}
where the columns~$\bar e$, $\bar y$ and~$\bar z$ consist of
$e^i=1$, $y^i=2u^i_x+(u^i)^2$ and~$z^i=4u^i_{xx}+6u^iu^i_x+(u^i)^3$, respectively, $i=1,2,3$,
and $\bar u$ is a~column of three solutions of the Burgers equation with $|\bar e,\bar u,\bar y|\ne0$.
This results in \emph{item~4} of Theorem~\ref{Theorem_RO_subsets}.
The associated nonclassical reductions were discussed in~\cite[Section~3.3]{poch2013a}.

\subsection{\texorpdfstring{Case $\boldsymbol{\xi^1=0}$ with $\boldsymbol{\xi^0_{xx}\ne0}$}{Case xi1=0 with xi0xx not=0}}

The condition $\xi^1=0$  implies that $\xi=\xi^0(t,x)$ and $\eta=\eta^1(t,x)u+\eta^0(t,x)$.
Substituting these expression into the system~\eqref{de7}--\eqref{GBEROde}
and splitting~\eqref{de7} with respect to~$u$,
we obtain $\eta^1_x=0$, i.e.\ $\eta^1=\eta^1(t)$, and
\begin{gather}
\eta^0_x+\eta^1\xi^0_x=(\eta^1)^2-\eta^1_t,\label{case0neqdeq1}
\\
f_t+f_x\xi^0-f(\eta^1+\xi^0_x)=0,\label{case0neqdeq2}
\\
\xi^0_t+\xi^0\xi^0_x+f\xi^0_{xx}=\eta^0+\eta^1\xi^0,\label{case0neqdeq3}
\\
\eta^0_t+\eta^0\xi^0_x+f\eta^0_{xx}=\eta^1\eta^0.\label{case0neqdeq4}
\end{gather}

\begin{lemma}
For the case ${\xi^1=0}$, ${\xi^0_{xx}\ne0}$
the component~$\eta$ vanishes modulo~$G^\sim$.
\end{lemma}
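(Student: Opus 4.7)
The plan is to eliminate $\eta^0$ via the first determining equation, extract a single compatibility condition on $\eta^1$ and $\xi^0$, and then invoke the projective part of the equivalence group to absorb the remaining freedom.

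First, integrating~\eqref{case0neqdeq1} in $x$ yields
\[
\eta^0=-\eta^1(t)\,\xi^0+A(t)\,x+B(t),\qquad A:=(\eta^1)^2-\eta^1_t,
\]
with $B(t)$ an integration constant, so that~\eqref{case0neqdeq3} simplifies to $\xi^0_t+\xi^0\xi^0_x+f\xi^0_{xx}=A(t)x+B(t)$. Substituting both expressions into~\eqref{case0neqdeq4}, using the simplified form of~\eqref{case0neqdeq3}, and collecting terms produces the single compatibility condition
\[
(Ax+B)(\xi^0_x-2\eta^1)+A\xi^0+A_tx+B_t=0. \quad (\ast)
\]

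Differentiating~$(\ast)$ twice in $x$ gives $3A\xi^0_{xx}+(Ax+B)\xi^0_{xxx}=0$, and the hypothesis $\xi^0_{xx}\ne 0$ splits the argument into the two branches $A\equiv 0$ and $A\ne 0$. In the branch $A\equiv 0$, the Riccati relation $(\eta^1)^2=\eta^1_t$ forces $\eta^1\in\{0,\,-1/(t-t_0)\}$. Equation~$(\ast)$ reduces to $B(\xi^0_x-2\eta^1)+B_t=0$, and one more differentiation in $x$ together with $\xi^0_{xx}\ne 0$ gives $B\equiv 0$. For $\eta^1\equiv 0$ the integrated formula for $\eta^0$ yields $\eta^0\equiv 0$, hence $\eta=0$. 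For $\eta^1=-1/(t-t_0)$, a direct computation from the transformation law for~$\hat\Pi$ in~\eqref{elemG1} shows that $\hat\Pi(-1/t_0)$ sends $\eta^1$ to zero, reducing this subcase to the previous one.

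In the branch $A\ne 0$, integrating $3A\xi^0_{xx}+(Ax+B)\xi^0_{xxx}=0$ twice in $x$ and substituting back into the once-differentiated form of~$(\ast)$ gives
\[
\xi^0=\frac{c(t)}{2A^2(Ax+B)}+D(t)\,x+E(t),\qquad D=\eta^1-\frac{A_t}{2A},\qquad c(t)\ne 0,
\]
after which~$(\ast)$ collapses to the single ODE $B_t-B(\eta^1+A_t/(2A))+AE=0$. The remaining determining equation~\eqref{case0neqdeq2} then fixes the dependence of $f$ on $\xi^0$ and rigidifies the freedom in $(\eta^1,B,E,c)$: it restricts $\eta^1$ to a rational family in $t$ parameterized by at most three constants. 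Applying the projective subgroup $\langle\hat P^t,\hat D^t,\hat\Pi\rangle$ of~$G^\sim$ (which acts on $\eta^1$ by the Möbius action on Riccati solutions), together with $\hat G$, $\hat P^x$ and $\hat D^x$ (which absorb $B$, $E$ and rescale $c$), sets $\eta^1\equiv 0$; the first branch then forces $\eta^0\equiv 0$, so $\eta=0$ modulo~$G^\sim$.

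The main obstacle is the branch $A\ne 0$: one must carefully track the constraint from~\eqref{case0neqdeq2} to verify that the finite-dimensional $G^\sim$-action is exactly rich enough to eliminate $\eta^1$. The resolution rests on the observation that the three-parameter ${\rm sl}(2,\mathbb R)$-subalgebra of $\mathfrak g^\sim$ corresponds precisely to the three-parameter family of admissible $\eta^1$'s, so every solution can be brought to $\eta^1\equiv 0$ by an equivalence transformation.
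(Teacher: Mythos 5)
Your reduction of the system to the single compatibility condition $(\ast)$, your branch split on $A:=(\eta^1)^2-\eta^1_t$, and your treatment of the branch $A\equiv0$ (forcing $B\equiv0$ from $B\xi^0_{xx}=0$, solving the Riccati equation $\eta^1_t=(\eta^1)^2$, and removing the solution $\eta^1=-1/(t-t_0)$ by a projective equivalence transformation) all agree with the paper's argument. The problem is the branch $A\ne0$. There you correctly obtain $\xi^0=\frac{c(t)}{2A^2(Ax+B)}+D(t)x+E(t)$ with $c\ne0$, but then you replace the decisive computation by an assertion: that~\eqref{case0neqdeq2} merely ``restricts $\eta^1$ to a rational family parameterized by at most three constants'' which the ${\rm sl}(2,\mathbb R)$ part of $G^\sim$ can absorb. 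This is not proved, and it is not what actually happens. Equation~\eqref{case0neqdeq3} in the form $\xi^0_t+\xi^0\xi^0_x+f\xi^0_{xx}=Ax+B$ determines $f$ explicitly as $(Ax+B-\xi^0_t-\xi^0\xi^0_x)/\xi^0_{xx}$, which is a polynomial of degree four in $(Ax+B)$ whose constant term is $\mu/(2A)$ with $\mu:=c/(2A^2)$. Substituting this $f$ and the above $\xi^0$ into~\eqref{case0neqdeq2} yields a Laurent polynomial in $(Ax+B)$ that must vanish identically; its coefficient at the lowest power $(Ax+B)^{-2}$ equals $\mu^2/2$, forcing $\mu=0$, i.e.\ $c=0$, which contradicts $\xi^0_{xx}\ne0$. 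So the branch $A\ne0$ is empty rather than normalizable, and the lemma holds there vacuously.

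Concretely, two things go wrong in your final step. First, the dimension-counting heuristic (``three parameters in $\eta^1$ versus three parameters in ${\rm sl}(2,\mathbb R)$'') is offered in place of a verification; even granting the heuristic, matching dimensions does not show that the orbit of the $G^\sim$-action actually reaches $\eta^1\equiv0$. Second, the claim is internally strained: if some equivalence transformation did send $\eta^1$ to zero, the transformed solution would satisfy $A=0$, so the branches $A=0$ and $A\ne0$ could not both be nonempty and $G^\sim$-stable — a warning sign that the second branch needs to be shown either empty or explicitly mapped into the first. The fix is to carry out the substitution into~\eqref{case0neqdeq2} and extract the coefficient of $(Ax+B)^{-2}$ as above; once you have $\mu=0$ and hence the contradiction with $\xi^0_{xx}\ne0$, your branch $A\equiv0$ already completes the proof.
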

\begin{proof}
For convenience we denote $\alpha:=(\eta^1)^2-\eta^1_t$.
The equation~\eqref{case0neqdeq1} can be easily integrated with respect to~$x$,
\begin{gather}
\label{eq_on_alpha_beta}
\eta^0+\eta^1\xi^0=\alpha x+\beta,
\end{gather}
where an arbitrary function $\beta=\beta(t)$ arises in the course of the integration.
After eliminating~$\eta^0$ in view of~\eqref{eq_on_alpha_beta} and recombining,
the equations~\eqref{case0neqdeq3} and~\eqref{case0neqdeq4} take the form
\begin{gather}
\label{eq_on_xi0}
\xi^0_t+\xi^0\xi^0_x+f\xi^0_{xx}=\alpha x+\beta,
\\
\label{eq_on_alpha}
\left((\alpha x+\beta)\xi^0\right)_x=2\eta^1(\alpha x+\beta)-(\alpha_tx+\beta_t).
\end{gather}

In fact, $\alpha=0$ and $\beta=0$
for any solution of the system~\eqref{case0neqdeq1}--\eqref{case0neqdeq4}.
Indeed, suppose that $\alpha\ne0$.
Integrating~\eqref{eq_on_alpha} with respect to~$x$, we get
\begin{gather*}
\xi^0=\dfrac{\left(\alpha\eta^1-\frac12\alpha_t\right)x^2+(2\eta^1\beta-\beta_t)x+\delta}
{\alpha x+\beta}=\gamma^1x+\gamma^0+\frac{\mu}{\alpha x+\beta},
\end{gather*}
where $\delta$ is an arbitrary function of~$t$ that arises in the course of integration, and
\begin{gather*}
\gamma^1=\eta^1-\frac{\alpha_t}{2\alpha},
\qquad
\gamma^0=\frac{\eta^1\beta-\beta_t}{\alpha}+\frac{\alpha_t\beta}{2\alpha^2},
\qquad
\mu=\delta+\frac{\beta}{\alpha}(\beta_t-\eta^1\beta)-\frac{\alpha_t\beta^2}{2\alpha^2}.
\end{gather*}
Then the equation~\eqref{eq_on_xi0} gives an expression for~$f$,
which is a~polynomial in~$(\alpha x+\beta)$,
and the coefficient of~$(\alpha x+\beta)^0$ is $\mu/(2\alpha)$.
Substituting the expressions for~$\xi^0$ and~$f$ into~\eqref{case0neqdeq2},
we derive the condition that a~polynomial in $(\alpha x+\beta)$ and $(\alpha x+\beta)^{-1}$
with coefficients depending on~$t$ identically equals zero.
This means that all the coefficients of this polynomial vanish.
In~particular, the coefficient of the lowest power~$(\alpha x+\beta)^{-2}$ is $\mu^2/2$.
Hence $\mu=0$ and then
we have $\xi^0=\gamma^1x+\gamma^0$, which contradicts the assumption $\xi^0_{xx}\ne0$.

Knowing that $\alpha=0$, we differentiate the equation~\eqref{eq_on_alpha} with respect to~$x$
and obtain $\beta\xi^0_{xx}=0$.
Therefore, $\beta=0$ in view of the assumption $\xi^0_{xx}\ne0$.

In view of the notation of $\alpha$ and the equation~\eqref{eq_on_alpha_beta},
the condition $\alpha=\beta=0$ implies that $\eta^1_t=(\eta^1)^2$ and $\eta^0=-\eta^1\xi^0$.
The solutions of the equation $\eta^1_t=(\eta^1)^2$ are $\eta^1=0$ and $\eta^1=-(t+c)^{-1}$,
where $c$ is an integration constant.
The second value of $\eta^1$ can be set to zero using the equivalence transformation
$\tilde t=-(t+c)^{-1}$,
$\tilde x=x(t+c)^{-1}$,
$\tilde u=(t+c)u-x$,
$\tilde f=f$.
\end{proof}

As $\eta=0$, the system of determining equations~\eqref{case0neqdeq1}--\eqref{case0neqdeq4}
reduces to the system
\begin{gather}
f_t+\xi f_x-\xi_xf=0,\label{PFDEcase_deq1}
\\
\xi_t+\xi\xi_x+f\xi_{xx}=0,\label{PFDEcase_deq2}
\end{gather}
which is well determined as it consists of two differential equations in two unknown functions, $f=f(t,x)$ and $\xi=\xi^0(t,x)$,
and definitely has no nontrivial differential consequences.

We write the equation~\eqref{PFDEcase_deq1} in conserved form, $(1/f)_t+(\xi/f)_x=0$,
and use it to introduce the potential $\theta=\theta(t,x)$
which is defined by the system $\theta_x=-1/f$, $\theta_t=\xi/f$, 
and thus $f=-1/\theta_x$, $\xi=-\theta_t/\theta_x$.
Substituting these expressions for~$f$ and~$\xi$ into~\eqref{PFDEcase_deq2} and integrating,
we obtain
\begin{gather}\label{GFDE}
\theta_t=\frac{\theta_{xx}}{\theta_x}+h(\theta)\theta_x,
\end{gather}
where~$h$ is an arbitrary smooth function of~$\theta$~\cite{poch2012a}.
As a~result, the system~\eqref{PFDEcase_deq1}--\eqref{PFDEcase_deq2}
reduces to the equation~\eqref{GFDE}.
For any~$h$ and for an arbitrary solution $\theta$ of the equation~\eqref{GFDE}, the vector field
\begin{gather*}
Q^\theta=\p_t-\frac{\theta_t}{\theta_x}\p_x
\end{gather*}
is a~reduction operator of the equation~$\mathcal L_{f^\theta}$ with $f^\theta=-1/\theta_x$,
which proves \emph{item~3} of Theorem~\ref{Theorem_RO_subsets}.

The impossibility of complete explicit description of reduction operators in this case,
which is equivalent to the problem of finding the general solution of the system~\eqref{PFDEcase_deq1}--\eqref{PFDEcase_deq2},
was signed out in~\cite{poch2012a}.
Nevertheless, new solutions of equations from the class~\eqref{GBE}
were constructed ibid by nonclassical reduction
via establishing the connection between
the system~\eqref{PFDEcase_deq1}--\eqref{PFDEcase_deq2}
and the potential fast diffusion equation~\eqref{GFDE} with $h=0$
and using a~number of already known exact solutions of the latter equation~\cite{popo2007a}.

\subsection{\texorpdfstring{Case $\boldsymbol{\xi^1=0}$ with $\boldsymbol{\xi^0_{xx}=0}$}{Case xi1=0 with xi0xx=0}}

We prove that in this case reduction operators are equivalent to Lie symmetry operators.
This assertion was first stated in~\cite[Section~2]{poch2012a}.
Under the condition $\xi^0_{xx}=0$, the equation~\eqref{case0neqdeq1} implies $\eta^0_{xx}=0$.
Hence $\xi(t,x)=\xi^0(t,x)=\xi^{01}(t)x+\xi^{00}(t)$ and $\eta(t,x)=\eta^1(t)u+\eta^{01}(t)x+\eta^{00}(t)$.
In terms of the new parameter-functions, the determining equations~\eqref{case0neqdeq1}--\eqref{case0neqdeq4} take the form
\begin{gather}
\xi^{01}_t=(\eta^1-\xi^{01})\xi^{01}+\eta^{01},\label{firsteq}
\\
\xi^{00}_t=(\eta^1-\xi^{01})\xi^{00}+\eta^{00},\label{xiunusedeq}
\\
\eta^{1\phantom0}_t=(\eta^1-\xi^{01})\eta^1-\eta^{01},\label{thirdeq}
\\
\eta^{01}_t=(\eta^1-\xi^{01})\eta^{01},\label{fourtheq}
\\
\eta^{00}_t=(\eta^1-\xi^{01})\eta^{00},\label{etaunusedeq}
\\
\eta^1=\frac{f_t}{f}+\frac{f_x}{f}\left(\xi^{01}x+\xi^{00}\right)-\xi^{01}.\label{classifeq}
\end{gather}
The next step is to modify the determining equations by the substitution $\xi^{01}=\varphi_t/\varphi$ and $\eta^1=-\psi_t/\psi$,
where $\varphi$ and $\psi$ are smooth functions of~$t$ with $\varphi\psi\ne0$.
Specifically, from~\eqref{fourtheq} we get the equation
\begin{gather*}
\dfrac{\eta^{01}_t}{\eta^{01}}+\frac{\varphi_t}{\varphi}+\frac{\psi_t}{\psi}=0,
\end{gather*}
which integrates to the condition $\eta^{01}\varphi\psi=a_0=\const$ and thus implies
\begin{gather}
\label{GBE_case00_26}
\eta^{01}=\frac{a_0}{\varphi\psi}.
\end{gather}
Substituting the expression~\eqref{GBE_case00_26} for $\eta^{01}$
into the modified equations~\eqref{firsteq} and~\eqref{thirdeq},
we derive $\varphi_{tt}\psi+\varphi_t\psi_t=a_0$ and $\varphi\psi_{tt}+\varphi_t\psi_t=a_0$,
or, after integration,
\begin{gather}
\label{GBE_case00_2728}
\varphi_t\psi=a_0t+a_1,
\qquad
\varphi\psi_t=a_0t+a_2,
\end{gather}
respectively.
The sum of the equations~\eqref{GBE_case00_2728} is directly integrated to
\begin{gather}
\label{GBE_case00_29}
\varphi\psi=a_0t^2+(a_1+a_2)t+a_3,
\end{gather}
where $a_1$, $a_2$ and $a_3$ are arbitrary constants
with $(a_0,a_1+a_2,a_3)\ne(0,0,0)$ since $\varphi\psi\ne0$.
Then the equation~\eqref{GBE_case00_26} leads to
\begin{gather*}
\eta^{01}=\dfrac{a_0}{a_0t^2+(a_1+a_2)t+a_3}.
\end{gather*}
Dividing each of the equations~\eqref{GBE_case00_2728} by~\eqref{GBE_case00_29} we obtain
\begin{gather*}
\xi^{01}=\dfrac{\varphi_t}{\varphi}=\frac{a_0t+a_1}{a_0t^2+(a_1+a_2)t+a_3},
\qquad
\eta^1=-\dfrac{\psi_t}{\psi}=-\frac{a_0t+a_2}{a_0t^2+(a_1+a_2)t+a_3}.
\end{gather*}
We sequentially integrate the two still unused equations~\eqref{etaunusedeq} and~\eqref{xiunusedeq} to derive
\begin{gather*}
\eta^{00}=\dfrac{a_4}{a_0t^2+(a_1+a_2)t+a_3},
\qquad
\xi^{00}=\dfrac{a_4t+a_5}{a_0t^2+(a_1+a_2)t+a_3}
\end{gather*}
with two more integration constants $a_4$ and $a_5$.

Finally, in view of the obtained representations of $\xi$ and $\eta$ we have the family of vector fields
\begin{gather*}
Q^a=\partial_t+\frac{(a_0t+a_1)x+a_4t+a_5}{a_0t^2+(a_1+a_2)t+a_3}\partial_x
+\frac{-(a_0t+a_2)u+a_0x+a_4}{a_0t^2+(a_1+a_2)t+a_3}\partial_u,
\end{gather*}
where the arbitrary constant tuple $a=(a_0,\ldots,a_5)$ is defined up to a~nonzero multiplier,
and $(a_0,a_1+a_2,a_3)\ne(0,0,0)$.
Each $Q^a$ is equivalent (up to multiplication by nonvanishing function) to the vector field
\begin{equation*}
\tilde Q^a=\big(a_0t^2+(a_1{+}a_2)t+a_3\big)\partial_t+\big((a_0t{+}a_1)x+a_4t+a_5\big)\partial_x+\big({-}(a_0t{+}a_2)u+a_0x+a_4\big)\partial_u.
\end{equation*}
Note that $\tilde Q^a$ is of the general form obtained for Lie symmetry operators in Section~\ref{CSdeteqSECTION}.
For $Q^a$ to be a~reduction operator of~$\mathcal L_f$ with certain~$f$,
its coefficients should additionally satisfy the equation~\eqref{classifeq}.
This is equivalent to the fact that
the components of $\tilde Q^a$ satisfy the classifying condition~\eqref{CSCC} with the same~$f$.
In other words, the vector field $Q^a$ is a~reduction operator of
an equation~$\mathcal L_f$ if and only if
the equivalent vector field $\tilde Q^a$ is the Lie symmetry generator of the same equation,
which results in \emph{item~2} of Theorem~\ref{Theorem_RO_subsets}.

Lie reductions and Lie solutions of equations from the class~\eqref{GBE}
were considered in Section~\ref{SOLUTIONSsection}.

\section{Conservation laws and potential admissible transformations}\label{CLPSsection}

Given a~generalized Burgers equation~$\mathcal L_f$ of the form~\eqref{GBE},
in order to study its (local) conservation laws
we use the common technique based on the notion of characteristic~\cite{olve1993b}
which was developed by Vinogradov~\cite{vino1984a}.
As this equation is a~second-order quasilinear evolution equation,
it is sufficient to consider the characteristics of conservation laws that depend only on~$t$, $x$ and~$u$,
$\lambda=\lambda(t,x,u)$; see~\cite[Corollary 2]{popo2008d}.

The necessary condition for a~conservation-law characteristic $\lambda$ of~$\mathcal L_f$ to be its co-symmetry leads to the determining equation
$-{\rm D}_t\lambda+{\rm D}_x^2(f\lambda)+u_x\lambda-{\rm D}_x(u\lambda)=0$ holding on solutions of~$\mathcal L_f$.
After the substitution $u_t=-uu_x-fu_{xx}$ we split this determining equation with respect to $u_{xx}$, which gives $\lambda_u=0$
and then simplifies it to $-\lambda_t+(f\lambda)_{xx}-u\lambda_x=0$.
Further splitting with respect to~$u$ results in $\lambda_x=0$ and $f_{xx}\lambda=\lambda_t$.
Hence $f_{xxx}\lambda=0$.
This means that the equation~$\mathcal L_f$ possesses nontrivial conservation laws if and only if $f_{xxx}=0$,
i.e.\
\begin{gather}\label{fquadratic}
f=f^2(t)x^2+f^1(t)x+f^0(t),
\end{gather}
where the coefficients $f^2$, $f^1$ and $f^0$ are smooth functions of~$t$ not vanishing simultaneously.
Then
\begin{gather*}
\lambda=\lambda(t)=e^{\int f_{xx}\d t}
\end{gather*}
is a~unique linearly independent characteristic of the equation~$\mathcal L_f$. 
Hereinafter, an integral with respect to~$t$ means a~fixed antiderivative. 
In other words, the space of conservation laws of~$\mathcal L_f$ is one-dimensional.
After multiplying by $\lambda$ the equation~$\mathcal L_f$ can be rewritten in the conserved~form
\begin{gather*}
{\rm D}_t\left(\lambda u\right)+\lambda {\rm D}_x\left(\frac12u^2+fu_x-f_xu\right)=0.
\end{gather*}
Using this form of~$\mathcal L_f$, we introduce the potential $v=v(t,x)$ defined by the system
\begin{gather}\label{GBE_PotentialSystem}
v_x=\lambda u,
\qquad
v_t=-\lambda\left(\frac12u^2+fu_x-f_xu\right).
\end{gather}
which is called a~potential system for~$\mathcal L_f$ and is denoted by~$\mathcal R_{f,\lambda}$.
After excluding the dependent variable~$u$ from~$\mathcal R_{f,\lambda}$,
we obtain the associated potential equation~$\mathcal P_{f,\lambda}$ for~$\mathcal L_f$,
\begin{gather}\label{GBE_PotentialEquation}
v_t+\frac1{2\lambda}v_x^{\,\,2}+fv_{xx}-f_xv_x=0
\quad
\text{with}
\quad
f_{xxx}=0, \quad \lambda_t=f_{xx}\lambda,\quad f\lambda\ne0.
\end{gather}

Since it is impossible to choose a canonical representative 
in a one-dimensional space of conservation-law characteristics 
of the equation~$\mathcal L_f$ with general~$f$ quadratic in~$x$, 
in the course of the consideration 
of the class~\eqref{GBE_PotentialSystem} of potential systems,~$\mathcal R_{f,\lambda}$, 
and of the class~\eqref{GBE_PotentialEquation} of potential equations,~$\mathcal P_{f,\lambda}$, 
it is necessary to extend the arbitrary element with~$\lambda$.
The set~$\breve{\mathcal S}$ run by the extended arbitrary element~$(f,\lambda)$ 
is defined by the auxiliary system $f_{xxx}=0$, $\lambda_t=f_{xx}\lambda$, $f\lambda\ne0$. 
Potential systems~$\mathcal R_{f,\lambda}$ and~$\mathcal R_{f,\smash{\tilde\lambda}}$ 
(resp.\ potential equations~$\mathcal P_{f,\lambda}$ and~$\mathcal P_{f,\smash{\tilde\lambda}}$) 
with linearly dependent~$\lambda$ and~$\tilde\lambda$ are assumed to be gauge-equivalent 
in the sense of potential systems (resp.\ potential equations). 
Therefore, we associate the single equation~$\mathcal L_f$ 
with the set of gauge-equivalent potential systems $\{\mathcal R_{f,\lambda}\}$ and
with the set of gauge-equivalent potential equations~$\{\mathcal P_{f,\lambda}\}$.

We can also study potential conservation laws of the equation~$\mathcal L_f$, which are
local conservation laws of the corresponding potential system~$\mathcal R_{f,\lambda}$,
or, equivalently, the potential equation~$\mathcal P_{f,\lambda}$.%
\footnote{%
There exists an isomorphism between
the spaces of conservation laws of the equation~\eqref{GBE_PotentialEquation}
and of the system~\eqref{GBE_PotentialSystem}.
}
Let $\mu$ be the reduced characteristic of a~local conservation law for the potential equation~$\mathcal P_{f,\lambda}$.
Then $\mu=\mu(t,x,v)$ \cite[Corollary 2]{popo2008d}.
We write the equation
$-{\rm D}_t\mu+{\rm D}_x^2(f\mu)-{\rm D}_x((\frac{1}{\lambda}v_x-f_x)\mu)=0$
holding on solutions of~$\mathcal P_{f,\lambda}$
and solve it with respect to $\mu$,
which gives only zero solutions if $f\ne\const$.
As a~result, we conclude that
the equation~$\mathcal L_f$ with $f\ne\const$ has no nonzero potential conservation laws,
and thus $\mathcal R_{f,\lambda}$ is the only canonical potential system for~$\mathcal L_f$.
The classical Burgers equation (for which $f=\const$) possesses potential conservation laws,
which are local conservation laws of the corresponding potential equation~$\mathcal P_{f,\lambda}$
with reduced characteristics of the form $\mu=\psi(t,x)e^{\frac{v}{2f}}$,
where $\psi(t,x)$ is an arbitrary solution of the linear heat equation $\psi_t=f\psi_{xx}$.
Since the potential Burgers equation~$\mathcal P_{f,\lambda}$
is similar to the linear heat equation,
the classical Burgers equation admits no higher-level potential conservation laws
\mbox{\cite[Theorem~5]{popo2008a}}.

Therefore, the class~\eqref{GBE}$_{f_{xxx}=0}$ of equations of the form~\eqref{GBE} with~$f$ quadratic in~$x$
is naturally partitioned into the subclasses 
%$\mathcal L^1=\{\mathcal L_f \mid f=\const\}$ and $\mathcal L^2=\{\mathcal L_f \mid f_{xxx}=0,\, f\ne\const\}$.
\[\mathcal L^1=\{\mathcal L_f \mid f=\const\} \quad\mbox{and}\quad \mathcal L^2=\{\mathcal L_f \mid f_{xxx}=0,\, f\ne\const\}.\]
Both these subclasses are closed under the action of the equivalence group $G^\sim$
and, therefore, are normalized as well as the class~\eqref{GBE}
in view of~\cite[Proposition~4]{popo2010a}.

Admissible transformations and the equivalence groupoid
of the class of potential systems~\eqref{GBE_PotentialSystem}
can be called \emph{potential admissible transformations}
and the \emph{potential equivalence groupoid} of the class~\eqref{GBE}$_{f_{xxx}=0}$,
respectively.

\begin{lemma}\label{lem:PotAdmTransOfGBEs}
The equivalence groupoids of the class of potential systems~\eqref{GBE_PotentialSystem}
and of the class of potential equations~\eqref{GBE_PotentialEquation} are isomorphic.
\end{lemma}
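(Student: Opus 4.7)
The plan is to exhibit mutually inverse bijections between the equivalence groupoids $\mathcal G^\sim(\{\mathcal R_{f,\lambda}\})$ and $\mathcal G^\sim(\{\mathcal P_{f,\lambda}\})$, and to verify that they preserve composition of admissible transformations. The basic observation driving the construction is that on solutions the variable~$u$ is redundant in the potential system: $u=v_x/\lambda$ with $\lambda=\lambda(t)$, so the potential system is nothing but the potential equation augmented by an algebraic definition of~$u$ in terms of a first derivative of~$v$.

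First I would construct the forward map. Given an admissible transformation $(f,\lambda,\Phi,\tilde f,\tilde\lambda)$ of the class~\eqref{GBE_PotentialSystem}, where $\Phi\colon(t,x,u,v)\mapsto(\tilde t,\tilde x,\tilde u,\tilde v)$, the key step is to prove that $\tilde t$, $\tilde x$ and~$\tilde v$ are independent of~$u$, i.e.\ that $\Phi$ is projectable to the space $(t,x,v)$. To this end one substitutes the tilded variables and their derivatives expressed via the untilded ones into the transformed potential system, uses the identities $v_x=\lambda u$, $v_t=-\lambda(\tfrac12u^2+fu_x-f_xu)$ to eliminate $v_x$, $v_t$ and $u_t$, and then splits the resulting equations with respect to the derivatives of~$u$ that remain formally independent (notably~$u_x$). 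The splitting produces determining equations whose lowest-order part forces $\tilde t_u=\tilde x_u=\tilde v_u=0$, leaving $\tilde u$ as the only component that may depend on~$u$; moreover, the component $\tilde u$ is then uniquely determined by the identity $\tilde u=\tilde v_{\tilde x}/\tilde\lambda$, which reduces to an expression algebraic in $u$ via the chain rule applied to the projectable part of~$\Phi$. The projection $\Phi|_{(t,x,v)}=(T,X,V)$ is then checked to map $\mathcal P_{f,\lambda}$ to $\mathcal P_{\tilde f,\tilde\lambda}$ by direct substitution, which simply amounts to eliminating~$u$ on both sides of the already verified intertwining relation.

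Next I would construct the inverse map. Given an admissible transformation $(f,\lambda,\Psi,\tilde f,\tilde\lambda)$ of the class~\eqref{GBE_PotentialEquation}, $\Psi\colon(t,x,v)\mapsto(T(t,x,v),X(t,x,v),V(t,x,v))$, I extend it to the space $(t,x,u,v)$ by adjoining the rule
\begin{equation*}
\tilde u=\frac{1}{\tilde\lambda(T)}\,\frac{\mathrm D_x V}{\mathrm D_x X}\bigg|_{v_x=\lambda u},
\end{equation*}
where the substitution~$v_x=\lambda u$ is applied after the chain-rule expansion so that the result is free of derivatives and depends only on $(t,x,u,v)$. One must verify that the extended map is a genuine point transformation (nondegeneracy of the Jacobian) and that it sends solutions of $\mathcal R_{f,\lambda}$ to solutions of $\mathcal R_{\tilde f,\tilde\lambda}$; this reduces to checking the $v_t$-equation of the target system, which follows from the potential equation satisfied by~$\tilde v$ together with the already-ensured transformation of the $v_x$-equation.

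Finally, the two maps are inverse to each other by construction (the forward map forgets~$\tilde u$, the backward map restores the unique $\tilde u$ compatible with the potential-system structure) and both are compatible with compositions, since composition on the $(t,x,v)$-level commutes with the $\tilde u$-defining formula via the chain rule. Consequently the two groupoids are isomorphic. The main obstacle is the projectability claim in Step~1; all the subsequent verifications are mechanical chain-rule computations once projectability is established.
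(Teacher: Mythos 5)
Your overall architecture --- mutually inverse maps given by prolongation to $u$ via $u=v_x/\lambda$ in one direction and projection to the space of $(t,x,v)$ in the other --- matches the paper's, and the easy direction (prolonging a point transformation between $\mathcal P_{f,\lambda}$ and $\mathcal P_{\tilde f,\tilde\lambda}$ to a point transformation between the systems) is handled identically. Where you genuinely diverge is in the hard direction, the projectability of a point transformation $\varphi$ between systems $\mathcal R_{f,\lambda}$ and $\mathcal R_{\tilde f,\tilde\lambda}$ to the space of $(t,x,v)$. You propose to establish $T_u=X_u=V_u=0$ by expanding the transformed system on the solution manifold and splitting with respect to the parametric derivatives; the paper instead substitutes $u=v_x/\lambda$ into the components $T$, $X$, $V$, observes that the result is a \emph{contact} transformation between the potential equations, and invokes \cite[Proposition~2]{popo2008d} (any contact transformation between second-order evolution equations linear in the second derivative is the prolongation of a point transformation) to conclude that the $v_x$-dependence, hence the $u$-dependence, is trivial. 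The paper's route buys a one-line argument with no splitting computation; your route is self-contained but leaves the crux as an asserted outcome of a computation you have not carried out, and for a system of two equations in two dependent variables that splitting (with the first-derivative components obtained from the full $2\times2$ Jacobian relations, not from single-variable chain rules) is not a triviality. As written, your plan is a plausible but unverified substitute for the one step that actually carries the weight of the lemma.

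One further technical slip: in your backward map you define $\tilde u$ by $\tilde\lambda^{-1}\,\mathrm D_xV/\mathrm D_xX$. This formula for $\tilde v_{\tilde x}$ is valid only when $\mathrm D_xT=0$, i.e.\ when $T$ is independent of $x$ and $v$. That is eventually true for admissible transformations of the potential equations (cf.\ Propositions~\ref{PotClassGroupoid} and~\ref{Prop_PotClassConst_G}), but it is not something you may assume when constructing the prolongation; you should use the general expression $\tilde v_{\tilde x}=(\mathrm D_tT\,\mathrm D_xV-\mathrm D_xT\,\mathrm D_tV)/(\mathrm D_tT\,\mathrm D_xX-\mathrm D_xT\,\mathrm D_tX)$ restricted to $v_x=\lambda u$. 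With that correction, and with the splitting argument actually carried through (or replaced by the contact-transformation argument), your proof closes.
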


\begin{proof}
We fix any two values of the arbitrary element from the set~$\breve{\mathcal S}$, 
$(f,\lambda)$ and $(\tilde f,\tilde\lambda)$.
Every point transformation between the equations~$\mathcal P_{f,\lambda}$ and $\mathcal P_{\tilde f,\tilde\lambda}$
is prolonged to~$u$ according to the equation $u=v_x/\lambda$,
which gives a~point transformation between the systems~$\mathcal R_{f,\lambda}$ and $\mathcal R_{\tilde f,\tilde\lambda}$.
Conversely, let $\varphi$ be a~point transformation
between the systems~$\mathcal R_{f,\lambda}$ and $\mathcal R_{\tilde f,\tilde\lambda}$,
\begin{gather*}
\varphi\colon\quad
\tilde t=T(t,x,u,v),
\quad
\tilde x=X(t,x,u,v),
\quad
\tilde u=U(t,x,u,v),
\quad
\tilde v=V(t,x,u,v).
\end{gather*}
Substituting $u=v_x/\lambda$ into~$T$, $X$ and~$V$ and neglecting the transformation component of~$\varphi$ for~$u$,
we obtain a~contact transformation between~$\mathcal P_{f,\lambda}$ and $\mathcal P_{\tilde f,\tilde\lambda}$,
which is necessarily induced by a~point transformation
since both the equations~$\mathcal P_{f,\lambda}$ and $\mathcal P_{\tilde f,\tilde\lambda}$ are second-order evolution equations
that are linear with respect to the second derivative~$u_{xx}$~\cite[Proposition 2]{popo2008d}.
\end{proof}

In view of Lemma~\ref{lem:PotAdmTransOfGBEs}, we can deal
with the class of potential equations~\eqref{GBE_PotentialEquation}
instead of the class of potential systems~\eqref{GBE_PotentialSystem}
when studying potential admissible transformations of the class~\eqref{GBE}$_{f_{xxx}=0}$.

\begin{corollary}\label{cor:PotSymsOfGBEs}
For each fixed value~$f$ with $f_{xxx}=0$,
there exists an isomorphism between the Lie invariance algebras~$\mathfrak g_{\mathcal P_{f,\lambda}}$ and~$\mathfrak g_{\mathcal R_{f,\lambda}}$
of the equation~$\mathcal P_{f,\lambda}$ and the system~$\mathcal R_{f,\lambda}$.
\end{corollary}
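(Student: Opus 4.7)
The plan is to deduce the corollary directly from Lemma~\ref{lem:PotAdmTransOfGBEs} by specialising the groupoid isomorphism to vertex groups over the fixed arbitrary element $(f,\lambda)$. Recall that the Lie point symmetry (pseudo)group of a differential equation can be identified with the vertex group of its equivalence groupoid above itself, i.e., the set of admissible transformations from the equation to itself, equipped with composition. Therefore, setting $(\tilde f,\tilde\lambda)=(f,\lambda)$ in the bijection constructed in the proof of Lemma~\ref{lem:PotAdmTransOfGBEs} yields a bijection between the Lie symmetry groups $G_{\mathcal P_{f,\lambda}}$ and $G_{\mathcal R_{f,\lambda}}$.

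To promote this set-theoretic bijection to a Lie-algebra isomorphism, I would first verify that it is a group homomorphism. This is immediate since composition in either groupoid is ordinary composition of point transformations, and both maps involved in Lemma~\ref{lem:PotAdmTransOfGBEs} — prolongation via $u=v_x/\lambda$ from the potential equation to the potential system, and the complementary projection that substitutes $u=v_x/\lambda$ into the $(t,x,v)$-components and discards the $u$-component — manifestly commute with composition. Hence one-parameter subgroups correspond to one-parameter subgroups, and differentiating at the identity produces the sought isomorphism $\mathfrak g_{\mathcal P_{f,\lambda}}\simeq\mathfrak g_{\mathcal R_{f,\lambda}}$.

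Concretely, the isomorphism should act as follows. Given a Lie symmetry operator $\tilde Q=\tau\p_t+\xi\p_x+\nu\p_v$ of $\mathcal P_{f,\lambda}$ (with $\tau,\xi,\nu$ functions of $(t,x,v)$), one obtains its counterpart $Q=\tau\p_t+\xi\p_x+\upsilon\p_u+\nu\p_v$ on $\mathcal R_{f,\lambda}$ by computing the $u$-component $\upsilon$ via the standard differential prolongation induced by $u=v_x/\lambda$, namely $\upsilon=\lambda^{-1}({\rm D}_x\nu-u{\rm D}_x\xi-u_t{\rm D}_x\tau)$ reduced on the system. Conversely, starting from a symmetry of $\mathcal R_{f,\lambda}$, one evaluates its $(t,x,v)$-components on $u=v_x/\lambda$; the result defines a contact symmetry of $\mathcal P_{f,\lambda}$ which, by the linear-evolution argument of \cite[Proposition~2]{popo2008d} cited in the proof of Lemma~\ref{lem:PotAdmTransOfGBEs}, reduces to a genuine point symmetry.

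The anticipated main obstacle is the verification that the projection step in the second direction does in fact yield a vector field whose $(t,x,v)$-components depend only on $(t,x,v)$ (not on $v_x$). This does not require new work: it is the infinitesimal shadow of the finite statement already used in Lemma~\ref{lem:PotAdmTransOfGBEs}, which follows from applying the cited result on contact-to-point reduction to the one-parameter subgroup generated by the symmetry. Once this is granted, the homomorphism property and the explicit formulas above establish the isomorphism of Lie algebras.
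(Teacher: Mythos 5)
Your proposal is correct and follows essentially the same route as the paper: Corollary~\ref{cor:PotSymsOfGBEs} is obtained by restricting the groupoid isomorphism of Lemma~\ref{lem:PotAdmTransOfGBEs} to the vertex groups over the fixed value $(f,\lambda)$ and passing to infinitesimal generators, with the two directions realized by prolongation according to $u=v_x/\lambda$ and by projection to the space of $(t,x,v)$. (Only your illustrative formula for the prolonged component $\upsilon$ is slightly off: since $v_x=\lambda u$ and $\lambda$ depends on~$t$, the correct expression is $\upsilon=\lambda^{-1}\big({\rm D}_x\nu-v_t{\rm D}_x\tau-v_x{\rm D}_x\xi\big)-\lambda^{-2}\lambda_t\tau v_x$ evaluated on the system, rather than the one you wrote with $u$ and $u_t$ in place of $v_x$ and $v_t$; this does not affect the validity of the argument.)
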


This isomorphism is provided by the projection to the space of $(t,x,v)$
when mapping~$\mathfrak g_{\mathcal R_{f,\lambda}}$ onto~$\mathfrak g_{\mathcal P_{f,\lambda}}$
and by the prolongation according to the equation $u=v_x/\lambda$ for the inverse mapping.

In order to find the equivalence groupoid of the class~\eqref{GBE_PotentialEquation},
it is convenient to reparameterize this class
by assuming the coefficients $f^2$, $f^1$ and $f^0$ of~$f$ in~\eqref{fquadratic}
to be new arbitrary elements.
We use both the parameterizations simultaneously.
 
The above partition of the class~\eqref{GBE}$_{f_{xxx}=0}$
hints at the partition of the class~\eqref{GBE_PotentialEquation}
into the subclasses 
%$\mathcal P^1=\{\mathcal P_{f,\lambda} \mid f=\const\}$ and $\mathcal P^2=\{\mathcal P_{f,\lambda} \mid f_{xxx}=0,\, f\ne\const\}$,
\[\mathcal P^1=\{\mathcal P_{f,\lambda} \mid f=\const\} \quad\mbox{and}\quad \mathcal P^2=\{\mathcal P_{f,\lambda} \mid f_{xxx}=0,\, f\ne\const\},\]
which have essentially different transformational properties
and are potential counterparts of the subclasses~$\mathcal L^1$ and~$\mathcal L^2$,
respectively.

\begin{proposition}\label{PotClassGroupoid}
The equivalence groupoid of the subclass~$\mathcal P^2$ 
consists of the triples of the form $(f,\varphi,\tilde f)$ 
where the components of the transformation~$\varphi$ are 
\begin{gather*}
%\label{PotClassTX}
\tilde t=\frac{\alpha t+\beta}{\gamma t+\delta},
\quad
\tilde x=\frac{\kappa x+\mu_1t+\mu_0}{\gamma t+\delta},
\\
%\label{PotClassV}
\tilde v=c_0\left(v
-\frac{\gamma\lambda(t)}{2(\gamma t+\delta)}x^2
+\frac{\delta\mu_1-\gamma\mu_0}{\kappa(\gamma t+\delta)}\lambda(t)x
+V^0(t)\right)\!,
\end{gather*}
the relation between the source and target arbitrary elements is given by 
\begin{gather}
\begin{split}\label{PotClassf}
&\tilde f^2=\frac{(\gamma t+\delta)^2}{\Delta}f^2,
\quad
\tilde f^1=\frac{\gamma t+\delta}{\Delta}\big(\kappa f^1-2(\mu_1 t+\mu_0)f^2\big),
\\
&\tilde f^0=\frac{1}{\Delta}\big(\kappa^2f^0-\kappa(\mu_1 t+\mu_0)f^1+(\mu_1 t+\mu_0)^2f^2\big),
\quad
\text{i.e.}
\quad
\tilde f=\frac{\kappa^2}{\Delta}f,
\quad
\tilde\lambda=c_0\frac{\Delta}{\kappa^2}\lambda,
\end{split}
\end{gather}
$\alpha$, $\beta$, $\gamma$, $\delta$, $\mu_0$, $\mu_1$ and $\kappa$ are arbitrary constants
defined up to a~nonzero multiplier with $\Delta:=\alpha\delta-\beta\gamma\ne0$ and $\kappa\ne0$;
$c_0$ and $\sigma$ are arbitrary constants with $c_0\ne0$, and
\begin{gather}\label{PotClassV0}
V^0=
\!\int\!\left(\frac{(\delta\mu_1-\gamma\mu_0)^2}{2\kappa^2(\gamma t+\delta)^2}
+\frac{\delta\mu_1-\gamma\mu_0}{\kappa(\gamma t+\delta)}f^1(t)+\frac\gamma{\gamma t+\delta}f^0(t)\right)\lambda(t)\d t+\sigma,
\end{gather}
\end{proposition}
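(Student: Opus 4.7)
By Lemma~\ref{lem:PotAdmTransOfGBEs} it suffices to describe the equivalence groupoid of the class of potential equations~\eqref{GBE_PotentialEquation} restricted to~$\mathcal P^2$. Fix an admissible transformation~$(f,\varphi,\tilde f)$ in this subclass. Since the source and target are $(1+1)$-dimensional evolution equations linear in the second derivative, \cite[Proposition~2]{popo2008d} implies that~$\varphi$ is doubly projectable, $\tilde t=T(t)$, $\tilde x=X(t,x)$, $\tilde v=V(t,x,v)$, with $T_tX_xV_v\ne0$.

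The first step is to substitute these expressions into the target equation, eliminate~$v_t$ via the source equation, and split the resulting identity with respect to~$v_{xx}$, $v_x^2$, $v_x$ and~$v^0$. The $v_{xx}$ coefficient yields $\tilde f X_x^2=fT_t$. The $v_x^2$ coefficient, differentiated once in~$v$, forces $V_{vv}=0$ and gives the proportionality $V_v\lambda T_t=\tilde\lambda X_x^2$. Substituting this into the $v_x$ coefficient and separating the $v$-linear and $v$-independent parts yields $V_{xv}=0$ and hence $X_{xx}=0$. Therefore $V=c_1(t)v+D(t,x)$ and $X=A(t)x+B(t)$, and the $v_x$ coefficient reduces to $D_x=\tilde\lambda AX_t/T_t$.

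The fractional linear form of $T$ and $X$ now follows by projection to the base Burgers class. Using $v_x=\lambda u$, the prolongation $\tilde u=\tilde v_{\tilde x}/\tilde\lambda=(D_x+c_1(t)\lambda u)/(A\tilde\lambda)$ depends only on~$(t,x,u)$, so $(T,X,\tilde u)$ is an admissible transformation of~$\mathcal L_f\in\mathcal L^2$. By Theorem~\ref{GBEgroupoid} it must coincide with the restriction of an element of~$G^\sim$, which pins down~$T$ and~$X$ as in~\eqref{GBE_G} with parameters $\alpha,\beta,\gamma,\delta,\mu_0,\mu_1,\kappa$, gives $\tilde f=(\kappa^2/\Delta)f$, and (via the coefficient of~$u$) yields $\tilde\lambda=c_1(t)(\Delta/\kappa^2)\lambda$. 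Differentiating this in~$t$ and using $\lambda_t=2f^2\lambda$ together with the defining condition $\tilde\lambda_{\tilde t}=\tilde f_{\tilde x\tilde x}\tilde\lambda$ forces $\dot c_1=0$, so $c_1=c_0$ is a nonzero constant. Integrating the now explicit $D_x$ in~$x$ reproduces the linear and quadratic terms in~$x$ inside the formula for~$\tilde v$, leaving only~$V^0(t)$ undetermined.

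Finally, substituting the resulting form of~$V$ into the $v^0$ coefficient of the split equation---the only one not yet exploited---collapses, after cancellations using $\lambda_t=2f^2\lambda$ and the explicit forms of $A$, $B$, $T_t$, to an ODE $(V^0)'(t)=\Psi(t)$ whose right-hand side is precisely the integrand of~\eqref{PotClassV0}; one quadrature yields the formula with~$\sigma$ as the integration constant. The converse direction---that every transformation of the stated form is admissible---is a routine direct substitution. The main obstacle will be this last calculation: verifying that all $x$-dependent contributions in the $v^0$ coefficient cancel identically, so as to leave a pure ODE in~$t$ for~$V^0$, requires careful bookkeeping and is the only genuinely laborious part of the proof.
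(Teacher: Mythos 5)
Your overall architecture is reasonable, and the second half of your argument is a genuinely different and attractive route: instead of integrating the determining ODEs $(X^1_t/(X^1)^2)_t=0$ and $(X^0_t/(X^1)^2)_t=0$ directly as the paper does, you prolong the (already linearized-in-$v$) transformation to $u$ via $u=v_x/\lambda$, observe that $(T,X,\tilde u)$ is then an admissible transformation within the class~\eqref{GBE}, and invoke its normalization (Theorem~\ref{GBEgroupoid}) to read off the fractional-linear form of $T$ and $X$ together with $\tilde f=(\kappa^2/\Delta)f$; the constancy of $c_1$ then follows from comparing $\lambda_t=f_{xx}\lambda$ with its tilded counterpart, exactly as in the paper's computation of $(\tilde\lambda/\lambda)_t$. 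That shortcut is correct and arguably cleaner than the paper's direct integration, and it exploits the results of Sections~\ref{GROUPandALGEBRAsection} and~\ref{CLPSsection} in a natural way.

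There is, however, a genuine gap in the first half, precisely at the point where the paper instead imports a structural result from \cite{poch2015a} via the auxiliary map~\eqref{Tr_PotClasses} to the class~\eqref{PotClass}. You assert that the coefficient of $v_x^2$, ``differentiated once in $v$, forces $V_{vv}=0$''. It does not. After eliminating $\tilde f$ by means of the $v_{xx}$-coefficient, that equation reads $2fV_{vv}=V_v/\lambda-fV_v^2/(\tilde f\tilde\lambda)$, which is a Riccati equation for $V_v$ as a function of $v$ with $(t,x)$-dependent coefficients; differentiating it in $v$ merely yields a consequence satisfied by all of its (generically nonconstant) solutions. That linearity in $v$ cannot follow from this coefficient alone is demonstrated by the subclass~$\mathcal P^1$: for constant $f$ the admissible transformations are genuinely nonlinear in $v$ (Proposition~\ref{Prop_PotClassConst_G}, with $\tilde v$ proportional to $\ln|F^1(e^{v/2f}+F^0)|$), yet they satisfy the very same $v_x^2$-equation. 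Any correct derivation of $V_{vv}=0$ must therefore combine the remaining determining equations with the hypothesis $f\ne\const$, and your sketch never uses that hypothesis at this stage; the subsequent claims ($V_{xv}=0$, $X_{xx}=0$, and the projection trick itself, which needs $V=c_1(t)v+D(t,x)$) all rest on the unproved assertion. Two smaller points: your $v_{xx}$-relation is written upside down (it should be $\tilde fT_t=fX_x^2$, as your later conclusion $\tilde f=(\kappa^2/\Delta)f$ confirms), and the double projectability $\tilde t=T(t)$, $\tilde x=X(t,x)$ is the content of \cite[Lemma~1]{ivan2010a} rather than of \cite[Proposition~2]{popo2008d}, which concerns the reduction of contact to point transformations.
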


\begin{proof}
Each equation~$\mathcal P_{f,\lambda}$ from the class~\eqref{GBE_PotentialEquation}
is mapped by the point transformation
\begin{gather}\label{Tr_PotClasses}
\psi_{f,\lambda}\colon
\qquad
\hat t=\frac12\int\lambda \d t,
\qquad
\hat x=\lambda x+\int f^1\lambda \d t,
\qquad
\hat v=v,
\end{gather}
parameterized by~$f$ (more precisely, by $f^2$ and $f^1$) and $\lambda$
to a~shorter equation~$\hat{\mathcal P}_{\hat f}$ from the class
\begin{gather}\label{PotClass}
\hat v_{\hat t}+\hat v_{\hat x}^{\,\,2}+\hat f\hat v_{\hat x\hat x}=0
\qquad
\text{with}
\qquad
\hat f\ne0,\quad
\hat f_{\hat x\hat x\hat x}=0.
\end{gather}
The image of~$f$ under the mapping is computed by
$\hat f(\hat t,\hat x)=2\lambda(t)f(t,x)$.
Equivalently, we can compute the images of the coefficients of~$f$.
The subclasses~$\mathcal P^1$ and~$\mathcal P^2$ of the class~\eqref{GBE_PotentialEquation}
are then mapped to the subclasses
$\hat{\mathcal P}^1=\{\hat{\mathcal P}_{\hat f} \mid \hat f=\const\}$
and~$\hat{\mathcal P}^2=\{\hat{\mathcal P}_{\hat f} \mid \hat f_{\hat x\hat x\hat x}=0,\,\hat f\ne\const\}$
of the class~\eqref{PotClass}, respectively.
The transformational properties of the subclass~\smash{$\hat{\mathcal P}^2$} essentially differ from those
of the subclass~\smash{$\hat{\mathcal P}^1$},
and there are no point transformations between equations from the different subclasses;
see \cite{poch2015a} (note that the subclasses notation therein is slightly different).
The transformational part of each admissible transformation
of the subclass~$\hat{\mathcal P}^2$ has the following properties:
\begin{itemize}\setSkips
\item the transformational component for~$\hat t$ depends only on~$\hat t$,
\item the transformational component for~$\hat x$ depends only on $(\hat t,\hat x)$ and is linear in~$\hat x$, 
\item the transformational component for~$\hat v$ is linear in~$\hat v$ with a~constant coefficient of~$\hat v$
and is quadratic in~$\hat x$.
\end{itemize}

Consider an arbitrary admissible transformation
$(f,\varphi,\tilde f)$
of the subclass~$\mathcal P^2$.
Denote
\begin{gather*}
\hat f:=\psi_{f,\lambda}f,
\qquad
\hat{\tilde f}:=\psi_{f,\lambda}\tilde f,
\qquad
\hat\varphi:=\psi_{\tilde f,\tilde\lambda}\varphi\psi_{f,\lambda}^{-1},
\qquad
\text{and hence}
\qquad
\varphi=\psi_{\tilde f,\tilde\lambda}^{-1}\hat\varphi\psi_{f,\lambda}.
\end{gather*}
Then the triple $(\hat f,\hat\varphi,\hat{\tilde f})$ is an admissible transformation
of the subclass~$\hat{\mathcal P}^2$,
and hence its transformational part $\hat\varphi$ has the properties listed above.
In view of~\eqref{Tr_PotClasses} and the relation of $\varphi$ with $\hat\varphi$,
the point transformation $\varphi$ has the same properties, i.e.\
\begin{gather*}
\varphi\colon
\quad
\tilde t=T(t),
\quad
\tilde x=X^1(t)x+X^0(t),
\quad
\tilde v=c_0\big(v+V^2(t)x^2+V^1(t)x+V^0(t)\big),
\end{gather*}
where $T_tX^1c_0\ne0$.
We substitute the expressions of tilded entities in terms of untilded ones, including the derivatives
\begin{gather*}
\tilde v_{\tilde t}=\frac{c_0}{T_t}\left(v_t+V^2_tx^2+V^1_tx+V^0_t-\frac{X^1_tx+X^0_t}{X^1}(v_x+2V^2x+V^1)\right),
\\
\tilde v_{\tilde x}=c_0\frac{v_x+2V^2x+V^1}{X^1},
\quad
\tilde v_{\tilde x \tilde x}=c_0\frac{v_{xx}+2V^2}{(X^1)^2},
\end{gather*}
and also $v_t=-\frac1{2\lambda}v_x^2-fv_{xx}+f_xv_x$
into~$\mathcal P_{\tilde f}$, and split the obtained equation with respect to $v_{xx}$, $v_x$ and~$x$.
After simplifying, this gives
\begin{gather}
\label{PotClassFourDeq}
\tilde f=\frac{(X^1)^2}{T_t}f,
\quad
\tilde\lambda=\frac{c_0T_t}{(X^1)^2}\lambda,
\quad
V^2=\frac{c_0}2\frac{X^1_t}{X^1}\lambda,
\quad
V^1=c_0\frac{X^0_t}{X^1}\lambda,
\\
\left(\frac{X^1_t}{(X^1)^2}\right)_t=0,
\quad
\left(\frac{X^0_t}{(X^1)^2}\right)_t=0,
\quad
V^0_t=c_0\lambda\left(\frac12\left(\frac{X^0_t}{X^1}\right)^2+\frac{X^0_t}{X^1}f^1-\frac{X^1_t}{X^1}f^0\right).
\label{PotClassThreeDeq}
\end{gather}
The first equation of~\eqref{PotClassFourDeq} implies 
the relation between coefficients of~$f$ and~$\tilde f$,
\begin{gather}\label{PotClassfinTX}
\tilde f^2=\frac1{T_t}f^2,
\quad
\tilde f^1=\frac{X^1}{T_t}f^1-2\frac{X^0}{T_t}f^2,
\quad
\tilde f^0=\frac{(X^1)^2}{T_t}f^0-\frac{X^0X^1}{T_t}f^1+\frac{(X^0)^2}{T_t}f^2.
\end{gather}
Since $\lambda=e^{2\int f^2\d t}$,
$\tilde\lambda=e^{2\int\tilde f^2\d\tilde t}$
and $T_t\tilde f^2=f^2$, we obtain
\begin{gather*}
\left(\frac{\tilde\lambda}{\lambda}\right)_t=2\left(T_t\tilde f^2-f^2\right)\frac{\tilde\lambda}{\lambda}=0,
\quad
\text{i.e.}
\quad
\frac{\tilde\lambda}{\lambda}=\const.
\end{gather*}
Hence the second equation of~\eqref{PotClassFourDeq} gives
$T_t/(X^1)^2=\tilde\lambda/(c_0\lambda)=\const$.
The first equation of~\eqref{PotClassThreeDeq} can be represented as
$(1/X^1)_{tt}=0$, which means that
$1/X^1$ is a~linear function of~$t$.
Then the relation $T_t=\const\cdot(X^1)^2$ implies that $T$ is fractional linear in~$t$.
As a~result, we derive
\begin{gather*}
T=\frac{\alpha t+\beta}{\gamma t+\delta},
\quad
X^1=\frac{\kappa}{\gamma t+\delta},
\quad
X^0=\frac{\mu_1 t+\mu_0}{\gamma t+\delta},
\end{gather*}
where the expression for $X^0$ is obtained from the second equation of~\eqref{PotClassThreeDeq}.
The last equation of~\eqref{PotClassThreeDeq} leads, after integration, to the expression~\eqref{PotClassV0}.
Finally, from~\eqref{PotClassfinTX} we have~\eqref{PotClassf}.
\end{proof}

\begin{corollary}\label{cor:NoPotAdmTransBetweenSubclassesOfGBEs}
There are no point transformations relating equations from the different subclasses~$\mathcal P^1$ and~$\mathcal P^2$.
\end{corollary}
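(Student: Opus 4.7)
The plan is to reduce the claim to the analogous dichotomy inside the simplified class~\eqref{PotClass}, which was already invoked (via~\cite{poch2015a}) in the proof of Proposition~\ref{PotClassGroupoid}. The key observation is that the family of point transformations $\psi_{f,\lambda}$ from~\eqref{Tr_PotClasses} respects the partitions: it sends $\mathcal P^1$ into $\hat{\mathcal P}^1$ and $\mathcal P^2$ into $\hat{\mathcal P}^2$, because the coefficient relation $\hat f(\hat t,\hat x)=2\lambda(t)f(t,x)$ with $\lambda=\lambda(t)\ne0$ forces $\hat f$ to be constant in $(\hat t,\hat x)$ if and only if $f$ is constant in $(t,x)$.

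First I would assume, for contradiction, the existence of a point transformation $\varphi$ between some $\mathcal P_{f,\lambda}\in\mathcal P^1$ and some $\mathcal P_{\tilde f,\tilde\lambda}\in\mathcal P^2$. Second, I would conjugate $\varphi$ by $\psi_{f,\lambda}$ and $\psi_{\tilde f,\tilde\lambda}$ to produce the point transformation $\hat\varphi:=\psi_{\tilde f,\tilde\lambda}\circ\varphi\circ\psi_{f,\lambda}^{-1}$ between the corresponding equations in $\hat{\mathcal P}^1$ and $\hat{\mathcal P}^2$. Third, I would apply the cited result from~\cite{poch2015a}, which asserts that no point transformation connects equations from $\hat{\mathcal P}^1$ with equations from $\hat{\mathcal P}^2$, to obtain the desired contradiction. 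The symmetric case, where the source lies in $\mathcal P^2$ and the target in $\mathcal P^1$, follows at once by considering $\varphi^{-1}$.

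The main obstacle is to make sure that the reduction to~\eqref{PotClass} is clean: one has to check that the correspondence $(f,\varphi,\tilde f)\mapsto(\hat f,\hat\varphi,\hat{\tilde f})$ is functorial on the level of equivalence groupoids, so that admissible triples translate correctly between the two classes. This is essentially the same mechanism already used in the proof of Proposition~\ref{PotClassGroupoid} and presents no real difficulty once $\psi_{f,\lambda}$ is recognised as a point transformation parameterised by the arbitrary element. If one preferred to avoid the appeal to~\cite{poch2015a}, an alternative is to replay the splitting argument of Proposition~\ref{PotClassGroupoid} without the a~priori restriction that the source equation has non-constant $f$; the hard step of that alternative becomes the exceptional case where the source equation is the potential Burgers equation, which has to be ruled out using its linearisability to the heat equation and the resulting infinite-dimensional Lie invariance algebra, a feature not shared by any equation in $\mathcal P^2$.
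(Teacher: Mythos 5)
Your proposal is correct and follows essentially the same route as the paper: the corollary is exactly the statement that survives conjugation by the transformations $\psi_{f,\lambda}$ of~\eqref{Tr_PotClasses}, which map $\mathcal P^1$ and $\mathcal P^2$ onto $\hat{\mathcal P}^1$ and $\hat{\mathcal P}^2$ respectively, combined with the absence of point transformations between the hatted subclasses cited from~\cite{poch2015a} in the proof of Proposition~\ref{PotClassGroupoid}. No further comment is needed.
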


Since admissible transformations of the subclass~$\mathcal P^2$ are uniformly parameterized 
by the extended arbitrary element in a nonlocal way, 
we may say that this subclass is normalized in the extended generalized sense 
although there is no point transformation group underlying this normalization in a natural way. 

Comparing the transformations in Theorem \ref{GBEgroupoid} and Proposition \ref{PotClassGroupoid}
and knowing the connection~\eqref{GBE_PotentialSystem} between~$u$ and~$v$,
we conclude that the equivalence groupoids
of the classes~$\mathcal L^2$ and~$\mathcal P^2$
are isomorphic up to gauge shifts $\tilde v=v+\sigma$ and gauge scalings $(\tilde v,\tilde\lambda)=(c_0v,c_0\lambda)$.
The same is true for the equivalence groups of the classes~$\mathcal L^2$ and~$\mathcal P^2$
although these equivalence groups are of different kinds.
This also implies that for each nonconstant~$f$ with $f_{xxx}\ne0$
the Lie symmetry groups of the equations~$\mathcal L_f$ and~$\mathcal P_{f,\lambda}$
are isomorphic up to shifts of~$v$,
which belong to the kernel group of~$\mathcal P^2$,
and so the group classifications for the classes~$\mathcal L^2$ and~$\mathcal P^2$ are equivalent.
Since the class~$\mathcal L^2$ is closed under the action of $G^\sim$,
its group classification can be easily separated out from the group classification
of the entire class~\eqref{GBE}, which is given in Table~\ref{TableSubalgebras}.

\begin{proposition}\label{Prop_PotSymsWithQuadratic_f}
Potential admissible transformations of the class~$\mathcal L^2$
are induced by its usual admissible transformations.
In particular, equations from this class have no nontrivial potential symmetries.
\end{proposition}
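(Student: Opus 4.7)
The plan is to combine Lemma~\ref{lem:PotAdmTransOfGBEs}, Proposition~\ref{PotClassGroupoid} and Theorem~\ref{GBEgroupoid}. By Lemma~\ref{lem:PotAdmTransOfGBEs}, any potential admissible transformation of the class~$\mathcal L^2$ can be studied through the equivalence groupoid of the class of potential equations~$\mathcal P^2$, whose complete description is already supplied by Proposition~\ref{PotClassGroupoid}. What I have to show is that every such potential admissible transformation descends, via the defining relation $u=v_x/\lambda$, to a usual admissible transformation of~$\mathcal L^2$, which by Theorem~\ref{GBEgroupoid} is in turn induced by an element of~$G^\sim$.

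Concretely, I would take an arbitrary $(f,\varphi,\tilde f)$ in the equivalence groupoid of~$\mathcal P^2$ in the parametric form of Proposition~\ref{PotClassGroupoid} and compute the induced transformation of~$u$ from $\tilde u=\tilde v_{\tilde x}/\tilde\lambda$ using the chain rule. Differentiating the explicit formula for~$\tilde v$ with respect to~$x$, dividing by $\tilde x_x=\kappa/(\gamma t+\delta)$, substituting $v_x=\lambda u$, and finally dividing by $\tilde\lambda=c_0\Delta\lambda/\kappa^2$, the parameters $c_0$ and $\lambda(t)$ cancel completely, and the expression simplifies to
\[
\tilde u=\frac{\kappa(\gamma t+\delta)u-\kappa\gamma x+\mu_1\delta-\mu_0\gamma}{\alpha\delta-\beta\gamma},
\]
which is precisely the transformation component for~$\tilde u$ in Theorem~\ref{GBEgroupoid}. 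Together with the formulas for $\tilde t$, $\tilde x$ and $\tilde f$, which coincide literally in both statements, this establishes that each potential admissible transformation of~$\mathcal L^2$ projects onto the usual admissible transformation of~$\mathcal L^2$ determined by the same parameters $(\alpha,\beta,\gamma,\delta,\kappa,\mu_0,\mu_1)$, proving the first assertion.

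The second assertion follows immediately in view of Corollary~\ref{cor:PotSymsOfGBEs}: a nontrivial potential symmetry of~$\mathcal L_f$ would be a Lie symmetry of~$\mathcal P_{f,\lambda}$ that does not arise as the prolongation of a Lie symmetry of~$\mathcal L_f$. The first part rules this out, since every infinitesimal admissible transformation of~$\mathcal P^2$ is induced by one of~$\mathcal L^2$ modulo the gauge freedom $\tilde v=c_0v+\sigma$ with the compensating rescaling $\tilde\lambda=c_0\lambda$, which acts trivially on~$u=v_x/\lambda$ and contributes no new symmetry generator after projection. The only real obstacle in executing the plan is bookkeeping: one must carefully track the various gauges acting nontrivially on the pair $(v,\lambda)$ but trivially on~$u$, and confirm that after the cancellation above they indeed fall inside the kernel of the projection, so that no spurious extension of~$G^\sim$ is produced.
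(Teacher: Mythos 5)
Your proposal is correct and follows essentially the same route as the paper, which justifies the proposition precisely by comparing the transformations of Theorem~\ref{GBEgroupoid} and Proposition~\ref{PotClassGroupoid} through the relation $u=v_x/\lambda$ and observing that only the gauge shifts $\tilde v=v+\sigma$ and gauge scalings $(\tilde v,\tilde\lambda)=(c_0v,c_0\lambda)$, which act trivially on~$u$, remain unaccounted for. Your explicit computation of $\tilde u=\tilde v_{\tilde x}/\tilde\lambda$ recovering the $\tilde u$-component of Theorem~\ref{GBEgroupoid} simply makes the paper's comparison argument fully explicit.
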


Consider the potential counterpart~$\mathcal P^1$ of the class~$\mathcal L^1$.
For constant values of~$f$, the transformation~\eqref{Tr_PotClasses}
degenerates to a~simple rescaling of~$t$.
Hence the class~$\mathcal P^1$ in fact coincides, up to the scaling $\hat t=\frac12t$,
with the class~$\hat{\mathcal P}^1$, see equation~\eqref{PotClass}.
The equivalence groupoid of the class~$\hat{\mathcal P}^1$ is described in \cite[Proposition 2]{poch2015a}.
We modify it
up to the multipliers $\frac12$ and $\kappa$
for better consistence with the present paper.

\begin{proposition}\label{Prop_PotClassConst_G}
The subclass~$\mathcal P^1=\{\mathcal P_{f,\lambda} \mid f=\const\}$ of the class~\eqref{GBE_PotentialEquation}
is normalized in the generalized sense.
Its generalized equivalence group is constituted by the transformations
\begin{gather*}
\tilde t=\frac{\alpha t+\beta}{\gamma t+\delta},
\quad
\tilde x=\frac{\kappa x+\mu_1t+\mu_0}{\gamma t+\delta},
\quad
\tilde v=\frac{2\kappa^2f}{\Delta}\ln\left|F^1\Big(e^{\tfrac{v}{2f}}+F^0\Big)\right|,
\quad
\tilde f=\frac{\kappa^2}{\Delta}f,
\end{gather*}
where $\alpha$, $\beta$, $\gamma$, $\delta$, $\kappa$, $\mu_1$, $\mu_0$
are arbitrary constants with $\Delta:=\alpha\delta-\beta\gamma\ne0$ and $\kappa\ne0$,
the tuple $(\alpha,\beta,\gamma,\delta,\kappa)$ is defined up to a~nonzero multiplier,
\begin{gather*}
F^1=\begin{cases}
k\sqrt{|\gamma t+\delta|}\exp\left(
-\dfrac{(\gamma\kappa x-\mu_1\delta+\mu_0\gamma)^2}
{4f\kappa^2\gamma(\gamma t+\delta)}
\right), & \gamma\ne0,
\\[4mm]
k\exp\dfrac{2\kappa\mu_1x+\mu_1^2t}{4\kappa^2f}, & \gamma=0,
\end{cases}
\end{gather*}
$k$ is a~nonzero constant, and $F^0$ is a~solution of the linear equation
$F^0_t+fF^0_{xx}=0$.
\end{proposition}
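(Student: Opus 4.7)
The plan is to exploit the Hopf--Cole linearization to reduce the claim to the classical equivalence groupoid of the linear heat equation. First, since $f=\const$, the constraint $\lambda_t=f_{xx}\lambda$ forces $\lambda$ to be a nonzero constant, and the gauge equivalence $(\tilde v,\tilde\lambda)=(c_0v,c_0\lambda)$ normalizes $\lambda=1$, so that $\mathcal P_{f,\lambda}$ becomes the standard potential Burgers equation $v_t+\tfrac12 v_x^{\,2}+fv_{xx}=0$. A direct substitution verifies that $v=2f\ln|w|$ maps this equation onto the linear heat equation $w_t+fw_{xx}=0$. This establishes a bijection between the equivalence groupoid of $\mathcal P^1$ and that of the class $\mathcal H=\{w_t+fw_{xx}=0\mid f\in\mathbb R\setminus\{0\}\}$, in the same family-of-point-transformations sense as invoked elsewhere in the paper.

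Next I would describe the equivalence groupoid of $\mathcal H$ by a computation analogous to the proof of Proposition~\ref{PotClassGroupoid}. By \cite[Proposition~2]{popo2008d}, admissible transformations of $\mathcal H$ are projectable on $(t,x)$; preservation of the linear structure forces the dependent-variable component to be affine in~$w$, namely $\tilde w=F^1(t,x)\bigl(w+F^0(t,x)\bigr)$. Substituting this ansatz into $\tilde w_{\tilde t}+\tilde f\tilde w_{\tilde x\tilde x}=0$ and splitting with respect to $w$ and its derivatives yield: $F^0$ is an arbitrary solution of the source heat equation $F^0_t+fF^0_{xx}=0$; the point components have the projective--affine form
\[
\tilde t=\frac{\alpha t+\beta}{\gamma t+\delta},\qquad
\tilde x=\frac{\kappa x+\mu_1 t+\mu_0}{\gamma t+\delta},\qquad
\tilde f=\frac{\kappa^2}{\Delta}f;
\]
and $F^1$ is forced to be the explicit Gaussian/exponential multiplier of the statement, in the two subcases $\gamma\ne0$ and $\gamma=0$. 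The appearance of an arbitrary solution~$F^0$ of a PDE whose coefficient is the source value of~$f$ is precisely what makes $\mathcal H$ (and hence $\mathcal P^1$) normalized only in the \emph{generalized} sense.

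Finally, pulling back through Hopf--Cole with $v=2f\ln|w|$ and $\tilde v=2\tilde f\ln|\tilde w|$ gives
\[
\tilde v=2\tilde f\ln|\tilde w|=\frac{2\kappa^2 f}{\Delta}\ln\bigl|F^1\bigl(e^{v/(2f)}+F^0\bigr)\bigr|,
\]
reproducing the formula in the statement. The converse direction, that every admissible transformation of $\mathcal P^1$ is of this form, follows from the same bijection together with Lemma~\ref{lem:PotAdmTransOfGBEs} and the reduction of contact to point transformations for the relevant class of second-order evolution equations linear in the second derivative. The main obstacle I anticipate is pinning down the precise expression for $F^1$: after substituting $\tilde w=F^1(w+F^0)$ into the target heat equation and using that $F^0$ solves the source one, one obtains an overdetermined first-order linear system on $\ln|F^1|$ whose integration splits into the two cases $\gamma\ne0$ and $\gamma=0$ and produces the specific quadratic-in-$x$ exponents appearing in the statement, including the undetermined multiplicative constant~$k$.
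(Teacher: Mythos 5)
Your proposal follows essentially the same route as the paper: the paper also rests this proposition on the linearization $w=e^{v/(2f)}$ of the constant-coefficient potential equation to the heat equation $w_t+fw_{xx}=0$ (together with the observation that $\mathcal P^1$ is a single orbit under scalings, deferring the explicit groupoid computation to Proposition~2 of~\cite{poch2015a}), with the arbitrary heat-equation solution $F^0$ and the Gaussian multiplier $F^1$ arising exactly as you describe. The only slip is a citation detail: projectability of the transformations on $(t,x)$ follows from Lemma~1 of~\cite{ivan2010a} (as used in the proof of Theorem~\ref{GBEgroupoid}), whereas Proposition~2 of~\cite{popo2008d} concerns the reduction of contact transformations to point ones.
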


Proposition~\ref{Prop_PotClassConst_G} is quite obvious since
the class~$\mathcal P^1$ is the orbit of any its single equation
under the action of a~scaling group.
The complicated form of admissible transformations in the class~$\mathcal P^1$
is a~consequence of the fact that
any equation~$\mathcal P_{f,\lambda}$ with $f=\const$ can be linearized to the heat equation $w_t+fw_{xx}=0$ by the variable change $w=e^{v/(2f)}$,
and thus general symmetry transformations of~$\mathcal P_{f,\lambda}$ are of complicated form and involve an arbitrary solution of the heat equation.
The potential symmetries of the classical Burgers equation are commonly known,
see, e.g., \cite[Example 2.42]{olve1993b} and~\cite{popo2005a}.
Up to linear combining, the nontrivial potential symmetries of~$\mathcal L_f$
that are associated with local conservation laws
are exhausted by the images of linear superposition symmetries of the corresponding linear heat equation $w_t+fw_{xx}=0$
under push-forwarding with $v=2f\ln w$ and prolonging to~$u$ according to $u=v_x$.

Propositions~\ref{PotClassGroupoid} and~\ref{Prop_PotClassConst_G}
jointly with Corollary~\ref{cor:NoPotAdmTransBetweenSubclassesOfGBEs}
give the complete description of the equivalence groupoid
of the class~\eqref{GBE_PotentialEquation}.

\section{Conclusion}

Extended symmetry analysis of the class~\eqref{GBE} of generalized Burgers equations
was merely one of the purposes of this paper.
Although the class~\eqref{GBE} looks quite simple,
it has several interesting specific properties
that are related to the field of group analysis of differential equations.
This is why the class~\eqref{GBE} is a~convenient object for testing various methods
that were recently developed in this field.
Moreover, the study of the class~\eqref{GBE} in the present paper
can serve as a~source of ideas for improving and modifying known techniques.

\looseness=-1
Although the arbitrary element~$f$ of the class~\eqref{GBE} depends on two arguments
and the usual equivalence group~$G^\sim$ of this class is finite-dimensional,
the class~\eqref{GBE} is normalized in the usual sense,
and thus it is a~rare bird among classes of differential equations
considered in the literature on symmetries.
Similar subclasses of variable-coefficient Korteweg--de Vries equations
were singled out in~\cite{gaze1992a}
in the course of symmetry analysis of a~wider class of KdV-like equations,
$u_t+f(t,x)uu_x+g(t,x)u_{xxx}=0$ with $fg\ne0$. 
These subclasses are associated with the constraints $f=x$, $f=x^{-1}$ and $f=1$, 
and the last subclass is the most relevant to the class~\eqref{GBE}.
Therein, Lie symmetries of equations from these three subclasses
were studied using an early version of the algebraic method of group classification.
See also a~modern treatment of these results in~\cite{vane2016a}.
At the same time, the algebraic method of group classification was
(implicitly or explicitly) used mostly for normalized classes
whose equivalence algebras are infinite-dimensional;
see \cite{basa2001a,bihl2012b,kuru2016a,popo2010a,popo2008a} and references therein.
The attempt of extending results of~\cite{gaze1992a}
to variable-coefficient Burgers equations
of the form $u_t+f(t,x)uu_x+g(t,x)u_{xx}=0$ with $fg\ne0$ in~\cite{qu1995b}
was not completely successful, in particular,
due to weaknesses of the used sets of subalgebras
of the related (finite-dimensional) algebras of vector fields.
This is why it was instructive to carry out the group classification of the class~\eqref{GBE}
accurately using various notions and techniques within the framework of the algebraic method,
including the ascertainment of normalization of the class~\eqref{GBE}
and the selection of appropriate subalgebras of (the projection of) its equivalence algebra.

For optimizing the procedure of Lie reductions of equations from the class~\eqref{GBE},
we have applied two special techniques.
The technique of classifying of Lie reductions of partial differential equations from a~normalized class
with respect to its equivalence group is completely original.
This technique allowed us to study Lie reductions for the whole class~\eqref{GBE} at once
but not separately for each case of Lie symmetry extensions listed in Table~\ref{TableSubalgebras}.
Moreover, the classification of appropriate subalgebras of the projection of the equivalence algebra
can then be adapted for the classification of Lie reductions.
Each subalgebra leads to a~set of equivalent ansatzes,
and the second technique is aimed to optimize the form of reduced equations
via selecting specific ansatzes among equivalent ones.
In other words, the selected ansatzes are not of the simplest form
but they are complicated as much as it is necessary
for simplifying the further study of reduced equations.

There are two kinds of Lie reductions of generalized Burgers equations
from the class~\eqref{GBE}, singular and regular ones.
The algebras~$\mathfrak g^{1.0}$ and~$\mathfrak g^{1.1}$
lead to singular reductions since the order of associated reduced equations, which equals one,
is less than the order of original equations, which equals two.
These reduced equations can easily be integrated
although this gives only trivial solutions of original equations.
These solutions linearly depend on~$x$
and are thus common for all equations from the class~\eqref{GBE}.
Lie reductions with the algebras \mbox{$\mathfrak g^{1.2}$--$\mathfrak g^{1.8}_a$} are regular.
Due to the selection of ansatzes,
the associated reduced equations are included in the single class~\eqref{RedEqClass},
which allowed us to unify symmetry analysis of reduced equations
and to completely describe  hidden symmetries of equations from the class~\eqref{GBE}.

Theorem~\ref{Theorem_RO_subsets} together with the preceding discussion of singular reduction operators
presents one of a~few examples existing in the literature,
where reduction operators are completely described for a~nontrivial class of differential equations
as well as nonclassical reductions result in effectively finding new exact solutions.
The technique of classifying reduction operators
with respect to the equivalence group of the class under consideration
plays an important but not crucial role here.

Three ``no-go'' cases of different nature are singled out in the course of the study.
For every equation~$\mathcal L_f$ from the class~\eqref{GBE},
the problem of describing its singular reduction operators is ``no-go''
since the corresponding ansatzes reduce~$\mathcal L_f$
to first-order ordinary differential equations.
This is a~general property of partial differential equations with two independent variables
that possess first co-order singular sets of vector fields
parameterized by single arbitrary functions of both independent and dependent variables
\cite[Section~8]{kunz2008b}.
Another ``no-go'' case is related to regular reduction operators of the classical Burgers equation.
It was studied in~\cite{arri2002a,mans1999a}
and can be explained by the linearization of the Burgers equation to the heat equation with the Hopf--Cole transformation
and the existence of the similar ``no-go'' case for regular reduction operators of linear evolution equations
\cite{fush1992e,popo2008b}.
The last ``no-go'' case arises due to the study of reduction operators for a~class of differential equations,
and this phenomenon was not discussed in the literature before~\cite{poch2012a}.
Involving the arbitrary element~$f$ to the determining equations for the components of regular reduction operators
leads to the necessity of solving the well-defined system~\eqref{PFDEcase_deq1}--\eqref{PFDEcase_deq2}.
At the same time, the last case is of most interest in spite of its ``no-go'' essence.
The system~\eqref{PFDEcase_deq1}--\eqref{PFDEcase_deq2} reduces to the single equation~\eqref{GFDE}.
Many exact solutions of the equation~\eqref{GFDE} are known for the value $h=0$,
which immediately results in exact solutions of various equations from the class~\eqref{GBE}.

We studied both local and potential conservation laws of these equations.
Only equations with $f_{xxx}=0$ admit nonzero conservation laws.
Using the subclass~\eqref{GBE}$_{f_{xxx}=0}$ of such equations as illustrating example,
we introduced the notion of potential equivalence groupoid of a~class of differential equations
and then computed the potential equivalence groupoid of the above subclass.
All assertions on potential symmetries of equations from the class~\eqref{GBE}
are direct consequences of the comparison of the usual and potential equivalence groupoids
of the subclass~\eqref{GBE}$_{f_{xxx}=0}$.

\vspace*{-1ex}
\subsection*{Acknowledgements}
%\bigskip\par\noindent{\bf Acknowledgements.}
The authors thank Vyacheslav Boyko for useful discussions.
The research of ROP was supported by project P25064 of FWF.

\footnotesize\frenchspacing


\begin{thebibliography}{10}

\bibitem{abra2008a}
Abraham-Shrauner B. and Govinder K.S., Master partial differential equations
  for a type {II} hidden symmetry, \emph{J.~Math. Anal. Appl.} \textbf{343}
  (2008), 525--530.

\bibitem{akha1989a}
Akhatov I.S., Gazizov R.K. and Ibragimov N.K., Nonlocal symmetries. A
  heuristic approach, in \emph{Current problems in mathematics. Newest
  results, Vol.\ 34}, Akad. Nauk SSSR, Vsesoyuz. Inst. Nauchn. i Tekhn.
  Inform., Moscow, Itogi Nauki i Tekhniki, 1989, pp. 3--83, 195. (Russian),
  translated in {\it J.~Soviet Math.} {\bf 55} (1991), 1401--1450.

\bibitem{ames1972a}
Ames W.F., \emph{Nonlinear partial differential equations in engineering.
  Vol. {II}}, Academic Press, New York, 1972.

\bibitem{arri1993a}
Arrigo D.J., Broadbridge P. and Hill J.M., Nonclassical symmetry solutions and
  the methods of Bluman--Cole and Clarkson--Kruskal, \emph{J.~Math.
  Phys.} \textbf{34} (1993), 4692--4703.

\bibitem{arri2002a}
Arrigo D.J. and Hickling F., On the determining equations for the nonclassical
  reductions of the heat and Burgers' equation, \emph{J.~Math. Anal. Appl.}
  \textbf{270} (2002), 582--589.

\bibitem{basa2001a}
Basarab-Horwath P., Lahno V. and Zhdanov R., The structure of Lie algebras
  and the classification problem for partial differential equations, \emph{Acta
  Appl. Math.} \textbf{69} (2001), 43--94, arXiv:math-ph/0005013.

\bibitem{bihl2012b}
Bihlo A., Dos Santos Cardoso-Bihlo E. and Popovych R.O., Complete group classification
  of a class of nonlinear wave equations, \emph{J.~Math. Phys.} \textbf{53}
  (2012), 123515, 32~pages, arXiv:1106.4801.

\bibitem{bihl2009b}
Bihlo A. and Popovych R.O., Symmetry analysis of barotropic potential vorticity
  equation, \emph{Commun. Theor. Phys. (Beijing, China)} \textbf{52} (2009),
  697--700, arXiv:0811.3008.

\bibitem{blum1969a}
Bluman G.W. and Cole J.D., The general similarity solution of the heat
  equation, \emph{J.~Math. and Mech.} \textbf{18} (1969), 1025--1042.

\bibitem{boyk2016a}
Boyko V.M., Kunzinger M. and Popovych R.O.,  
Singular reduction modules of differential equations, 
\emph{J.~Math. Phys.} \textbf{57} (2016), 101503, 34 pp., arXiv:1201.3223. 

\bibitem{burg1948a}
Burgers J.M., A mathematical model illustrating the theory of turbulence, in
  \emph{Advances in Applied Mechanics}, vol.~1, edited by R.~von Mises and
  T.~von Karman, Academic Press Inc., New York, 1948, pp. 171--199.

\bibitem{camp1903a}
Campbell J.E., \emph{Elementary treatise on Lie theory of transformation
  groups}, The Clarendon Press, Oxford, 1903.

\bibitem{cate1989a}
Cates A.T., A point transformation between forms of the generalised Burgers
  equation, \emph{Phys. Lett.~A} \textbf{137} (1989), 113--114.

\bibitem{doyl1990a}
Doyle J. and Englefield M.J., Similarity solutions of a generalized Burgers
  equation, \emph{IMA J.~Appl. Math.} \textbf{44} (1990), 145--153.

\bibitem{fush1991g}
Fushchich W.I., Barannik L.F. and Barannik A.F., \emph{Subgroup analysis of
  Galilei and Poincar\'e groups and reduction of nonlinear equations},
  Naukova Dumka, Kyiv, 1991. (Russian)

\bibitem{fush1993d}
Fushchich W.I., Shtelen W.M. and Serov N.I., \emph{Symmetry analysis and exact
  solutions of equations of nonlinear mathematical physics}, vol. 246 of
  \emph{Mathematics and its Applications}, Kluwer Acad.\ Publ., Dordrecht,
  1993.

\bibitem{fush1987a}
Fushchich W.I. and Tsifra I.M., On a reduction and solutions of nonlinear wave
  equations with broken symmetry, \emph{J.~Phys.~A: Math. Gen.} \textbf{20}
  (1987), L45--L48.

\bibitem{fush1994a}
Fushchych W.I. and Popovych R.O., Symmetry reduction and exact solutions of the
  Navier--Stokes equations.~I, \emph{J.~Nonlinear Math. Phys.} \textbf{1}
  (1994), 75--113, arXiv:math-ph/0207016.

\bibitem{fush1994b}
Fushchych W.I. and Popovych R.O., Symmetry reduction and exact solutions of the
  Navier--Stokes equations.~{II}, \emph{J.~Nonlinear Math. Phys.}
  \textbf{1} (1994), 158--188, arXiv:math-ph/0207016.

\bibitem{fush1992e}
Fushchych W.I., Shtelen W.M., Serov M.I. and Popowych R.O., {$Q$}-conditional
  symmetry of the linear heat equation, \emph{Dopov. Nats. Akad. Nauk Ukr.}
  (1992), no.~12, 28--33.

\bibitem{gand2001b}
Gandarias M.L., New symmetries for a model of fast diffusion, \emph{Phys.
  Lett.~A} \textbf{286} (2001), 153--160.

\bibitem{gaze1992a}
Gazeau J.P. and Winternitz P., Symmetries of variable coefficient
  Korteweg--de Vries equations, \emph{J.~Math. Phys.} \textbf{33} (1992),
  4087--4102.

\bibitem{ivan2010a}
Ivanova N.M., Popovych R.O. and Sophocleous C., Group analysis of variable
  coefficient diffusion--convection equations.~I. Enhanced group
  classification, \emph{Lobachevskii J.~Math.} \textbf{31} (2010), 100--122,
  arXiv:0710.2731.

\bibitem{kapi1978a}
Kapitanskii L.V., Group analysis of the Navier--Stokes and Euler
  equations in the presence of rotation symmetry and new exact solutions to
  these equations, \emph{Dokl. Akad. Nauk SSSR} \textbf{243} (1978), 901--904.

\bibitem{katk1965a}
Katkov V.L., Group classification of solutions of the Hopf equation,
  \emph{Zh. Prikl. Mekh. i Tekhn. Fiz.}  (1965), no.~6, 105--106.

\bibitem{king1991c}
Kingston J.G. and Sophocleous C., On point transformations of a generalised
  Burgers equation, \emph{Phys. Lett.~A} \textbf{155} (1991), 15--19.

\bibitem{king1998a}
Kingston J.G. and Sophocleous C., On form-preserving point transformations of
  partial differential equations, \emph{J.~Phys.~A: Math. Gen.} \textbf{31}
  (1998), 1597--1619.

\bibitem{kunz2008b}
Kunzinger M. and Popovych R.O., Singular reduction operators in two dimensions,
  \emph{J.~Phys.~A: Math. Theor.} \textbf{41} (2008), 505201, 24~pages,
  arXiv:0808.3577.

\bibitem{kunz2009a}
Kunzinger M. and Popovych R.O., Is a nonclassical symmetry a symmetry?, in
  \emph{Proceedings of 4th Workshop ``Group Analysis of Differential Equations
  and Integrable Systems'' (26--30 October 2008, Protaras, Cyprus)}, Protaras,
  2009, pp. 107--120, arXiv:0903.0821.

\bibitem{kuru2016a}
Kurujyibwami C., Basarab-Horwath P. and Popovych R.O., Algebraic method for
  group classification of (1+1)-dimensional linear Schr\"odinger equations
  (2016), 30~pages.

\bibitem{leib1974a}
Leibovich S. and Seebass A.R., \emph{Nonlinear waves. Based on a series of
  seminars held at Cornell University, Ithaca, {N.Y.}, 1969}, Cornell
  University Press, Ithaca, N.Y.--London, 1974.

\bibitem{ligh1956a}
Lighthill M.J., Viscosity effects in sound waves of finite amplitude, in
  \emph{Surveys in mechanics}, Cambridge University Press, Cambridge, 1956, pp.~250--351.

\bibitem{mans1999a}
Mansfield E.L., The nonclassical group analysis of the heat equation,
  \emph{J.~Math. Anal. Appl.} \textbf{231} (1999), 526--542.

\bibitem{muba1963b}
Mubarakzjanov G.M., On solvable Lie algebras, \emph{Izv. Vyssh. Uchebn.
  Zaved. Matematika} (1963), no.~1~(32), 114--123.

\bibitem{olve1993b}
Olver P.J., \emph{Applications of Lie groups to differential equations}, second edition, 
Graduate Texts in Mathematics, 107, Springer--Verlag, New York, 1993.

\bibitem{olve1996a}
Olver P.J. and Vorob'ev E.M., Nonclassical and conditional symmetries, in
  \emph{CRC handbook of Lie group analysis of differential equations}, vol.~3,
  edited by R.L. Anderson and et~al., CRC, Boca Raton, FL, 1996, pp. 291--328.

\bibitem{ovsy1982a}
Ovsyannikov L.V., \emph{Group analysis of differential equations}, Academic
  Press, New York--London, 1982.

\bibitem{pate1975a}
Patera J., Winternitz P. and Zassenhaus H., Continuous subgroups of the
  fundamental groups of physics. I. General method and the Poincar\'e
  group, \emph{J.~Math. Phys.} \textbf{16} (1975), 1597--1614.

\bibitem{poch2013d}
Pocheketa O.A., Normalized classes of generalized Burgers equations, in
  \emph{Proceedings of the Sixth International Workshop ``Group Analysis of
  Differential Equations and Integrable Systems'' (Protaras, Cyprus, June
  17--21, 2012)}, edited by O.O. Vaneeva, C.~Sophocleous, R.O. Popovych, P.G.L.
  Leach, V.M. Boyko and P.A. Damianou, University of Cyprus, Nicosia, 2013, pp.
  170--178, arXiv:1605.04077.

\bibitem{poch2015a}
Pocheketa O.A., Equivalence groupoid of generalized potential Burgers
  equations, \emph{J.~Phys.: Conf. Ser.} \textbf{621} (2015), 012011, 10~pages,
  arXiv:1605.04096.

\bibitem{poch2012a}
Pocheketa O.A. and Popovych R.O., Reduction operators and exact solutions of
  generalized Burgers equations, \emph{Phys. Lett.~A} \textbf{376} (2012),
  2847--2850, arXiv:1112.6394.

\bibitem{poch2013a}
Pocheketa O.A. and Popovych R.O., Reduction operators of Burgers equation,
  \emph{J.~Math. Anal. Appl.} \textbf{398} (2013), 270--277,
  arXiv:1208.0232.

\bibitem{poch2014a}
Pocheketa O.A., Popovych R.O. and Vaneeva O.O., Group classification and exact
  solutions of variable-coefficient generalized Burgers equations with linear
  damping, \emph{Appl. Math. Comput} \textbf{243} (2014), 232--244,
  arXiv:1308.4265.

\bibitem{popo1995c}
Popovich R.O., On Lie reduction of the Navier--Stokes equations,
  \emph{J.~Nonlinear Math. Phys} \textbf{2} (1995), 301--311.

\bibitem{popo2006b}
Popovych R.O., Classification of admissible transformations of differential
  equations, in \emph{Collection of Works of Institute of Mathematics}, vol.~3,
  Institute of Mathematics, Kyiv, 2006, pp. 239--254.

\bibitem{popo2008b}
Popovych R.O., Reduction operators of linear second-order parabolic equations,
  \emph{J.~Phys.~A: Math. Theor.} \textbf{41} (2008), 185202, 31~pages,
  arXiv:0712.2764.

\bibitem{popo2003a}
Popovych R.O., Boyko V.M., Nesterenko M.O. and Lutfullin M.W., Realizations of
  real low-dimensional Lie algebras, \emph{J.~Phys.~A: Math. Gen.}
  \textbf{36} (2003), 7337--7360, arXiv:math-ph/0301029v3.

\bibitem{popo2005a}
Popovych R.O. and Ivanova N.M., Potential equivalence transformations for
  nonlinear diffusion-convection equations, \emph{J.~Phys.~A: Math. Gen.}
  \textbf{38} (2005), 3145--3155, arXiv:math-ph/0407008.

\bibitem{popo2010a}
Popovych R.O., Kunzinger M. and Eshraghi H., Admissible transformations and
  normalized classes of nonlinear Schr{\"o}dinger equations, \emph{Acta Appl.
  Math.} \textbf{109} (2010), 315--359, arXiv:math-ph/0611061.

\bibitem{popo2008a}
Popovych R.O., Kunzinger M. and Ivanova N.M., Conservation laws and potential
  symmetries of linear parabolic equations, \emph{Acta Appl. Math.}
  \textbf{100} (2008), 113--185, arXiv:0706.0443.

\bibitem{popo2008d}
Popovych R.O. and Samoilenko A.M., Local conservation laws of second-order
  evolution equations, \emph{J.~Phys.~A: Math. Theor.} \textbf{41} (2008),
  362002, 11~pages, arXiv:0806.2765.

\bibitem{popo2007a}
Popovych R.O., Vaneeva O.O. and Ivanova N.M., Potential nonclassical symmetries
  and solutions of fast diffusion equation, \emph{Phys. Lett.~A} \textbf{362}
  (2007), 166--173, arXiv:math-ph/0506067.

\bibitem{pucc1992a}
Pucci E., Similarity reductions of partial differential equations,
  \emph{J.~Phys.~A: Math. Gen.} \textbf{25} (1992), 2631--2640.

\bibitem{qu1995b}
Qu C.Z., Allowed transformations and symmetry classes of variable coefficient
  Burgers equations, \emph{IMA J. Appl. Math.} \textbf{54} (1995), 203--225.

\bibitem{vane2009a}
Vaneeva O.O., Popovych R.O. and Sophocleous C., Enhanced group analysis and
  exact solutions of variable coefficient semilinear diffusion equations with a
  power source, \emph{Acta Appl. Math.} \textbf{106} (2009), 1--46,
  arXiv:0708.3457.

\bibitem{vane2014b}
Vaneeva O.O., Popovych R.O. and Sophocleous C., Equivalence transformations in
  the study of integrability, \emph{Phys. Scr.} \textbf{89} (2014), 038003,
  9~pages, arXiv:1308.5126.

\bibitem{vane2016a}
Vaneeva O. and Po\v{s}ta S., Equivalence groupoid of a class of variable
  coefficient Korteweg--de Vries equations, arXiv:1604.06880, 2016, 15~pages.

\bibitem{vino1984a}
Vinogradov A.M., Local symmetries and conservation laws, \emph{Acta Appl.
  Math.} \textbf{2} (1984), 21--78.

\bibitem{wafo2004d}
Wafo~Soh C., Symmetry reductions and new exact invariant solutions of the
  generalized Burgers equation arising in nonlinear acoustics,
  \emph{Internat. J.~Engng. Sci.} \textbf{42} (2004), 1169--1191.

\bibitem{whit1974a}
Whitham G., \emph{Linear and nonlinear waves}, Wiley-Interscience, New
  York--London--Sydney, 1974.

\bibitem{wood1971b}
Woodard H., \emph{Similarity solutions for partial differential equations
  generated by finite and infinitesimal groups}, Ph.D. thesis, University of Iowa, Iowa City, Iowa,
  1971.

\bibitem{wood1971a}
Woodard H. and Ames W., 
  \emph{Similarity solutions for partial differential equations generated by finite and infinitesimal groups}, 
  Tech. rep., University of Iowa, Iowa City, Iowa, 1971.

\bibitem{zhda1998a}
Zhdanov R.Z. and Lahno V.I., Conditional symmetry of a porous medium equation,
  \emph{Physica~D} \textbf{122} (1998), 178--186.

\bibitem{zhda1999a}
Zhdanov R.Z., Tsyfra I.M. and Popovych R.O., A precise definition of reduction
  of partial differential equations, \emph{J.~Math. Anal. Appl.} \textbf{238}
  (1999), 101--123, arXiv:math-ph/0207023.

\end{thebibliography}
\end{document}